\pdfoutput=1
\RequirePackage{ifpdf}
\ifpdf 
\documentclass[pdftex]{sigma}
\else
\documentclass{sigma}
\fi

\usepackage{mathrsfs, mathtools}
\usepackage{bm, stmaryrd}
\usepackage[mathcal]{euscript}
\usepackage[all]{xy}

\usepackage{tikz}
\usetikzlibrary{cd}
\usetikzlibrary{matrix,calc}
\usetikzlibrary{decorations.markings,shapes.geometric,shapes.misc}
\usetikzlibrary{decorations.pathreplacing,positioning}
\usetikzlibrary{arrows}

\tikzset{cross/.style={cross out, draw=black, minimum size=2*(#1-\pgflinewidth), inner sep=0pt, outer sep=0pt}, cross/.default={1pt}}

\tikzset{cross/.style={cross out, draw=black, minimum size=2*(#1-\pgflinewidth), inner sep=0pt, outer sep=0pt}, cross/.default={1pt}}

\newcommand{\tensor}[1]{{\mathfrak{#1}}}
\newcommand{\up}[1]{[#1]}
\newcommand{\uz}[1]{(#1)}

\DeclareMathOperator{\res}{res}

\DeclareMathOperator{\End}{End}

\DeclareMathOperator{\im}{im}

\DeclareMathOperator{\Ad}{Ad}
\DeclareMathOperator{\ad}{ad}

\newcommand{\longhookrightarrow}{\lhook\joinrel\relbar\joinrel\rightarrow}

\newcommand{\g}{\mathfrak{g}}
\newcommand{\C}{\mathcal{C}}
\newcommand{\CP}{\mathbb{P}^1}

\newcommand{\Oc}{\mathcal{O}}
\newcommand{\Ec}{\mathcal{E}}
\newcommand{\da}{\mathfrak{d}}

\newcommand{\tp}{\null^t}

\newcommand{\Pc}{\mathcal{P}}
\newcommand{\Pb}{\overline{\mathcal{P}}}

\newcommand{\diag}{\text{diag}}

\newcommand{\e}[2]{e^{#1}_{{\color{white}#1}#2}}

\def\d{\mathfrak{d}}
\def\fa{\mathfrak{f}}
\def\g{\mathfrak{g}}

\def\id{\textup{id}}

\def\k{\mathfrak{k}}

\def\D{\mathcal{D}}

\def\L{\mathcal{L}}
\def\T{\mathcal{T}}

\def\C{\mathcal{C}}
\def\J{\mathcal{J}}

\def\CS{\mathrm{CS}}

\def\jb{{\bm j}}
\def\jbt{\bm{\tilde{\bm \jmath}}}
\def\bp{{\bm p}}

\newcommand{\bsb}[1]{\bm{[} #1 \bm{]}}
\newcommand{\brb}[1]{\bm{(} #1 \bm{)}}

\def\bzeta{{\bm \zeta}}
\def\bz{{\bm z}}
\def\bZ{\bm{\mathcal{Z}}}
\def\gzeta{\g^{\brb{\bzeta}}}
\def\gz{\g^{\bsb{\bz}}}
\def\Gz{G^{\bsb{\bz}}}
\def\gzp{\g^{\bsb{\bz'}}}
\def\Gzp{G^{\bsb{\bz'}}}

\newcommand{\lau}[1]{(\kern-.2em( #1 )\kern-.2em)}

\newcommand{\ms}[1]{\mathsf{#1}}

\def\CC{\mathbb{C}}
\def\CP{\mathbb{C}P^1}

\def\RR{\mathbb{R}}

\def\ZZ{\mathbb{Z}}

\def\ii{{\rm i}}

\def\1{\tensor{1}}
\def\2{\tensor{2}}
\def\3{\tensor{3}}
\def\4{\tensor{4}}

\numberwithin{equation}{section}
\newtheorem{Theorem}{Theorem}[section]
\newtheorem{Corollary}[Theorem]{Corollary}
\newtheorem{Lemma}[Theorem]{Lemma}
\newtheorem{Proposition}[Theorem]{Proposition}
 { \theoremstyle{definition}

\newtheorem{Remark}[Theorem]{Remark} }

\begin{document}

\newcommand{\arXivNumber}{2011.13809}

\renewcommand{\PaperNumber}{058}

\FirstPageHeading

\ShortArticleName{Integrable $\mathcal{E}$-models from 4d Chern--Simons Theory}

\ArticleName{Integrable $\boldsymbol{\mathcal{E}}$-Models, 4d Chern--Simons Theory\\
and Affine Gaudin Models. I.~Lagrangian Aspects}

\Author{Sylvain LACROIX~$^{\rm ab}$ and Beno\^it VICEDO~$^{\rm c}$}
\AuthorNameForHeading{S.~Lacroix and B.~Vicedo}

\Address{$^{\rm a)}$~II. Institut f\"ur Theoretische Physik, Universit\"at Hamburg,\\
\hphantom{$^{\rm a)}$}~Luruper Chaussee 149, 22761 Hamburg, Germany}
\EmailD{\href{mailto:sylvain.lacroix@desy.de}{sylvain.lacroix@desy.de}}

\Address{$^{\rm b)}$~Zentrum f\"ur Mathematische Physik, Universit\"at Hamburg,\\
\hphantom{$^{\rm b)}$}~Bundesstrasse 55, 20146 Hamburg, Germany}

\Address{$^{\rm c)}$~Department of Mathematics, University of York, York YO10 5DD, UK}
\EmailD{\href{mailto:benoit.vicedo@gmail.com}{benoit.vicedo@gmail.com}}

\ArticleDates{Received December 07, 2020, in final form May 31, 2021; Published online June 10, 2021}

\Abstract{We construct the actions of a very broad family of 2d integrable $\sigma$-models. Our starting point is a universal 2d action obtained in [arXiv:2008.01829] using the framework of Costello and Yamazaki based on 4d Chern--Simons theory. This 2d action depends on a~pair of 2d fields~$h$ and~$\mathcal{L}$, with $\mathcal{L}$ depending rationally on an auxiliary complex parameter, which are tied together by a constraint. When the latter can be solved for $\mathcal{L}$ in terms of~$h$ this produces a 2d integrable field theory for the 2d field $h$ whose Lax connection is given by~$\mathcal{L}(h)$. We~construct a general class of solutions to this constraint and show that the resulting 2d integrable field theories can all naturally be described as $\mathcal{E}$-models.}

\Keywords{4d Chern--Simons theory; $\mathcal E$-models; affine Gaudin models; integrable $\sigma$-models}

\Classification{17B80; 37K05; 37K10}

\vspace{-4mm}

\section{Introduction}

The $\Ec$-model, introduced by Klim\v{c}\'{\i}k and \v{S}evera~\cite{Klimcik:1995ux, Klimcik:1996nq, Klimcik:1995dy}, makes manifest the duality between pairs of $\sigma$-models related by Poisson--Lie $T$-duality.
Let $D$ be an even dimensional real Lie group whose Lie algebra $\da$ is equipped with a non-degenerate symmetric invariant bilinear form $\langle\!\langle \cdot, \cdot \rangle\!\rangle_{\da}$, i.e., $(\da, \langle\!\langle \cdot, \cdot \rangle\!\rangle_{\da})$ is a quadratic Lie algebra. The $\Ec$-model describes the dynamics of a $D$-valued field $l \in C^\infty(\Sigma, D)$ on a $2$d worldsheet which we take here to be $\Sigma = \RR^2$.
The key ingredient entering the action of the $\Ec$-model, and which gives the model its name, is an invertible linear operator $\Ec\colon \da \to \da$ which is symmetric with respect to the bilinear form $\langle\!\langle \cdot, \cdot \rangle\!\rangle_{\da}$. One often also assumes that $\Ec$ is an involution, i.e., that $\Ec^2 = \id$, which is related to the relativistic invariance of the $\sigma$-models.

Given a maximal isotropic subalgebra $\k \subset \da$ with respect to $\langle\!\langle \cdot, \cdot \rangle\!\rangle_{\da}$ (throughout the paper will refer to maximal isotropic subalgebras as \emph{Lagrangian} subalgebras) with corresponding connected Lie subgroup $K \subset D$ one can associate with the $\Ec$-model on $D$ a $\sigma$-model on the left coset $K \backslash D$. In~particular, if $\k, \tilde{\k} \subset \da$ is a pair of complementary Lagrangian subalgebras with corresponding connected Lie subgroups $K, \tilde K \subset D$ then the associated $\sigma$-models on $K \backslash D$ and $\tilde K \backslash D$ are Poisson--Lie $T$-dual. The situation is summarised in the diagram
\begin{equation*}
\hspace{75pt}
\begin{tikzpicture}[scale=0.9]
 \node (a) at (0,0) [align=center, draw, rounded corners] {$\Ec$-model on $D$};
 \node (b) at (-1,-1.2) [left, align=center, draw, rounded corners]
 {
 $\sigma$-model on $K \backslash D$
 };
 \node (c) at (1,-1.2) [right, align=center, draw, rounded corners]
 {
 $\sigma$-model on $\tilde{K} \backslash D$
 };

 \draw[->, shorten >=6pt, shorten <=5pt] (a) to [bend right=10] (b);
 \draw[->, shorten >=6pt, shorten <=5pt] (a) to [bend left=10] (c);
 \draw[<->, dashed, shorten >=3pt, shorten <=3pt] (b) to [bend right=9] node[below]{\scriptsize Poisson--Lie $T$-duality} (c);
\end{tikzpicture}
\end{equation*}
in which each arrow represents a canonical transformation relating the phase spaces and Hamiltonians of the respective theories~\cite{Klimcik:1995ux, Klimcik:1995dy, Sfetsos:1996xj, Sfetsos:1997pi}. It~is in this sense that the $\sigma$-models on $K \backslash D$ and $\tilde K \backslash D$ are often referred to as $\Ec$-models themselves.

Even though the $\Ec$-model was devised as a means of understanding Poisson--Lie $T$-duality, it turns out that many of the known integrable $\sigma$-models can be described as $\Ec$-models~\cite{Hoare:2020mpv, Klimcik:2015gba, Klimcik:2017ken, Klimcik:2020fhs} or in terms of their close relatives called dressing cosets or degenerate $\Ec$-models~\cite{Klimcik:1996np}, as in~\cite{Klimcik:2019kkf}.\footnote{See also~\cite{Demulder:2017zhz, Demulder:2018lmj, Demulder:2019bha, Sfetsos:2015nya} for related works on integrable aspects of $\Ec$-models.} In fact, the prototypical class of integrable deformations, given by the Yang--Baxter $\sigma$-model, was originally conceived in~\cite{Klimcik:2002zj} as an example of a $\sigma$-model exhibiting Poisson--Lie symmetry which allows it to be $T$-dualised. It~was shown only some years later that it was integrable in~\cite{Klimcik:2008eq}. Another important class of integrable deformations, given by the $\lambda$-model and constructed in~\cite{Sfetsos:2013wia}, is related to the Poisson--Lie $T$-dual of the Yang--Baxter $\sigma$-model by analytic continuation~\cite{Hoare:2015gda, Klimcik:2015gba, Vicedo:2015pna}.

In light of the above observations, it is natural to ask under which conditions a given \mbox{$\Ec$-model} is also integrable or, conversely, under which conditions an integrable $\sigma$-model can be recast as an $\Ec$-model. A natural starting point is to recall that the equations of motion of the \mbox{$\Ec$-model} are equivalent to the flatness of a $\d$-valued current~$\J$. It~was observed in~\cite{Severa:2017kcs} that if one can find a~linear map $p_z\colon \da \to \g^\CC$, with $\g^\CC$ a complex Lie algebra, depending rationally on a complex para\-me\-ter $z$ and satisfying a certain algebraic property reviewed in Section~\ref{sec: inverse jz}, then it can be used to lift the on-shell flat connection~$\J$ to an on-shell flat meromorphic $\g^\CC$-valued connection~$p_z(\J)$ thus defining a Lax connection for the model. In~particular, this observation was used to rederive the integrability of the Yang--Baxter $\sigma$-model and the $\lambda$-model in the framework of~$\Ec$-models. However, the problem of constructing more general families of suitable maps $p_z$ so as to produce new examples of integrable $\sigma$-models from $\Ec$-models was left open in~\cite{Severa:2017kcs}. One upshot of the present work is a systematic construction of such maps~$p_z$.

The purpose of this paper is to construct a very general family of $2$d integrable field theories, or more precisely integrable $\sigma$-models, and show that they all naturally admit descriptions as $\Ec$-models. In~fact, the identification of a suitable symmetric invertible linear operator $\Ec \colon \da \to \da$ on~the relevant quadratic Lie algebra $(\da, \langle\!\langle \cdot, \cdot \rangle\!\rangle_{\da})$ is a key part of our construction of these integrable $\sigma$-models.

In order to construct this broad family of integrable $\Ec$-models, we shall put to full use two general frameworks for describing $2$d classical integrable field theories which have emerged over the last couple of years. The first, initially proposed in~\cite{Vicedo:2017cge} and then further developed in~\cite{Delduc:2019bcl, Lacroix:2019xeh}, is based on classical dihedral affine Gaudin models. The second, proposed by Costello and Yamazaki~\cite{Costello:2019tri}, is based instead on $4$d Chern--Simons theory which was originally developed for describing integrable spin chains in~\cite{Costello:2013zra, Costello:2013sla, Costello:2017dso, Costello:2018gyb, Witten:2016spx}. See also~\cite{Ashwinkumar:2020krt, Bittleston:2020hfv, Bykov1, Bykov2, Bykov3, Costello:2020lpi, Delduc:2019whp, Fukushima:2020kta, Fukushima:2020dcp, Penna:2020uky, Schmidtt:2019otc, Tian:2020ryu, Tian:2020meg} for further recent developments in the field theory setting. Although the frameworks of~\cite{Vicedo:2017cge} and~\cite{Costello:2019tri} are very different in flavour, they are in fact intimately related~\cite{Vicedo:2019dej}: they are based on the Hamiltonian and Lagrangian formalisms respectively. In~this paper we will construct the actions of integrable $\Ec$-models starting from the action of $4$d Chern--Simons theory\footnote{Note that a relation between 4d Chern--Simons theory and $\Ec$-models was already described in~\cite{Delduc:2019bcl}. However, this relation is very different in nature from the one described in the present article since it applies to all $\Ec$-models, regardless of whether they are integrable or not.} but using also input from the theory of affine Gaudin models. The Hamiltonian analysis of these actions and their relation to classical dihedral affine Gaudin models will be considered elsewhere~\cite{InPrep}.

The Lagrangian of $4$d Chern--Simons theory is proportional to $\omega \wedge \CS(A)$, where $\CS(A)$ is the Chern--Simons $3$-form for a $\g^\CC$-valued $1$-form $A$ on $\Sigma \times \CP$ and $\omega$ is a meromorphic $1$-form on $\CP$. In~the setup of~\cite{Costello:2019tri}, $2$d integrable field theories are described by introducing surface defects along $\Sigma$ at the poles of $\omega$ on $\CP$.
When $\omega$ has at most double poles, this approach was used in~\cite{Delduc:2019whp} to construct a unifying $2$d action for many known $2$d integrable $\sigma$-models. The generalisation of this $2$d action for an arbitrary meromorphic $1$-form $\omega$ was obtained in~\cite{Benini:2020skc}, where the passage from $4$d Chern--Simons theory to $2$d integrable $\sigma$-models was streamlined and put on a firm mathematical footing using methods from homotopical algebra.

More precisely, the $2$d actions derived in~\cite{Delduc:2019whp} and~\cite{Benini:2020skc} are both actions for a certain group valued field $h$ living on $\Sigma$ but which also depend on a $1$-form $\L$ on $\Sigma$ that depends meromorphically on $\CP$. In~order to obtain a $2$d action for the field $h$ alone one still needs to solve a certain boundary condition, or constraint, relating $\L$ to $h$ and depending on a choice of Lagrangian subalgebra $\k$ of a certain quadratic Lie algebra $\da$ determined by $\omega$. Given any solution $\L = \L(h)$ of this constraint, one obtains a $2$d integrable field theory for the $2$d field $h$. The connection $\L(h)$ then plays the role of the Lax connection of this $2$d integrable field theory. The main purpose of this paper is to solve the boundary condition relating $\L$ and $h$ in the general setting of~\cite{Benini:2020skc}. In~doing so, we are naturally led to introduce a linear operator $\Ec$ on the Lie algebra $\da$, with all the properties required to define an $\Ec$-model. In~fact, this linear operator has a very natural origin from the point of view of affine Gaudin models and our construction of this operator is motivated by~\cite{Delduc:2019bcl}. We~find that, upon solving the constraint, the $2$d action of~\cite{Benini:2020skc} coincides with that of the $\sigma$-model on the coset $K \backslash D$ associated with an $\Ec$-model.

At this point it is useful to highlight the various levels of generality of our setup.

Firstly, all the integrable $\sigma$-models which have so far been constructed using either the framework of $4$d Chern--Simons theory or that of dihedral affine Gaudin models, with the exception of an example considered recently in~\cite{Bittleston:2020hfv}, start from a choice of meromorphic $1$-form $\omega$ which has at most double poles. By exploiting the results of~\cite{Benini:2020skc}, in the present work we build integrable $\sigma$-models starting from a completely general $1$-form $\omega$ (although for technical reasons to be discussed in the main text, we require $\omega$ to have one double pole at infinity). To illustrate the effect of higher order poles in $\omega$ we give an explicit example in which $\omega$ has a fourth order pole.

Secondly, the constraint on $\L$ and $h$, which depends on a Lagrangian subalgebra $\k$ of $\da$, has so far only been solved for a limited number of concrete examples. In~the present article, we solve this constraint for arbitrary Lagrangian subalgebras $\k$. Even in the case when $\omega$ has at most double poles, this generality on $\k$ allows us, for instance, to obtain the non-abelian $T$-dual of the principal chiral model from $4$d Chern--Simons theory, as anticipated in~\cite{Delduc:2019whp}.

To end this introduction, we will illustrate in a simple case the general family of integrable $\Ec$-models constructed in the main text.

Let $\g$ be a real Lie algebra equipped with a non-degenerate symmetric invariant bilinear form $\langle \cdot, \cdot \rangle \colon \g \times \g \to \RR$. Consider the meromorphic $1$-form
\begin{gather} \label{omega intro}
\omega = - \ell^\infty_1 \frac{\prod_{i=1}^N (z - \zeta_i)}{\prod_{i=1}^N (z - z_i)} {\rm d}z,
\end{gather}
with distinct real poles and zeroes $z_i, \zeta_i \in \RR$ for $i = 1, \dots, N$. Note, in particular, that $\omega$ has a~double pole at infinity. In~the main text we shall take $\omega$ to have an arbitrary number of finite poles of arbitrary order, but also restrict to the case of a double pole at infinity.

First, we define the quadratic Lie algebra $(\da, \langle\!\langle \cdot, \cdot \rangle\!\rangle_{\da})$. As we shall see, since the $1$-form~$\omega$ in~\eqref{omega intro} has only simple poles along the real axis, the associated Lie algebra $\da$ is given in~this case by the direct sum of Lie algebras $\da = \g^{\oplus N}$, which is the Lie algebra of $D = G^{\times N}$. More\-over,~$\da$ comes equipped with a natural non-degenerate symmetric invariant bilinear form defined in~terms of $\omega$ by
\begin{gather} \label{bilinear intro}
\langle\!\langle \cdot, \cdot \rangle\!\rangle_{\da} \colon\quad \da \times \da \longrightarrow \RR, \qquad
\big\langle{\mkern-4mu}\big\langle (\ms u^i)_{i = 1}^N, (\ms v^j)_{j=1}^N \big\rangle{\mkern-4mu}\big\rangle_{\da} = \sum_{i=1}^N (\res_{z_i} \omega) \langle \ms u^i, \ms v^i \rangle
\end{gather}
for any pair of elements $(\ms u^i)_{i = 1}^N, (\ms v^j)_{j=1}^N \in \da$. More generally, if $\omega$ has higher order poles then the corresponding copies of $\g$ in $\da$ are replaced by truncated loop algebras over $\g$ (or its complexification if the pole is not real) and the bilinear form~\eqref{bilinear intro} is replaced by one involving all coefficients in the partial fraction decomposition of $\omega$.

Next, we define the linear map $\Ec \colon \da \to \da$. We~associate to each zero $\zeta_i$ of $\omega$ an $\epsilon_i \in \RR \setminus \{ 0 \}$. We~then define
\begin{gather} \label{E intro}
\Ec (\ms u^i)_{i = 1}^N = \bigg( \sum_{j,k=1}^N \frac{\prod_{r \neq j} (\zeta_r - z_k) \prod_{r \neq i} (z_r - \zeta_j)}{\prod_{r \neq k} (z_r - z_k) \prod_{r \neq j} (\zeta_r - \zeta_j)} \epsilon_j \ms u^k \bigg)_{i = 1}^N,
\end{gather}
for every $(\ms u^i)_{i = 1}^N \in \da$. This operator is invertible because $\epsilon_i \neq 0$ and it is symmetric with respect to~\eqref{bilinear intro}. Moreover, if $\g$ is compact and the sign of each $\epsilon_i$ is chosen to coincide with the sign of~$- \varphi'(\zeta_i)$, where $\varphi(z)$ is the twist function defined by $\omega = \varphi(z) {\rm d}z$, then $\Ec$ is positive with respect to~\eqref{bilinear intro}. This ensures that the Hamiltonian is positive. Furthermore, if $\epsilon_i^2 =1$ for $i = 1, \dots, N$ then $\Ec^2 = \id$ which ensures that the model is relativistic invariant.

Finally, let $\k$ be a Lagrangian subalgebra of $\da$ and $K$ the associated connected Lie subgroup of~$D$. (We also require that $\k$ satisfies a technical condition together with the operator $\Ec$, which can be easily ensured for instance if $\Ec$ is positive; we refer to the main text for details.) We~can then construct the $\sigma$-model for a field $l \in C^\infty(\Sigma, D)$ with a gauge symmetry by $K$, from the $\Ec$-model associated with the above data.
By our construction, this $\sigma$-model on $K \backslash D$ is integrable and its Lax connection, depending rationally on the spectral parameter $z$, is given explicitly in~the present case by
\begin{gather} \label{Lax intro}
\L(z) = \sum_{i,j=1}^N \frac{\prod_{r \neq i} (\zeta_r - z_j) \prod_r (z_r - \zeta_i)}{\prod_{r \neq j} (z_r - z_j) \prod_{r \neq i} (\zeta_r - \zeta_i)} \frac{\J^j}{z-\zeta_i}
\end{gather}
for a certain $\g$-valued 1-form $\J^j$ on $\Sigma$ for each $j = 1, \dots, N$ depending on the $D$-valued field~$l$. We~refer to Section~\ref{sec: intro example} for details. The equations of motion of the $\Ec$-model, given in this case by the flatness of $\L(z_i) = \mathcal J^i$ for each $i = 1, \dots, N$, is equivalent to the flatness of the above Lax connection for all $z$.

The above example contains the Yang--Baxter $\sigma$-model, the $\lambda$-model and more generally the family of integrable $\sigma$-models constructed in~\cite{Bassi:2019aaf} which couple together $N_1 \in \ZZ_{\geq 0}$ copies of the Yang--Baxter $\sigma$-model and $N_2 \in \ZZ_{\geq 0}$ copies of the $\lambda$-model, where $2N_1 + 2N_2 = N$.

The plan of the paper is as follows. We~begin in Section~\ref{sec: E-models} by reviewing the definition of the $\Ec$-model and the construction of the associated $\sigma$-model on $K \backslash D$. In~particular, we discuss the properties of $\Ec$ and various projectors that will be relevant for our purposes. In~Section~\ref{sec: 2d from 4d} we review the general $2$d action constructed in~\cite{Benini:2020skc} and in Section~\ref{sec: removing inf} we derive from it the $2$d action that serves as the starting point for our analysis. In~Section~\ref{sec: E-model from 4d} we construct a solution to the constraint from~\cite{Benini:2020skc} and relate the resulting $2$d integrable field theory to an $\Ec$-model. Finally, we give a number of simple examples in Section~\ref{sec: examples} to illustrate the general construction before concluding in Section~\ref{sec: outlook}.

\subsubsection*{List of notations}
For the reader's convenience we gather here a list of notations used throughout the paper.
\begin{itemize}\itemsep=0pt
 \item $\g$, $G$ \--- real finite-dimensional Lie algebra and corresponding Lie group,
 \item[] $\langle \cdot, \cdot \rangle$ \--- non-degenerate invariant symmetric bilinear form on $\g$,
 \item $\bz$ \--- set of independent poles of $\omega$,
 \item[] $\bsb{\bz}$ \--- independent poles of $\omega$ counting multiplicities,
 \item[] $\gz$, $\Gz$ \--- associated defect Lie algebra and Lie group,
 \item[] $\langle\!\langle \cdot, \cdot \rangle\!\rangle_{\gz}$ \--- non-degenerate invariant symmetric bilinear form on $\gz$,
 \item[] $\mathfrak{f}$, $F$ \--- Lagrangian subalgebra of $\gz$ and corresponding Lie subgroup of $\Gz$,
 \item $\bz'$ \--- set of independent finite poles of $\omega$,
 \item[] $\bsb{\bz'}$ \--- independent finite poles of $\omega$ counting multiplicities,
 \item[] $\d$, $D$ \--- associated defect Lie algebra and Lie group,
 \item[] $\langle\!\langle \cdot, \cdot \rangle\!\rangle_{\d}$ \--- non-degenerate invariant symmetric bilinear form on $\d$,
 \item[] $\k$, $K$ \--- Lagrangian subalgebra of $\d$ and corresponding Lie subgroup of $D$,
 \item $\bzeta$ \--- set of independent zeroes of $\omega$,
 \item[] $\brb{\bzeta}$ \--- independent zeroes of $\omega$ counting multiplicities,
 \item[] $\gzeta$ \--- associated vector space,
 \item[] $\langle\!\langle \cdot, \cdot \rangle\!\rangle_{\gzeta}$ \--- non-degenerate symmetric bilinear form on $\gzeta$.
\end{itemize}

\section[Background on the E-model] {Background on the $\boldsymbol{\Ec}$-model}\label{sec: E-models}

Throughout this section we let $\da$ denote an arbitrary real even dimensional Lie algebra equipped with a non-degenerate ad-invariant symmetric bilinear form
\begin{gather*}
\langle\!\langle \cdot, \cdot \rangle\!\rangle_\da \colon\ \da \times \da \longrightarrow \RR.
\end{gather*}
We let $D$ be a real Lie group with Lie algebra $\da$. In~Section~\ref{sec: removing inf} below we shall introduce a~specific real even dimensional Lie algebra $\da$ and corresponding Lie group $D$, to which the results of the present section will apply verbatim.

For any linear operator $\Oc \in \End \da$ we denote by $\tp\Oc$ its transpose with respect to $\langle\!\langle \cdot, \cdot \rangle\!\rangle_\da$, namely such that
$\langle\!\langle \ms U, \Oc \ms V \rangle\!\rangle_\da = \langle\!\langle \tp\Oc \ms U, \ms V \rangle\!\rangle_\da$,
for any $\ms U, \ms V \in \da$. We~also let $\Ad_l$ denote the adjoint action $\Ad_l \ms U \coloneqq l \ms U l^{-1}$ of $l \in D$ on $\da$.

\subsection[The operators E and Pl]
{The operators $\boldsymbol{\Ec}$ and $\boldsymbol{\Pc_l}$}\label{sec: operators E Pl}

We fix an invertible operator $\Ec \in \End \da$, symmetric with respect to $\langle\!\langle \cdot, \cdot \rangle\!\rangle_\da$, i.e., $\tp\Ec=\Ec$, and let
\begin{gather}\label{Eq:FE}
\langle\!\langle \ms U, \ms V \rangle\!\rangle_{\da, \Ec} \coloneqq \langle\!\langle \ms U, \Ec^{-1} \ms V \rangle\!\rangle_\da
\end{gather}
for every $\ms U, \ms V \in \da$. This defines another non-degenerate bilinear form on $\da$.

We suppose that $\da$ admits a Lagrangian subalgebra $\k$ with respect to $\langle\!\langle \cdot, \cdot \rangle\!\rangle_\da$, which thus satisfies $\dim\k = \frac{1}{2}\dim\da$.

We shall need to make another important assumption on the operator $\Ec$ and the Lagrangian subalgebra $\k \subset \da$. Namely, we will suppose that for any $l\in D$ we have
\begin{gather}\label{Eq:Assump}
\Ad_l^{-1} \k \, \cap \, \Ec \Ad_l^{-1} \k = \lbrace 0 \rbrace.
\end{gather}
Before exploring the consequences of this assumption, we give a sufficient condition on the operator $\Ec$ for the condition~\eqref{Eq:Assump} to hold.

\begin{Remark}
As we shall see later in Section~\ref{sec: EM tensor E-model}, in particular Remark~\ref{rem: rel inv E-model}, this sufficient condition is actually quite natural in the study of $\sigma$-models, as it will be related to the property of the Hamiltonian being bounded below in these models.
\end{Remark}

\begin{Lemma}\label{Lem:Pos}
If $\Ec$ is such that $\langle\!\langle \cdot, \cdot \rangle\!\rangle_{\da, \Ec}$ is positive-definite, any Lagrangian subalgebra $\k \subset \da$ satisfies~\eqref{Eq:Assump}.
\begin{proof}
Let us fix $l \in D$. By applying successively the definition~\eqref{Eq:FE} of $\langle\!\langle \cdot, \cdot \rangle\!\rangle_{\da, \Ec}$, the ad-in\-va\-riance of $\langle\!\langle \cdot, \cdot \rangle\!\rangle_\da$ and the isotropy of $\k$, we have
\begin{gather*}
\langle\!\langle \Ad_l^{-1} \k, \Ec\Ad_l^{-1} \k \rangle\!\rangle_{\da, \Ec} = \langle\!\langle \Ad_l^{-1} \k, \Ad_l^{-1} \k \rangle\!\rangle_\da = \langle\!\langle \k, \k \rangle\!\rangle_\da = 0.
\end{gather*}
Hence the subspaces $\Ad_l^{-1}\k$ and $\Ec\Ad_l^{-1}\k$ are orthogonal with respect to $\langle\!\langle \cdot, \cdot \rangle\!\rangle_{\da, \Ec}$. If the latter is positive-definite, these subspaces then have trivial intersection.
\end{proof}
\end{Lemma}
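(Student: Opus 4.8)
The plan is to show that the two subspaces $\Ad_l^{-1}\k$ and $\Ec\Ad_l^{-1}\k$ are orthogonal with respect to the bilinear form $\langle\!\langle \cdot, \cdot \rangle\!\rangle_{\da, \Ec}$, since orthogonality with respect to a positive-definite form immediately forces trivial intersection. This reduces the problem to a short computation. Fix $l \in D$ and take arbitrary $\ms U, \ms V \in \k$, so that $\Ad_l^{-1}\ms U \in \Ad_l^{-1}\k$ and $\Ec\Ad_l^{-1}\ms V \in \Ec\Ad_l^{-1}\k$ are typical elements of the two subspaces.

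First I would unwind the definition~\eqref{Eq:FE} of the deformed form applied to these elements:
\begin{gather*}
\langle\!\langle \Ad_l^{-1}\ms U, \Ec\Ad_l^{-1}\ms V \rangle\!\rangle_{\da, \Ec} = \langle\!\langle \Ad_l^{-1}\ms U, \Ec^{-1}\Ec\Ad_l^{-1}\ms V \rangle\!\rangle_\da = \langle\!\langle \Ad_l^{-1}\ms U, \Ad_l^{-1}\ms V \rangle\!\rangle_\da,
\end{gather*}
where the factor of $\Ec^{-1}$ cancels against $\Ec$. Next I would invoke the ad-invariance of the undeformed form $\langle\!\langle \cdot, \cdot \rangle\!\rangle_\da$, which implies invariance under the group action $\Ad_l$, to strip off the $\Ad_l^{-1}$ on both arguments, giving $\langle\!\langle \ms U, \ms V \rangle\!\rangle_\da$. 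Finally, since $\ms U, \ms V \in \k$ and $\k$ is isotropic (Lagrangian, hence in particular isotropic) with respect to $\langle\!\langle \cdot, \cdot \rangle\!\rangle_\da$, this pairing vanishes. Chaining these three steps shows $\langle\!\langle \Ad_l^{-1}\k, \Ec\Ad_l^{-1}\k \rangle\!\rangle_{\da, \Ec} = 0$.

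Having established orthogonality, I would conclude as follows. Suppose $\ms W \in \Ad_l^{-1}\k \cap \Ec\Ad_l^{-1}\k$. Then $\ms W$ lies in both subspaces, so by the orthogonality just proved, $\langle\!\langle \ms W, \ms W \rangle\!\rangle_{\da, \Ec} = 0$. Since $\langle\!\langle \cdot, \cdot \rangle\!\rangle_{\da, \Ec}$ is assumed positive-definite, this forces $\ms W = 0$, and hence the intersection is trivial, which is precisely~\eqref{Eq:Assump}.

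I do not anticipate a genuine obstacle here; the statement is essentially a linear-algebra lemma dressed up with the group action. The only point requiring mild care is checking that the cancellation of $\Ec^{-1}\Ec$ in the first step is legitimate and that ad-invariance of the form does indeed upgrade to $\Ad_l$-invariance for all $l \in D$ (which holds since $D$ is connected, or follows directly from the invariance identity for the bilinear form). The essential mechanism is that the deformation by $\Ec^{-1}$ is exactly tailored so that the $\Ec$ defining the second subspace is absorbed, reducing everything to the isotropy of $\k$.
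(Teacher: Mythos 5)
Your proof is correct and follows exactly the same route as the paper's: cancel $\Ec^{-1}\Ec$ via the definition of $\langle\!\langle \cdot, \cdot \rangle\!\rangle_{\da, \Ec}$, use $\Ad$-invariance to reduce to $\langle\!\langle \k, \k \rangle\!\rangle_\da = 0$ by isotropy, and conclude that orthogonal subspaces with respect to a positive-definite form intersect trivially. The only difference is that you spell out the final step with an explicit element of the intersection, which the paper leaves implicit.
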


Since the operators $\Ec$ and $\Ad_l^{-1}$ are both invertible, we have
\begin{gather*}
\dim \Ad_l^{-1} \k = \dim \Ec \Ad_l^{-1} \k = \dim \k = \frac{1}{2} \dim\da.
\end{gather*}
By the assumption~\eqref{Eq:Assump}, we thus have the vector space direct sum decomposition (here we explicitly use the assumption that $\k \subset \d$ is Lagrangian)
\begin{gather}\label{Eq:Direct}
\da = \Ad_l^{-1} \k \, \dotplus \, \Ec \Ad_l^{-1} \k.
\end{gather}
As observed in the proof of Lemma~\ref{Lem:Pos}, this direct sum is orthogonal with respect to $\langle\!\langle \cdot, \cdot \rangle\!\rangle_{\da, \Ec}$. We~define the projector $\Pc_l$ relative to~\eqref{Eq:Direct} with kernel and image
\begin{gather}\label{Eq:KerIm}
\ker \Pc_l = \Ad_l^{-1} \k \qquad \text{and} \qquad \im \Pc_l = \Ec \Ad_l^{-1} \k.
\end{gather}
It will also be convenient to introduce the operator
\begin{gather}\label{Eq:Pb}
\Pb_l \coloneqq \id - \tp\Pc_l,
\end{gather}
which is easily seen to define another projector with kernel and image
\begin{gather}\label{Eq:KerImbarPb}
\ker \Pb_l = \Ad_l^{-1} \k \qquad \text{and} \qquad \im \Pb_l = \Ec^{-1} \Ad_l^{-1} \k.
\end{gather}
To see the first equation, observe that $\ker\Pb_l=\ker(\id-\tp\Pc_l)=\im \tp\Pc_l$ which is equal to the subspace orthogonal to $\ker \Pc_l = \Ad_k^{-1} \k$ with respect to $\langle\!\langle \cdot, \cdot \rangle\!\rangle_\da$ and therefore to $\Ad_l^{-1}\k$ itself since it is Lagrangian. Similarly, one finds that $\im \Pb_l$ is the subspace orthogonal to $\im \Pc_l = \Ec \Ad_l^{-1} \k$ which one checks is given by $\Ec^{-1} \Ad_l^{-1} \k$.

We shall need the following technical properties of the projectors $\Pc_l$ and $\Pb_l$, whose proof we give in Appendix~\ref{sec: PropE}.

\begin{Proposition}\label{PropE}
The projectors $\Pc_l$ and $\Pb_l$ have the following properties:
\begin{itemize}\itemsep=0pt
\item[$(i)$] $\tp \Pc_l = \Ec^{-1} \Pc_l \Ec$,
\item[$(ii)$] $\Pc_l \Ec + \Ec \Pb_l = \Ec$ and $\Ec^{-1} \Pc_l + \Pb_l \Ec^{-1} = \Ec^{-1}$,
\item[$(iii)$] $\tp \Pc_l \Pb_l = \tp \Pb_l \Pc_l = 0$,
\item[$(iv)$] $\Pc_l - \Pb_l=\tp \Pc_l \Pc_l = -\tp \Pb_l \Pb_l$,
\item[$(v)$] $\Pc_l\tp\Pc_l=\Pb_l \tp\Pb_l = 0$,
\item[$(vi)$] $\Pb_l = \Pc_l$ if $\Ec^2 = \id$.
\end{itemize}
\end{Proposition}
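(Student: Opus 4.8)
I would reduce everything to the kernel--image descriptions of the relevant projectors, together with the single algebraic identity $\tp\Pc_l = \Ec^{-1}\Pc_l\Ec$ of part $(i)$. Writing $\k_l \coloneqq \Ad_l^{-1}\k$ for brevity, I first record the three facts used repeatedly below: since $\Pc_l$ is a projector so is its transpose $\tp\Pc_l$, and the defining relation of the transpose gives $\ker\tp\Pc_l = (\im\Pc_l)^\perp$ and $\im\tp\Pc_l = (\ker\Pc_l)^\perp$, where $\perp$ denotes the orthogonal complement for $\langle\!\langle \cdot, \cdot \rangle\!\rangle_\da$; because $\k_l$ is Lagrangian one has $\k_l^\perp = \k_l$, and because $\Ec$ is symmetric one has $(\Ec\k_l)^\perp = \Ec^{-1}\k_l$; finally $\tp\Pb_l = \id - \Pc_l$. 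Combining these with \eqref{Eq:KerIm} identifies $\tp\Pc_l$ as the projector with kernel $\Ec^{-1}\k_l$ and image $\k_l$.

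For $(i)$ I would note that $\Ec^{-1}\Pc_l\Ec$ is a projector, being conjugate to $\Pc_l$, with kernel $\Ec^{-1}\ker\Pc_l = \Ec^{-1}\k_l$ and image $\Ec^{-1}\im\Pc_l = \k_l$; since a projector is determined by its kernel and image, it coincides with $\tp\Pc_l$. Part $(ii)$ then follows formally: rewriting $(i)$ as $\Pc_l\Ec = \Ec\tp\Pc_l$ and using $\Pb_l = \id - \tp\Pc_l$ gives $\Pc_l\Ec + \Ec\Pb_l = \Pc_l\Ec + \Ec - \Ec\tp\Pc_l = \Ec$, and the second identity comes identically from the equivalent form $\Ec^{-1}\Pc_l = \tp\Pc_l\Ec^{-1}$.

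The identities $(iii)$ and $(v)$ are immediate from this bookkeeping: $\tp\Pc_l\Pb_l = \tp\Pc_l(\id - \tp\Pc_l) = 0$ and $\tp\Pb_l\Pc_l = (\id - \Pc_l)\Pc_l = 0$ by idempotency, while $\Pc_l\tp\Pc_l = 0$ because $\im\tp\Pc_l = \k_l = \ker\Pc_l$ and $\Pb_l\tp\Pb_l = 0$ because $\im\tp\Pb_l = \ker\Pc_l = \k_l = \ker\Pb_l$. For $(iv)$, the first equality is equivalent to $\Pb_l(\id - \Pc_l) = 0$, which holds since $\im(\id - \Pc_l) = \ker\Pc_l = \k_l = \ker\Pb_l$; expanding this product yields $\tp\Pc_l\Pc_l = \Pc_l + \tp\Pc_l - \id = \Pc_l - \Pb_l$. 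The second equality of $(iv)$ follows by expanding $\tp\Pb_l\Pb_l = (\id - \Pc_l)(\id - \tp\Pc_l)$ and dropping the term $\Pc_l\tp\Pc_l$, which vanishes by $(v)$. Finally, $(vi)$ is read off once more from \eqref{Eq:KerImbarPb}: if $\Ec^2 = \id$ then $\Ec^{-1} = \Ec$, so $\Pb_l$ has image $\Ec^{-1}\k_l = \Ec\k_l = \im\Pc_l$ and kernel $\k_l = \ker\Pc_l$, forcing $\Pb_l = \Pc_l$.

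The only genuinely delicate step is the opening bookkeeping, namely pinning down $\ker\tp\Pc_l = \Ec^{-1}\k_l$ and $\im\tp\Pc_l = \k_l$, since this is exactly where the Lagrangian condition $\k_l^\perp = \k_l$ and the symmetry $\tp\Ec = \Ec$ are used; once $(i)$ is established the remaining parts are routine, with the one caveat that $(iv)$ should be proved after $(v)$ because its second equality relies on $\Pc_l\tp\Pc_l = 0$.
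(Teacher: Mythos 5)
Your proof is correct, but it follows a genuinely different route from the paper's for most parts. The paper's Appendix~\ref{sec: PropE} works almost entirely with bilinear-form manipulations: part $(i)$ is obtained from the symmetry of $\Pc_l$ with respect to the auxiliary form $\langle\!\langle \cdot, \cdot \rangle\!\rangle_{\da, \Ec}$ of~\eqref{Eq:FE}; part $(iv)$ is proved by expanding $\langle\!\langle (\Pc_l - \Pb_l)\ms U, \ms V\rangle\!\rangle_\da$ and killing the cross term via the isotropy of $\Ad_l^{-1}\k$; part $(v)$ is then deduced algebraically \emph{from} $(iv)$; and part $(vi)$ uses the isotropy of $\Ec\Ad_l^{-1}\k$ when $\Ec^2 = \id$. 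You instead reduce everything to kernel--image bookkeeping: you identify $\tp\Pc_l$ as the projector with $\ker\tp\Pc_l = \Ec^{-1}\Ad_l^{-1}\k$ and $\im\tp\Pc_l = \Ad_l^{-1}\k$ (which is essentially the computation the paper performs in Section~\ref{sec: operators E Pl} to establish~\eqref{Eq:KerImbarPb}), and then match kernels and images of projectors throughout. This makes $(i)$, $(v)$ and $(vi)$ one-liners and inverts the logical dependency between $(iv)$ and $(v)$ --- you correctly flag that your $(iv)$ needs $(v)$ first, whereas the paper derives $(v)$ from $(iv)$. Both arguments use exactly the same inputs (the Lagrangian condition $\k_l^\perp = \k_l$, the symmetry $\tp\Ec = \Ec$, and non-degeneracy of $\langle\!\langle\cdot,\cdot\rangle\!\rangle_\da$, the latter being needed for your dimension count $\im\tp\Pc_l = (\ker\Pc_l)^\perp$); yours is arguably more uniform and transparent, while the paper's pairing computations generalize more directly to situations where one only has the bilinear identities rather than clean subspace descriptions. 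One very minor presentational point: your claim $(\Ec\k_l)^\perp = \Ec^{-1}\k_l$ uses the Lagrangian property $\k_l^\perp = \k_l$ in addition to the symmetry of $\Ec$, so it would be cleaner to state it as $(\Ec\k_l)^\perp = \Ec^{-1}(\k_l^\perp) = \Ec^{-1}\k_l$.
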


\subsection[The E-model action] {The $\boldsymbol{\Ec}$-model action}\label{sec: Emodel action}

Define $\Sigma\coloneqq\mathbb{R}^2$ on which we fix coordinates $(\tau,\sigma)$ for convenience. The $\Ec$-model describes the dynamics of a $D$-valued field $\ell \in C^\infty(\Sigma, D)$ on $\Sigma$ with the first order action~\cite{Klimcik:1995ux, Klimcik:1996nq, Klimcik:1995dy}
\begin{gather} \label{E-model action 0}
S_{\Ec}(\ell) = \frac{1}{2} \int_\Sigma \bigl( \langle\!\langle \ell^{-1} \partial_\tau \ell, \ell^{-1}\partial_\sigma \ell \rangle\!\rangle_\da - \langle\!\langle \ell^{-1} \partial_\sigma \ell, \Ec\, \ell^{-1}\partial_\sigma \ell \rangle\!\rangle_\da \bigr) {\rm d}\sigma \wedge {\rm d}\tau - \frac{1}{2} I^{\rm WZ}_{\da}[\ell].
\end{gather}
Here we introduce the standard WZ-term for $\ell$ as\footnote{Here, we follow the conventions of~\cite{Delduc:2019bcl,Delduc:2019whp} for the definition of WZ-terms.}
\begin{gather} \label{WZ term}
I^{\rm WZ}_{\da}[\ell] \coloneqq -\frac{1}{6} \int_{\Sigma \times I} \big\langle{\mkern-4mu}\big\langle \widehat{\ell}^{-1} {\rm d} \widehat{\ell}, \big[\widehat{\ell}^{-1} {\rm d} \widehat{\ell}, \widehat{\ell}^{-1} {\rm d} \widehat{\ell}\big] \big\rangle{\mkern-4mu}\big\rangle_{\da},
\end{gather}
where $I \coloneqq [0, 1]$ and $\widehat{\ell} \in C^\infty(\Sigma \times I, D)$ is any smooth extension of $\ell$ to $\Sigma \times I$ with the property that $\widehat{\ell} = \ell$ near $\Sigma \times \{ 0 \} \subset \Sigma \times I$ and $\widehat{\ell} = \id$ near $\Sigma \times \{ 1 \} \subset \Sigma \times I$. The WZ-term $I^{\rm WZ}_{\da}[\ell]$ is independent of the choice of extension $\widehat{\ell}$; see, e.g.,~\cite{Benini:2020skc}.

Let $K$ be the connected Lie subgroup of $D$ corresponding to the Lie subalgebra $\k \subset \da$ from Section~\ref{sec: operators E Pl}. In~this section we recall the derivation of the action for the $\sigma$-model on the left coset $K \backslash D$ starting from the $\Ec$-model action~\eqref{E-model action 0}.

We begin by introducing a new $D$-valued field $l \in C^\infty(\Sigma, D)$ and a $K$-valued field $b\in C^\infty(\Sigma, K)$, then define the action $S'_{\Ec, \k}(l,b) \coloneqq S_{\Ec}(bl)$. Of course, the latter is invariant under the gauge transformation
\begin{gather}\label{Eq:GaugeE}
l \longmapsto kl, \qquad b \longmapsto bk^{-1},
\end{gather}
with local parameter $k \in C^\infty(\Sigma, K)$, and fixing this gauge invariance by imposing the gauge condition $b = \id$ we recover the original action~\eqref{E-model action 0} for the field $l$. However, since we would like to keep the gauge invariance~\eqref{Eq:GaugeE}, so as to obtain a model on $K \backslash D$, we will eliminate $b$ in a~different way.

To compute the action $S'_{\Ec, \k}(l,b)$ explicitly, we make use of the Polyakov--Wiegmann identity~\cite{Polyakov-Wiegmann}
\begin{gather}\label{PW}
I^{\rm WZ}_{\da}[bl] = I^{\rm WZ}_{\da}[b] + I^{\rm WZ}_{\da}[l] + \int_\Sigma \bigl( \langle\!\langle b^{-1} \partial_\tau b, \partial_\sigma l l^{-1} \rangle\!\rangle_\da - \langle\!\langle b^{-1} \partial_\sigma b, \partial_\tau l l^{-1} \rangle\!\rangle_\da \bigr) {\rm d}\sigma \wedge {\rm d}\tau
\end{gather}
and of the fact that
\begin{gather*}
(bl)^{-1} \partial_\mu (bl) = \Ad_l^{-1} b^{-1}\partial_\mu b + l^{-1} \partial_\mu l
\end{gather*}
for $\mu=\tau,\sigma$. By the isotropy of the subalgebra $\k \subset \da$, the WZ-term for the $K$-valued field $b$ and $\langle\!\langle \Ad_l^{-1} b^{-1} \partial_\tau b, \Ad_l^{-1} b^{-1} \partial_\sigma b \rangle\!\rangle_\da = \langle\!\langle b^{-1} \partial_\tau b, b^{-1} \partial_\sigma b \rangle\!\rangle_\da$ both vanish. After a few manipulations, one observes that $S'_{\Ec, \k}(l,b)$ depends on the field $b$ only through $Y=b^{-1} \partial_\sigma b$. More precisely, we have
\begin{gather}
S'_{\Ec, \k}(l,b) = \frac{1}{2} \int_\Sigma \bigl( \langle\!\langle l^{-1} \partial_\tau l, l^{-1}\partial_\sigma l \rangle\!\rangle_\da - \langle\!\langle l^{-1} \partial_\sigma l, \Ec\, l^{-1}\partial_\sigma l \rangle\!\rangle_\da \bigr) {\rm d}\sigma \wedge {\rm d}\tau - \frac{1}{2} I^{\rm WZ}_{\da}[l] \notag
\\ \hphantom{S'_{\Ec, \k}(l,b) =}
{} + \int_{\Sigma} \left( \langle\!\langle Y, \Ad_l ( l^{-1}\partial_\tau l - \Ec \, l^{-1}\partial_\sigma l ) \rangle\!\rangle_\da - \frac{1}{2} \langle\!\langle Y, \Ad_l \Ec \Ad_l^{-1}\, Y \rangle\!\rangle_\da \right) {\rm d}\sigma \wedge {\rm d}\tau, \label{Eq:ActionEInt}
\end{gather}
which is quadratic and algebraic in $Y$. We~can therefore integrate out the degrees of freedom in~the field $b$, or equivalently in $Y$. For~that, we first determine its equation of motion by~com\-pu\-ting the variation of the action under an infinitesimal variation $\delta Y \in C^\infty(\Sigma, \k)$ of $Y$, which reads
\begin{gather*}
\delta S'_{\Ec, \k}(l,b) = \int_{\Sigma} \big( \langle\!\langle \delta Y, \Ad_l \big( l^{-1}\partial_\tau l - \Ec \, l^{-1}\partial_\sigma l - \Ec \Ad_l^{-1}\, Y \big) \rangle\!\rangle_\da \big) {\rm d}\sigma \wedge {\rm d}\tau.
\end{gather*}
The vanishing of the above variation for any $\k$-valued field $\delta Y$ requires that
\begin{gather} \label{Z def}
Z \coloneqq \Ad_l \bigl( l^{-1}\partial_\tau l - \Ec \, l^{-1}\partial_\sigma l - \Ec \Ad_l^{-1}\, Y \bigr)
\end{gather}
belongs to the subspace orthogonal to $\k$ with respect to $\langle\!\langle \cdot,\cdot \rangle\!\rangle_\da$, which coincides with $\k$ itself since~$\k$ is Lagrangian.

To solve the equation of motion $Z\in\k$, we first rewrite~\eqref{Z def} as
\begin{gather}\label{YZ}
\Ec\Ad_l^{-1} Y + \Ad_l^{-1} Z = l^{-1}\partial_\tau l - \Ec \, l^{-1}\partial_\sigma l.
\end{gather}
Recall the projector $\Pc_l$ introduced in Section~\ref{sec: operators E Pl} through its kernel and image~\eqref{Eq:KerIm}. By definition of $Y$, the quantity $\Ec\Ad_l^{-1} Y$ is valued in $\im \Pc_l = \Ec \Ad_l^{-1} \k$. Moreover, the equation of motion $Z\in\k$ is equivalent to $\Ad_l^{-1} Z$ belonging to $\ker \Pc_l = \Ad_l^{-1} \k$. Applying $\Pc_l$ to equation~\eqref{YZ}, we~thus get $\Ec\Ad_l^{-1} Y = \Pc_l \bigl( l^{-1}\partial_\tau l - \Ec \, l^{-1}\partial_\sigma l \bigr)$, hence
\begin{gather*}
Y = \Ad_l \Ec^{-1}\Pc_l \bigl( l^{-1}\partial_\tau l - \Ec \, l^{-1}\partial_\sigma l \bigr).
\end{gather*}
Note that in the above derivation, we have used the existence of the projector $\Pc_l$ and thus the assumption~\eqref{Eq:Assump} made in Section~\ref{sec: operators E Pl}. Reinserting the above expression for $Y$ in the action~\eqref{Eq:ActionEInt}, we finally arrive at an action for the field $l$ alone. After a few manipulations, using part $(ii)$ of~Proposition~\ref{PropE}, we find
\begin{gather}
S_{\Ec,\k}(l) \coloneqq \frac{1}{2} \int_\Sigma \big( \langle\!\langle l^{-1} \partial_\tau l, \Ec^{-1} \Pc_l \big(l^{-1} \partial_\tau l\big) \rangle\!\rangle_\da - \langle\!\langle l^{-1} \partial_\sigma l, \Ec \overline{\Pc}_l \big(l^{-1} \partial_\sigma l\big) \rangle\!\rangle_\da \notag
\\ \hphantom{S_{\Ec,\k}(l) \coloneqq \frac{1}{2} \int_\Sigma \big(}
{} + \langle\!\langle l^{-1} \partial_\tau l, \big(\overline{\Pc}_l - \null^t \Pc_l\big) \big(l^{-1} \partial_\sigma l\big) \rangle\!\rangle_\da \big) {\rm d}\sigma \wedge {\rm d}\tau - \frac{1}{2} I^{\rm WZ}_{\da}[l].
\label{Eq:ActionE}
\end{gather}
Written in this form, the model is relativistic if and only if $\Ec^{-1}\Pc_l = \Ec\Pb_l$ and $\Pb_l - \null^t\Pc_l$ is skew-symmetric. By part $(vi)$ of Proposition~\ref{PropE} we deduce that these conditions are trivially satisfied if $\Ec^2=\id$. In~the latter case the action~\eqref{Eq:ActionE} can be rewritten, making multiple use of properties from Proposition~\ref{PropE}, in the more familiar form
\begin{gather*}
S(f) \coloneqq S_{\Ec,\k}(f^{-1}) = - \frac{1}{2} \int_\Sigma \big\langle{\mkern-4mu}\big\langle f^{-1} \partial_- f, \big(\id - 2 P_f(\Ec)\big) \big(f^{-1} \partial_+ f\big) \big\rangle{\mkern-4mu}\big\rangle_\da {\rm d}\sigma^+ \wedge {\rm d}\sigma^- + \frac{1}{2} I^{\rm WZ}_{\da}[f],
\end{gather*}
where $\partial_\pm \coloneqq \partial_\tau \pm \partial_\sigma$ and $\sigma^\pm = \frac 12 (\tau \pm \sigma)$. Here we have introduced the new $D$-valued field $f \coloneqq l^{-1}$, to match the notation of~\cite{Klimcik:2015gba}, and defined
\begin{gather} \label{Klimcik proj}
P_f(\Ec) \coloneqq \Ad_f^{-1} \Ec \Pc_{f^{-1}} (\Ec - \id) \Ad_f
\end{gather}
which is easily seen to be a projector. Moreover, its kernel and image can be deduced from those of $\Pc_{f^{-1}}$ in~\eqref{Eq:KerIm} to be given by
\begin{gather*}
\ker P_f(\Ec) = \big( \id + \Ad_f^{-1} \Ec \Ad_f \big) \da \qquad \text{and} \qquad \im P_f(\Ec) = \k,
\end{gather*}
which coincide with those given, for instance, in~\cite{Klimcik:2015gba}. Note that it follows from the decom\-po\-sition~\eqref{Eq:Direct} that $\big(\id + \Ad_f^{-1} \Ec \Ad_f\!\big) \da = \big(\id + \Ad_f^{-1} \Ec \Ad_f\!\big) \k$. Indeed,~\eqref{Eq:Direct} implies that $\da = \k \dotplus \big(\Ad_f^{-1} \Ec \Ad_f - \id\big) \k$. The result then follows by acting on both sides with $\big(\id + \Ad_f^{-1} \Ec \Ad_f\!\big)$ and noting that $(\Ec + \id) (\Ec - \id) = 0$.

In what follows we shall only use the alternative form~\eqref{Eq:ActionE} of the action (note that this was previously used in~\cite{Hoare:2020mpv}).

\subsection{Gauge invariance} \label{sec: gauge inv E-model}

The action $S_{\Ec,\k}(l)$ was obtained in Section~\ref{sec: Emodel action} by integrating out the field $b$ from the action~$S'_{\Ec, \k}(l,b)$, which is invariant under the gauge transformation~\eqref{Eq:GaugeE}. Therefore, by construction, $S_{\Ec,\k}(l)$ should be invariant under the residual gauge transformation $l\mapsto kl$ with $k\in C^\infty(\Sigma, K)$. In~this section we check this statement explicitly using the expression~\eqref{Eq:ActionE} of~$S_{\Ec,\k}(l)$.

We will need the gauge transformation of the projectors $\Pc_l$ and $\Pb_l$. By equation~\eqref{Eq:KerIm}, the kernel of $\Pc_{kl}$ is given by
\begin{gather*}
\ker \Pc_{kl} = \Ad_{kl}^{-1} \k = \Ad_l^{-1} \Ad_k^{-1} \k = \Ad_l^{-1} \k = \ker \Pc_l.
\end{gather*}
Similarly, one finds $\im \Pc_{kl} = \im \Pc_l$. We~thus have $\Pc_{kl}=\Pc_l$, hence also $\Pb_{kl}=\Pb_l$, i.e., the projectors $\Pc_l$ and $\Pb_l$ are gauge invariant. Moreover, we have
\begin{gather*}
\Pc_l \bigl( (kl)^{-1}\partial_\tau (kl) \bigr) = \Pc_l \bigl( l^{-1}\partial_\tau l + \Ad_l^{-1} k^{-1} \partial_\tau k \bigr) = \Pc_l \big(l^{-1} \partial_\tau l \big),
\end{gather*}
where we have used the fact that $\Ad_l^{-1} k^{-1} \partial_\tau k$ belongs to $\Ad_l^{-1}\k = \ker\Pc_l$. Therefore we deduce $\langle\!\langle l^{-1} \partial_\tau l, \Ec^{-1} \Pc_l (l^{-1} \partial_\tau l) \rangle\!\rangle_\da $ is gauge invariant, using the fact that $\tp \bigl( \Ec^{-1} \Pc_l \bigr) = \Ec^{-1} \Pc_l$ by part $(i)$ in~Proposition~\ref{PropE}.

Similarly, one finds that
\begin{gather*}
\Pb_l \bigl( (kl)^{-1}\partial_\sigma (kl) \bigr) = \Pb_l \bigl( l^{-1}\partial_\sigma l + \Ad_l^{-1} k^{-1} \partial_\sigma k \bigr) = \Pb_l (l^{-1} \partial_\sigma l ),
\end{gather*}
using $\ker\Pb_l=\Ad_l^{-1}\k$ from~\eqref{Eq:KerImbarPb}. The second term in the action~\eqref{Eq:ActionE} is thus also gauge invariant (using the symmetry of $\Ec \overline{\Pc}_l$).

Rewriting the third term in~\eqref{Eq:ActionE} as $ \langle\!\langle l^{-1} \partial_\tau l, \overline{\Pc}_l (l^{-1} \partial_\sigma l) \rangle\!\rangle_\da - \langle\!\langle \Pc_l (l^{-1} \partial_\tau l), l^{-1} \partial_\sigma l \rangle\!\rangle_\da$, using the gauge invariance of $\Pc_l( l^{-1} \partial_\tau l)$ and $\Pb_l ( l^{-1} \partial_\sigma l)$ derived above and the fact that
$(kl)^{-1} \partial_\mu (kl) = l^{-1} \partial_\mu l + \Ad_l^{-1} k^{-1}\partial_\mu k$,
one deduces that
\begin{gather*}
S_{\Ec,\k}(kl) = S_{\Ec,\k}(l) - \frac{1}{2} I^{\rm WZ}_{\da}[kl] + \frac{1}{2} I^{\rm WZ}_{\da}[l]
\\ \hphantom{S_{\Ec,\k}(kl) =}
{}+ \frac{1}{2} \int_\Sigma \left( \langle\!\langle \Ad_l^{-1} k^{-1}\partial_\tau k, \overline{\Pc}_l \big(l^{-1} \partial_\sigma l\big) \rangle\!\rangle_\da - \langle\!\langle \Pc_l \big(l^{-1} \partial_\tau l\big), \Ad_l^{-1} k^{-1}\partial_\sigma k \rangle\!\rangle_\da \right) {\rm d}\sigma \wedge {\rm d}\tau.
\end{gather*}
Using the Polyakov--Wiegmann identity~\eqref{PW} (with $b$ replaced by $k$ and noting that $I^{\rm WZ}_{\da}[k]$ vanishes as $\k$ is isotropic) and the facts that $\tp\Pb_l=\id-\Pc_l$ and $\tp\Pc_l=\id-\Pb_l$, one rewrites the above equation as
\begin{gather*}
S_{\Ec,\k}(kl) = S_{\Ec,\k}(l) - \frac{1}{2} \int_\Sigma \left( \langle\!\langle \Pc_l \Ad_l^{-1} k^{-1}\partial_\tau k, l^{-1} \partial_\sigma l \rangle\!\rangle_\da \right) {\rm d}\sigma \wedge {\rm d}\tau
\\ \hphantom{S_{\Ec,\k}(kl) =}
{}+ \frac{1}{2} \int_\Sigma \left( \langle\!\langle l^{-1} \partial_\tau l, \Pb_l \Ad_l^{-1} k^{-1}\partial_\sigma k \rangle\!\rangle_\da \right) {\rm d}\sigma \wedge {\rm d}\tau.
\end{gather*}
Finally, since $\Ad_l^{-1} k^{-1}\partial_\mu k$ belongs to $\Ad_l^{-1} \k = \ker\Pc_l = \ker\Pb_l$, we simply obtain
$S_{\Ec,\k}(kl) = S_{\Ec,\k}(l)$,
as expected.

\subsection{Equations of motion} \label{sec: eom E-model}

In this subsection, we derive the equations of motion of the field $l$ coming from the $\Ec$-model action~\eqref{Eq:ActionE}. It~will be useful to introduce a $\da$-valued 1-form $\J \coloneqq \J_\sigma{\rm d}\sigma + \J_\tau{\rm d}\tau$ with com-\linebreak ponents
\begin{subequations}\label{Eq:J}
\begin{gather}
\label{Eq:J a}
\J_\sigma \coloneqq \overline{\Pc}_l\big(l^{-1} \partial_\sigma l\big) + \Ec^{-1} \Pc_l\big(l^{-1} \partial_\tau l\big), \\
\J_\tau \coloneqq \Ec \overline{\Pc}_l \big(l^{-1} \partial_\sigma l\big) + \Pc_l \big(l^{-1} \partial_\tau l\big),
\end{gather}
\end{subequations}
such that $\J_\tau = \Ec\J_\sigma$. It~follows from the computations in Section~\ref{sec: gauge inv E-model} that these expressions are invariant under the local symmetry $l \mapsto k l$ for arbitrary $k \in C^\infty(\Sigma, K)$. One also checks directly from the definitions of the projectors $\Pc_l$ and $\Pb_l$ in Section~\ref{sec: operators E Pl} that the gauge transformation of~$\J$ by $l$ is valued in the subalgebra $\k$, namely
\begin{gather}\label{Eq:JGauged}
\null^l \J \coloneqq - {\rm d}l l^{-1} + \Ad_l \J \in \k.
\end{gather}
In terms of $\J$, the action~\eqref{Eq:ActionE} can be rewritten in the simple form
\begin{gather}\label{Eq:ActionJ}
S_{\Ec,\k}(l) = - \frac{1}{2} \int_\Sigma \langle\!\langle l^{-1} {\rm d} l, \J \rangle\!\rangle_\da - \frac{1}{2} I^{\rm WZ}_{\da}[l].
\end{gather}
Here, and in the rest of the paper, we extend bilinear pairings such as $\langle\!\langle \cdot, \cdot \rangle\!\rangle_\da$ to Lie algebra valued forms using the exterior product. We~will then derive the equations of motion of this action by varying the field $l$ by an infinitesimal right multiplication. We~will need the following lemma, which describes the transformation of $\J$ under this transformation.

\begin{Lemma}\label{Lem:VarJ}
Under an infinitesimal multiplication $\delta l = l\epsilon$ of $l$, where $\epsilon \in C^\infty(\Sigma, \da)$, the variation of $\J_\sigma$ is given by
\begin{gather*}
\delta\J_\sigma = \Pb_l\bigl( \partial_\sigma \epsilon + [\J_\sigma,\epsilon] \bigr) + \Ec^{-1}\Pc_l \bigl(\partial_\tau \epsilon + [\J_\tau,\epsilon] \bigr).
\end{gather*}
Moreover, the variation of $\J_\tau$ is given by $\delta\J_\tau=\Ec\delta\J_\sigma$.
\end{Lemma}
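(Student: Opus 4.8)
The plan is to differentiate the definition \eqref{Eq:J a}, namely $\J_\sigma = \Pb_l\big(l^{-1}\partial_\sigma l\big) + \Ec^{-1}\Pc_l\big(l^{-1}\partial_\tau l\big)$, treating separately the variations of the Maurer--Cartan currents $j_\mu \coloneqq l^{-1}\partial_\mu l$ and of the projectors $\Pc_l$, $\Pb_l$. For the currents, the standard computation with $\delta l = l\epsilon$ gives $\delta j_\mu = \partial_\mu\epsilon + [j_\mu,\epsilon]$, and since $\Ec$ is constant this produces the naive contribution $\Pb_l(\partial_\sigma\epsilon + [j_\sigma,\epsilon]) + \Ec^{-1}\Pc_l(\partial_\tau\epsilon + [j_\tau,\epsilon])$. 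This already has the desired form except that $j_\mu$ sits in the commutators in place of $\J_\mu$. Hence, writing $\xi_\mu \coloneqq \J_\mu - j_\mu$, the whole content of the lemma reduces to showing that the projector variations convert $[j_\mu,\epsilon]$ into $[\J_\mu,\epsilon]$, i.e.
\begin{gather*}
(\delta\Pb_l) j_\sigma + \Ec^{-1} (\delta\Pc_l) j_\tau = \Pb_l [\xi_\sigma,\epsilon] + \Ec^{-1}\Pc_l[\xi_\tau,\epsilon]. \tag{$\star$}
\end{gather*}

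The key step is to compute $\delta\Pc_l$ and $\delta\Pb_l$ from their kernels and images \eqref{Eq:KerIm}, \eqref{Eq:KerImbarPb}, which are all built from the single moving subspace $\Ad_l^{-1}\k$. Under $\delta l = l\epsilon$ a vector flowing inside $\Ad_l^{-1}\k$ has velocity $\delta\big(\Ad_l^{-1}\kappa\big) = [\Ad_l^{-1}\kappa,\epsilon]$, so a vector moving inside $\Ec\Ad_l^{-1}\k = \im\Pc_l$ (resp.\ $\Ec^{-1}\Ad_l^{-1}\k = \im\Pb_l$) has velocity $\Ec[\Ec^{-1}(\cdot),\epsilon]$ (resp.\ $\Ec^{-1}[\Ec(\cdot),\epsilon]$). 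Combining this with the standard variation-of-projector identity (differentiate $P^2=P$ together with $Pv=0$ on $\ker P$ and $Pw=w$ on $\im P$, giving $(\delta P)u = (\id-P)\dot w - P\dot v$ for $u = w + v$) yields the explicit formulas
\begin{gather*}
(\delta\Pc_l) u = (\id-\Pc_l)\,\Ec\,[\Ec^{-1}\Pc_l u,\epsilon] - \Pc_l\,[(\id-\Pc_l) u,\epsilon], \\
(\delta\Pb_l) u = (\id-\Pb_l)\,\Ec^{-1}[\Ec\,\Pb_l u,\epsilon] - \Pb_l\,[(\id-\Pb_l) u,\epsilon].
\end{gather*}

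Finally I would substitute these into $(\star)$ with $u = j_\sigma$ and $u = j_\tau$, and decompose the currents along the two direct sums $\da = \Ec\Ad_l^{-1}\k \dotplus \Ad_l^{-1}\k$ and $\da = \Ec^{-1}\Ad_l^{-1}\k \dotplus \Ad_l^{-1}\k$ attached to $\Pc_l$ and $\Pb_l$. Setting $a \coloneqq \Pc_l j_\tau$, $b \coloneqq (\id-\Pc_l) j_\tau$, $c \coloneqq \Pb_l j_\sigma$, $d \coloneqq (\id-\Pb_l) j_\sigma$, the definitions \eqref{Eq:J} give $\J_\sigma = c + \Ec^{-1}a$ and $\J_\tau = a + \Ec c$, whence $\xi_\sigma = \Ec^{-1}a - d$ and $\xi_\tau = \Ec c - b$ (both manifestly in $\Ad_l^{-1}\k$, as required by \eqref{Eq:JGauged}). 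The terms $\Pb_l[d,\epsilon]$ and $\Ec^{-1}\Pc_l[b,\epsilon]$ cancel between the two sides of $(\star)$ at once, and the remainder splits into two independent pieces. The contribution built from $[\Ec c,\epsilon]$ matches by part $(ii)$ of Proposition~\ref{PropE}, which as an operator identity reads $(\id-\Pb_l)\Ec^{-1} = \Ec^{-1}\Pc_l$ (i.e.\ $\Ec^{-1}\Pc_l + \Pb_l\Ec^{-1} = \Ec^{-1}$); the contribution built from $[\Ec^{-1}a,\epsilon]$ matches by part $(i)$ together with \eqref{Eq:Pb}, which give $\Ec^{-1}(\id-\Pc_l)\Ec = \id - \tp\Pc_l = \Pb_l$. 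This proves $(\star)$ and hence the stated formula for $\delta\J_\sigma$; the relation $\delta\J_\tau = \Ec\,\delta\J_\sigma$ then follows immediately from $\J_\tau = \Ec\J_\sigma$ with $\Ec$ constant.

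I expect the main obstacle to be the careful bookkeeping in the projector variations, in particular getting the $\Ec$-conjugations in the subspace velocities right, rather than any conceptual difficulty. Once the explicit expressions for $\delta\Pc_l$ and $\delta\Pb_l$ are in hand and the currents are decomposed along the two direct sums, the collapse of $(\star)$ is just a short application of parts $(i)$ and $(ii)$ of Proposition~\ref{PropE} together with the definition \eqref{Eq:Pb} of $\Pb_l$.
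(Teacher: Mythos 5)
Your proof is correct, but it takes a genuinely different route from the paper's. The paper exploits the fact (stated just before the lemma, equation~\eqref{Eq:JGauged}) that the gauge-transformed current $\null^l\J = -{\rm d}l\,l^{-1} + \Ad_l\J$ is $\k$-valued: varying that relation gives $\delta\J_\mu = \Ad_l^{-1}\delta(\null^l\J_\mu) + \partial_\mu\epsilon + [\J_\mu,\epsilon]$ with the unknown first term lying in $\Ad_l^{-1}\k = \ker\Pc_l = \ker\Pb_l$, so applying $\Pb_l$ to the $\sigma$-component and $\Pc_l$ to the $\tau$-component kills it, and part~$(ii)$ of Proposition~\ref{PropE} (in the form $\Pb_l + \Ec^{-1}\Pc_l\Ec = \id$) reassembles $\delta\J_\sigma$. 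You instead differentiate the definition~\eqref{Eq:J a} head-on, which forces you to compute $\delta\Pc_l$ and $\delta\Pb_l$ from the moving kernel and image via the standard variation-of-projector identity; I checked your formulas for $(\delta\Pc_l)u$ and $(\delta\Pb_l)u$, the decomposition $\xi_\sigma = \Ec^{-1}a - d$, $\xi_\tau = \Ec c - b$, and the final collapse of $(\star)$ using $(\id-\Pb_l)\Ec^{-1} = \Ec^{-1}\Pc_l$ and $\Ec^{-1}(\id-\Pc_l)\Ec = \Pb_l$, and all of it goes through. The paper's argument is shorter because the $\k$-valuedness of $\null^l\J$ packages the projector variations for free; yours is more self-contained (it never invokes~\eqref{Eq:JGauded}-type input, only the definitions of $\J$ and of the projectors) at the cost of the bookkeeping you correctly anticipated. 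One cosmetic remark: the identity $(\id-\Pb_l)\Ec^{-1} = \Ec^{-1}\Pc_l$ that you attribute to part~$(ii)$ is indeed the second equation there rearranged, so the attribution is fine.
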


\begin{proof}
Let us consider the gauge transformation~\eqref{Eq:JGauged} of $\J$ by $l$. Its variation under an infi\-nitesimal multiplication $\delta l = l\epsilon$ is given in terms of the variation of $\J$ by
\begin{gather*}
\delta (\null^l\J) = \Ad_l \big( \delta\J + [\epsilon, \J] - d\epsilon \big).
\end{gather*}
Rewriting this equation in components and acting with $\Ad_l^{-1}$, we get
\begin{gather*}
\delta\J_\sigma = \Ad_l^{-1} \delta \big(\null^l \J_\sigma\big) + \partial_\sigma \epsilon + [\J_\sigma,\epsilon] \qquad \text{and} \qquad \Ec\, \delta \J_\sigma = \Ad_l^{-1} \delta \big(\null^l \J_\tau\big) + \partial_\tau \epsilon + [\J_\tau,\epsilon],
\end{gather*}
where we have used the fact that $\J_\tau=\Ec\J_\sigma$. Recall from~\eqref{Eq:JGauged} that $\null^l \J_\sigma$ and $\null^l \J_\tau$, and thus also their variations $\delta (\null^l \J_\sigma)$ and $\delta (\null^l \J_\tau)$, are valued in $\k$. The first terms in the right-hand sides of the above equations are thus valued in $\Ad_l^{-1} \k = \ker\Pc_l = \ker\Pb_l$. Applying $\Pb_l$ to the first equation and $\Pc_l$ to the second one, we thus get:
\begin{gather*}
\Pb_l(\delta\J_\sigma) = \Pb_l\bigl( \partial_\sigma \epsilon + [\J_\sigma,\epsilon] \bigr) \qquad \text{and} \qquad \Pc_l\Ec(\delta\J_\sigma) = \Pc_l \bigl(\partial_\tau \epsilon + [\J_\tau,\epsilon] \bigr).
\end{gather*}
Using the fact that $\Pc_l \Ec = \Ec - \Ec\Pb_l$ (see part $(ii)$ of Proposition~\ref{PropE}) and taking the sum of the first equation above and the action of $\Ec^{-1}$ on the second, we thus get
\begin{gather*}
\delta\J_\sigma = \Pb_l\bigl( \partial_\sigma \epsilon + [\J_\sigma,\epsilon] \bigr) + \Ec^{-1}\Pc_l \bigl(\partial_\tau \epsilon + [\J_\tau,\epsilon] \bigr).
\end{gather*}
This ends the proof of the lemma (noting that the variation of $\J_\tau=\Ec\J_\sigma$ directly follows from the variation of $\J_\sigma$).
\end{proof}

Using Lemma~\ref{Lem:VarJ}, one can compute the variation of the action~\eqref{Eq:ActionJ} and derive the equations of motion of the model.

\begin{Proposition} \label{prop: E-model eom}
The equations of motion of the action~\eqref{Eq:ActionJ} take the form of the zero curvature equation:
\begin{gather*}
{\rm d}\J + \frac{1}{2} [\J,\J] = 0.
\end{gather*}
\end{Proposition}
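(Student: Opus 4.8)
The plan is to vary the action~\eqref{Eq:ActionJ} directly, feeding in the variation of $\J$ computed in Lemma~\ref{Lem:VarJ}, and to show that the stationarity condition is precisely the zero-curvature equation. First I would write $S_{\Ec,\k}(l) = -\frac12 \int_\Sigma \langle\!\langle l^{-1}{\rm d}l, \J \rangle\!\rangle_\da - \frac12 I^{\rm WZ}_\da[l]$ and compute $\delta S$ under $\delta l = l\epsilon$ with $\epsilon \in C^\infty(\Sigma,\da)$. The variation of $l^{-1}{\rm d}l$ is ${\rm d}\epsilon + [l^{-1}{\rm d}l, \epsilon]$, and the variation of the WZ-term is the standard $\delta I^{\rm WZ}_\da[l] = -\int_\Sigma \langle\!\langle {\rm d}\epsilon, [l^{-1}{\rm d}l, l^{-1}{\rm d}l] \rangle\!\rangle_\da / 2$ type expression; more usefully, it is cleaner to recall that $\delta I^{\rm WZ}_\da[l]$ produces a bulk $2$d term of the form $-\frac12 \int_\Sigma \langle\!\langle \epsilon, [l^{-1}{\rm d}l, l^{-1}{\rm d}l] \rangle\!\rangle_\da$ after integrating by parts, using ad-invariance of the form.

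Next I would assemble the full variation. Collecting the pieces, the natural quantity to track is $\null^l\J = -{\rm d}ll^{-1} + \Ad_l\J$, which by~\eqref{Eq:JGauged} is $\k$-valued; rewriting the action and its variation in terms of $\null^l\J$ should streamline the bookkeeping, since the WZ contribution and the kinetic contribution combine into exterior-derivative and bracket terms. Using Lemma~\ref{Lem:VarJ} to substitute $\delta\J_\sigma$ and $\delta\J_\tau = \Ec\,\delta\J_\sigma$, and repeatedly invoking the symmetry properties of the projectors from Proposition~\ref{PropE} (in particular parts $(i)$ and $(ii)$, together with $\tp\Pb_l = \id - \Pc_l$ and $\tp\Pc_l = \id - \Pb_l$), the variation should reduce to
\begin{gather*}
\delta S_{\Ec,\k}(l) = -\int_\Sigma \langle\!\langle \epsilon, \, {\rm d}\J + \tfrac12 [\J,\J] \rangle\!\rangle_\da
\end{gather*}
up to a boundary term that vanishes on $\Sigma = \RR^2$. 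Demanding this vanish for all $\epsilon$ and using non-degeneracy of $\langle\!\langle\cdot,\cdot\rangle\!\rangle_\da$ then yields ${\rm d}\J + \frac12[\J,\J] = 0$.

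The main obstacle I anticipate is the careful cancellation of the projector-dressed terms. The variation $\delta\J_\sigma$ in Lemma~\ref{Lem:VarJ} carries the projectors $\Pb_l$ and $\Ec^{-1}\Pc_l$, while the pairing against $l^{-1}{\rm d}l$ and the WZ term do not come pre-projected; making these match requires moving projectors across the bilinear form via their transposes and using the identities of Proposition~\ref{PropE} to convert $\Ec^{-1}\Pc_l$, $\Ec\Pb_l$, $\tp\Pc_l$, $\tp\Pb_l$ into the combinations appearing in $\J_\sigma$ and $\J_\tau$. The key simplification is that every term of the form $\langle\!\langle (\text{something}), \Ad_l^{-1}\partial_\mu(\cdot)\rangle\!\rangle_\da$ with a $\ker\Pc_l = \ker\Pb_l = \Ad_l^{-1}\k$ factor drops out, exactly as in the gauge-invariance computation of Section~\ref{sec: gauge inv E-model}. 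Once one recognizes that the surviving terms reorganize into the components of ${\rm d}\J + \frac12[\J,\J]$ paired against $\epsilon$, the result follows; I would treat this reorganization as the computational heart of the proof rather than grinding every index.
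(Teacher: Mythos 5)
Your proposal follows essentially the same route as the paper's proof: vary the action \eqref{Eq:ActionJ} under $\delta l = l\epsilon$, feed in Lemma~\ref{Lem:VarJ}, move the projectors across the bilinear form using $\tp\Pb_l = \id - \Pc_l$, $\tp(\Ec^{-1}\Pc_l) = \Ec^{-1}\Pc_l$ and $\tp(\Ec\Pb_l) = \Ec\Pb_l$ so that the projected currents recombine into $\J_\sigma$ and $\J_\tau$, add the Polyakov--Wiegmann variation of the WZ-term, and integrate by parts to isolate $\langle\!\langle \epsilon, {\rm d}\J + \frac12[\J,\J]\rangle\!\rangle_\da$. The plan is correct and matches the paper's argument in all essentials.
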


\begin{proof}
Combining Lemma~\ref{Lem:VarJ} with the facts that $\tp\Pb_l=\id-\Pc_l$ and $\tp(\Ec^{-1}\Pc_l) = \Ec^{-1}\Pc_l$ (see Section~\ref{sec: operators E Pl}), one has
\begin{gather*}
\langle\!\langle l^{-1} \partial_\tau l, \delta \J_\sigma \rangle\!\rangle_\da = \langle\!\langle (\id-\Pc_l) l^{-1} \partial_\tau l, \partial_\sigma \epsilon + [\J_\sigma,\epsilon] \rangle\!\rangle_\da + \langle\!\langle \Ec^{-1}\Pc_l\, l^{-1}\partial_\tau l, \partial_\tau \epsilon + [\J_\tau,\epsilon] \rangle\!\rangle_\da.
\end{gather*}
Similarly, using $\tp\Pc_l=\id-\Pb_l$ and $\tp(\Ec\Pb_l) = \Ec\Pb_l$ we get
\begin{gather*}
\langle\!\langle l^{-1} \partial_\sigma l, \delta \J_\tau \rangle\!\rangle_\da = \langle\!\langle \Ec\Pb_l\, l^{-1} \partial_\sigma l, \partial_\sigma \epsilon + [\J_\sigma,\epsilon] \rangle\!\rangle_\da + \langle\!\langle (\id-\Pb_l)l^{-1}\partial_\sigma l, \partial_\tau \epsilon + [\J_\tau,\epsilon] \rangle\!\rangle_\da.
\end{gather*}
Taking the difference of the above two equations, we obtain
\begin{gather*}
\langle\!\langle l^{-1} \partial_\sigma l, \delta \J_\tau \rangle\!\rangle_\da - \langle\!\langle l^{-1} \partial_\tau l, \delta \J_\sigma \rangle\!\rangle_\da =\langle\!\langle (\id-\Pb_l)l^{-1}\partial_\sigma l - \Ec^{-1}\Pc_l\, l^{-1}\partial_\tau l, \partial_\tau \epsilon + [\J_\tau,\epsilon] \rangle\!\rangle_\da
\\ \hphantom{\langle\!\langle l^{-1} \partial_\sigma l, \delta \J_\tau \rangle\!\rangle_\da - \langle\!\langle l^{-1} \partial_\tau l, \delta \J_\sigma \rangle\!\rangle_\da =}
{} - \langle\!\langle (\id-\Pc_l) l^{-1} \partial_\tau l - \Ec\Pb_l\, l^{-1} \partial_\sigma l, \partial_\sigma \epsilon + [\J_\sigma,\epsilon] \rangle\!\rangle_\da
\\ \hphantom{\langle\!\langle l^{-1} \partial_\sigma l, \delta \J_\tau \rangle\!\rangle_\da - \langle\!\langle l^{-1} \partial_\tau l, \delta \J_\sigma \rangle\!\rangle_\da}
{}= \langle\!\langle l^{-1}\partial_\sigma l - \J_\sigma, \partial_\tau \epsilon + [\J_\tau,\epsilon]  \rangle\!\rangle_\da
\\ \hphantom{\langle\!\langle l^{-1} \partial_\sigma l, \delta \J_\tau \rangle\!\rangle_\da - \langle\!\langle l^{-1} \partial_\tau l, \delta \J_\sigma \rangle\!\rangle_\da =}
{} - \langle\!\langle l^{-1} \partial_\tau l - \J_\tau, \partial_\sigma \epsilon + [\J_\sigma,\epsilon] \rangle\!\rangle_\da ,
\end{gather*}
where in the last equality we have used the definition~\eqref{Eq:J} of $\J_\sigma$ and $\J_\tau$. In~terms of forms, we can rewrite the above equation as
\begin{gather*}
\langle\!\langle l^{-1} {\rm d} l, \delta \J \rangle\!\rangle_\da = \langle\!\langle {\rm d} \epsilon + [\J,\epsilon], \J - l^{-1} {\rm d} l \rangle\!\rangle_\da.
\end{gather*}
Under the infinitesimal multiplication $\delta l = l\epsilon$, the variation of the 1-form $l^{-1}{\rm d}l$ is given by
\begin{gather*}
\delta\big(l^{-1}{\rm d}l\big) = {\rm d}\epsilon + \big[l^{-1}{\rm d}l,\epsilon\big],
\end{gather*}
while the variation of the WZ-term $I^{\rm WZ}_{\da}[l]$ follows from the Polyakov--Wiegmann identity~\cite{Polyakov-Wiegmann} and reads
\begin{gather*}
\delta I^{\rm WZ}_{\da}[l] = \int_\Sigma \langle\!\langle {\rm d}\epsilon, l^{-1} {\rm d}l \rangle\!\rangle_\da.
\end{gather*}
Combining all the above, we then determine the variation of the action~\eqref{Eq:ActionJ} to be
\begin{align*}
\delta S_{\Ec,\k}(l) &= - \frac{1}{2} \int_\Sigma \big( \langle\!\langle \delta(l^{-1} {\rm d} l), \J  \rangle\!\rangle_\da + \langle\!\langle l^{-1} {\rm d} l, \delta\J  \rangle\!\rangle_\da \big) - \frac{1}{2} \delta I^{\rm WZ}_{\da}[l] \\
&= - \int_\Sigma \bigg( \langle\!\langle {\rm d}\epsilon, \J \rangle\!\rangle_\da + \frac{1}{2} \langle\!\langle [\J,\epsilon], \J \rangle\!\rangle_\da + \frac{1}{2} \langle\!\langle [l^{-1}{\rm d}l,\epsilon], \J \rangle\!\rangle_\da + \frac{1}{2} \langle\!\langle l^{-1}{\rm d}l, [\J,\epsilon] \rangle\!\rangle_\da \bigg).
\end{align*}
Using the ad-invariance of $\langle\!\langle \cdot, \cdot \rangle\!\rangle_\da$ and integration by part, we finally get
\begin{gather*}
\delta S_{\Ec,\k}(l) = \int_\Sigma \bigg\langle\!\!\!\bigg\langle \epsilon, d\J + \frac{1}{2} [\J, \J] \bigg\rangle\!\!\!\bigg\rangle_\da.
\end{gather*}
The result now follows by requiring that $\delta S_{\Ec,\k}(l)=0$ for every $\epsilon$.
\end{proof}

\begin{Remark}
In the relativistic case when $\Ec^2 = \id$, see Section~\ref{sec: EM tensor E-model} below and in particular Remark~\ref{rem: rel inv E-model}, one can rewrite~\eqref{Eq:J a} in terms of the projector~\eqref{Klimcik proj} as
\begin{gather*}
\J_\sigma = - \partial_\sigma f f^{-1} + \frac 12 \Ad_f \big( P_f(\Ec) \big(f^{-1} \partial_+ f\big) - P_f(-\Ec) \big(f^{-1} \partial_- f\big) \big).
\end{gather*}
This coincides with \cite[equation~(17)]{Klimcik:2015gba} up to an overall sign, which is due to a difference in conventions.
Indeed, the equations of motion from Proposition~\ref{prop: E-model eom} can be written in components~as
\begin{gather*}
\partial_\tau \J_\sigma - \partial_\sigma (\Ec \J_\sigma) + [\Ec \J_\sigma,\J_\sigma] = 0
\end{gather*}
which are to be compared with the equations of motion in \cite[equation~(9)]{Klimcik:2015gba}.
\end{Remark}

\subsection{Energy-momentum tensor} \label{sec: EM tensor E-model}

The following proposition will be useful in the discussion of Section~\ref{sec: EM tensor}. We~give its proof in Appendix~\ref{App:EMT}.

\begin{Proposition} \label{prop: EM tensor}
 The components of the energy-momentum tensor of the $\Ec$-model~\eqref{Eq:ActionE} are given by
\begin{gather*}
T^\tau_{\;\;\,\tau} = -T^\sigma_{\;\;\,\sigma} = \frac{1}{2} \langle\!\langle \J_\sigma , \Ec \J_\sigma \rangle\!\rangle_\da, \qquad T^\tau_{\;\;\,\sigma} = \frac{1}{2} \langle\!\langle \J_\sigma , \J_\sigma \rangle\!\rangle_\da, \qquad T^\sigma_{\;\;\,\tau} = - \frac{1}{2} \langle\!\langle \J_\sigma , \Ec^2 \J_\sigma \rangle\!\rangle_\da,
\end{gather*}
where $\J_\sigma$ is the $\da$-valued field defined in~\eqref{Eq:J}.
\end{Proposition}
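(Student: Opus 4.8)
The plan is to obtain the energy--momentum tensor as the canonical (Noether) current for translation invariance of the action~\eqref{Eq:ActionE}, i.e.\ from
\begin{gather*}
T^\mu_{\;\;\nu} = \langle\!\langle \Pi^\mu, l^{-1}\partial_\nu l\rangle\!\rangle_\da - \delta^\mu_\nu\,\mathcal{L},
\end{gather*}
where $\mathcal{L}$ is the Lagrangian density of~\eqref{Eq:ActionE} and $\Pi^\mu\in C^\infty(\Sigma,\da)$ is the momentum conjugate to $l^{-1}\partial_\mu l$, read off from the variation of $\mathcal{L}$ at fixed $l$ (the orientation and sign conventions being fixed so that $T^\tau_{\;\;\tau}$ is the Hamiltonian density). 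Since the projectors $\Pc_l$, $\Pb_l$ depend on $l$ but not on its derivatives, their $l$-dependence never enters $\Pi^\mu$, and so does not contribute to $T^\mu_{\;\;\nu}$ directly. The first reduction is that the Wess--Zumino term~\eqref{WZ term} drops out entirely: being topological (its local density is antisymmetric in the worldsheet derivatives), a direct computation shows it contributes nothing to any component of the canonical stress tensor. Thus only the first line of~\eqref{Eq:ActionE} is relevant. Note that the resulting $T^\mu_{\;\;\nu}$ is not symmetric unless $\Ec^2=\id$, which is exactly why the canonical, rather than the metric-variational, definition is the pertinent one here.

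Abbreviating $j_\mu\coloneqq l^{-1}\partial_\mu l$, I would first extract the conjugate momenta from the quadratic density in~\eqref{Eq:ActionE}. Using the symmetry of $\Ec^{-1}\Pc_l$ and $\Ec\Pb_l$ (part~$(i)$ of Proposition~\ref{PropE}) and the identities $\tp\Pc_l=\id-\Pb_l$, $\tp\Pb_l=\id-\Pc_l$ established in Section~\ref{sec: gauge inv E-model}, the cross-term operator $\Pb_l-\tp\Pc_l=2\Pb_l-\id$ reorganises and one finds
\begin{gather*}
\Pi^\tau = \J_\sigma - \tfrac12\,j_\sigma, \qquad \Pi^\sigma = -\,\J_\tau + \tfrac12\,j_\tau,
\end{gather*}
with $\J_\sigma,\J_\tau$ as defined in~\eqref{Eq:J}. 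As the density is homogeneous of degree two in $(j_\tau,j_\sigma)$ at fixed $l$, Euler's relation gives $\mathcal{L}=\tfrac12\big(\langle\!\langle\Pi^\tau,j_\tau\rangle\!\rangle_\da+\langle\!\langle\Pi^\sigma,j_\sigma\rangle\!\rangle_\da\big)=\tfrac12\big(\langle\!\langle\J_\sigma,j_\tau\rangle\!\rangle_\da-\langle\!\langle\J_\tau,j_\sigma\rangle\!\rangle_\da\big)$, the $\langle\!\langle j_\sigma,j_\tau\rangle\!\rangle_\da$ contributions cancelling by symmetry of the form. Inserting $\Pi^\mu$ and $\mathcal{L}$ into the definition above expresses each of the four components as pairings among $j_\mu$ and $\J_\mu$.

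The decisive step is to eliminate every pairing still involving the bare currents $j_\mu$. Here I would invoke~\eqref{Eq:JGauged}: the gauge-rotated current $\null^l\J=-{\rm d}l\,l^{-1}+\Ad_l\J$ is $\k$-valued, so that $\J_\mu-j_\mu=\Ad_l^{-1}\big(\null^l\J_\mu\big)\in\Ad_l^{-1}\k$. Because $\k$, and therefore $\Ad_l^{-1}\k$, is Lagrangian for $\langle\!\langle\cdot,\cdot\rangle\!\rangle_\da$, all the combinations $\langle\!\langle\J_\mu-j_\mu,\,\J_\nu-j_\nu\rangle\!\rangle_\da$ vanish identically. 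Completing the square in each component, that is adding and subtracting the relevant $\tfrac12\langle\!\langle\J_\mu-j_\mu,\J_\nu-j_\nu\rangle\!\rangle_\da$, then annihilates every $j_\mu$-dependent term and leaves precisely $T^\tau_{\;\;\tau}=\tfrac12\langle\!\langle\J_\sigma,\J_\tau\rangle\!\rangle_\da$, $T^\sigma_{\;\;\sigma}=-\tfrac12\langle\!\langle\J_\sigma,\J_\tau\rangle\!\rangle_\da$, $T^\tau_{\;\;\sigma}=\tfrac12\langle\!\langle\J_\sigma,\J_\sigma\rangle\!\rangle_\da$ and $T^\sigma_{\;\;\tau}=-\tfrac12\langle\!\langle\J_\tau,\J_\tau\rangle\!\rangle_\da$; the relation $\J_\tau=\Ec\J_\sigma$ together with $\tp\Ec=\Ec$ then recasts these in the stated form. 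The main obstacle is conceptual rather than computational: one must recognise that the unwieldy $j_\mu$-corrections are all governed by the single structural identity~\eqref{Eq:JGauged}. Without that observation one can still reach the result by grinding through the relations $(i)$--$(v)$ of Proposition~\ref{PropE} pairing by pairing, but far more laboriously.
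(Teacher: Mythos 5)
Your proposal is correct. It shares the paper's overall strategy --- both compute the canonical Noether tensor $T^\mu_{\;\;\,\nu}=\frac{\partial L}{\partial(\partial_\mu y^M)}\partial_\nu y^M-\delta^\mu_{\;\,\nu}L$, and both observe that the WZ-term contributes to no component (in Appendix~\ref{App:EMT} this is the statement that the skew-symmetric $W_{AB}$ terms either cancel between the two pieces of~\eqref{Eq:TGen} or die against a symmetric contraction) --- but you reach the final identification with $\J_\sigma$ by a genuinely different route. The paper first proves the general Lemma~\ref{PropEMT} for an arbitrary action of the form~\eqref{Eq:ActionApp}, then specialises to $\Oc^{\tau\tau}=\Ec^{-1}\Pc_l$, $\Oc^{\sigma\sigma}=\Ec\overline{\Pc}_l$, $\Oc^{\tau\sigma}=\overline{\Pc}_l-\tp\Pc_l$ and massages each component into the stated form using parts $(i)$, $(iii)$, $(iv)$, $(v)$ of Proposition~\ref{PropE}. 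You instead package the conjugate momenta directly as $\Pi^\tau=\J_\sigma-\tfrac12 j_\sigma$ and $\Pi^\sigma=-\J_\tau+\tfrac12 j_\tau$ with $j_\mu\coloneqq l^{-1}\partial_\mu l$ (I have checked these: $\overline{\Pc}_l-\tp\Pc_l=2\overline{\Pc}_l-\id$ and $\tp\overline{\Pc}_l-\Pc_l=\id-2\Pc_l$ give exactly this from~\eqref{Eq:J}), use Euler's relation for the degree-two density, and then annihilate every residual $j_\mu$-pairing in one stroke by completing the square and invoking $\J_\mu-j_\mu=\Ad_l^{-1}\big(\null^l\J_\mu\big)\in\Ad_l^{-1}\k$ together with the isotropy of $\Ad_l^{-1}\k$, i.e., the single structural identity~\eqref{Eq:JGauged}. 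I verified all four components come out as claimed, including $T^\sigma_{\;\;\,\tau}=-\tfrac12\langle\!\langle\J_\tau,\J_\tau\rangle\!\rangle_\da=-\tfrac12\langle\!\langle\J_\sigma,\Ec^2\J_\sigma\rangle\!\rangle_\da$ by the symmetry of $\Ec$. What each approach buys: the paper's Lemma~\ref{PropEMT} is reusable for any model of the form~\eqref{Eq:ActionApp} and makes the relativistic-invariance criterion transparent, whereas your argument is shorter and more conceptual, concentrating all the projector algebra into the isotropy of $\Ad_l^{-1}\k$ --- which is of course the same fact in disguise, since parts $(iii)$--$(v)$ of Proposition~\ref{PropE} are themselves consequences of $\k$ being Lagrangian.
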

\begin{Remark} \label{rem: rel inv E-model}
The relativistic invariance of the $\Ec$-model can be deduced immediately from Proposition~\ref{prop: EM tensor}.
Defining the $2$d Minkowski metric $\eta_{\mu\nu}$ by $\eta_{\tau\tau}=-\eta_{\sigma\sigma}=1$ and $\eta_{\tau\sigma}=-\eta_{\sigma\tau}=0$, we can lower the indices of the energy-momentum tensor and define $T_{\mu\nu} \coloneqq \eta_{\mu\rho}T^\rho_{\;\;\,\nu}$. In~particular, we get
\begin{gather*}
T_{\tau\tau} = T_{\sigma\sigma} = \frac{1}{2} \langle\!\langle \J_\sigma , \Ec \J_\sigma \rangle\!\rangle_\da, \qquad T_{\tau\sigma} = \frac{1}{2} \langle\!\langle \J_\sigma , \J_\sigma \rangle\!\rangle_\da, \qquad T_{\sigma\tau} = \frac{1}{2} \langle\!\langle \J_\sigma , \Ec^2 \J_\sigma \rangle\!\rangle_\da.
\end{gather*}
It follows that if $\Ec^2 = \id$ then the energy-momentum tensor $T_{\mu \nu}$ is symmetric, which implies the relativistic invariance of the model.

Moreover, if $\Ec$ is positive with respect to the bilinear form $\langle\!\langle \cdot, \cdot \rangle\!\rangle_\da$ then the Hamiltonian $\int_\RR {\rm d}\sigma \; T^\tau_{\;\; \tau}$ is positive.
\end{Remark}

\section{2d integrable field theories from 4d Chern--Simons theory} \label{sec: 2d from 4d}

The general $2$d action obtained in~\cite{Benini:2020skc} will serve as the starting point of our analysis in Section~\ref{sec: E-model from 4d} below, so in this section we begin by reviewing the results of~\cite{Benini:2020skc}, referring the reader to the latter for details and proofs.

In order to ensure that the $2$d action is real, we will also impose reality conditions following~\cite{Delduc:2019whp}, see also~\cite{Delduc:2019bcl, Vicedo:2017cge} in the context of affine Gaudin models. Although this was not directly considered in~\cite{Benini:2020skc}, the analysis there readily applies to the real setting.

\subsection{Surface defects} \label{sec: defects}

Let $\omega$ be a meromorphic $1$-form on $\CP$. We~denote its set of poles by $\bZ \subset \CP$ and denote by $n_x \in \ZZ_{\geq 1}$ the order of a pole $x \in \bZ$. Although this was not necessary in the analysis of~\cite{Benini:2020skc}, we~shall assume here that $\omega$ has a double pole at infinity, namely $\infty \in \bZ$ with $n_\infty = 2$. Let us fix a coordinate $z$ on $\CC \subset \CP$ so that $\omega$ can be written explicitly as
\begin{gather}\label{Eq:Omega}
\omega = \bigg( \sum_{x \in \bZ'} \sum_{p=0}^{n_x-1} \frac{\ell^x_p}{(z-x)^{p+1}} - \ell^\infty_1 \bigg) {\rm d}z \eqqcolon \varphi(z) {\rm d}z,
\end{gather}
where $\bZ' \coloneqq \bZ \setminus \{ \infty \}$, for some $\ell^x_p \in \CC$ which we refer to as the \emph{levels}. We~also define $\ell_0^\infty \coloneqq \res_\infty \omega = - \sum_{x \in \bZ'} \ell^x_0$.

We impose reality conditions on each pole $x \in \bZ$ and its corresponding set of levels $\ell^x_p$, $p = 0, \dots, n_x - 1$ by requiring that $\overline{\varphi(z)} = \varphi(\bar z)$. In~particular, we define the subset of real poles $\bz_{\rm r} \coloneqq \bz'_{\rm r} \sqcup \{ \infty \}$, where $\bz'_{\rm r} \coloneqq \bZ' \cap \RR$. By the above assumption on $\varphi$ the associated levels are real, i.e., $\ell^x_p \in \RR$. The remaining poles come in complex conjugate pairs and we define $\bz_{\rm c} \coloneqq \{ x \in \bZ \mid \Im x > 0 \}$ so that $\bZ = \bz_{\rm r} \sqcup \bz_{\rm c} \sqcup \bar{\bz}_{\rm c}$. For~every $x \in \bz_{\rm c} \sqcup \bar{\bz}_{\rm c}$ we have $n_{\bar x} = n_x$ and $\overline{\ell^x_p} = \ell^{\bar x}_p$ for $p = 0, \dots, n_x - 1$. It~is convenient to introduce the set $\bz \coloneqq \bz_{\rm r} \sqcup \bz_{\rm c}$ of independent poles. We~also introduce the subset $\bz' \coloneqq \bz'_{\rm r} \sqcup \bz_{\rm c} \subset \bz$ of finite independent poles in $\bz$.

The set of zeroes of $\omega$ can be similarly decomposed as $\bzeta_{\rm r} \sqcup \bzeta_{\rm c} \sqcup \bar{\bzeta}_{\rm c}$ with $\bzeta_{\rm r} \subset \RR$ the subset of real zeroes and $\bzeta_{\rm c} \subset \{ z \in \CC \mid \Im z > 0\}$ the subset of complex zeroes. We~introduce the set $\bzeta \coloneqq \bzeta_{\rm r} \sqcup \bzeta_{\rm c}$ of independent zeroes and let $m_y \in \ZZ_{\geq 1}$ denote the order of the zero $y \in \bm \zeta$. For~$y \in \bzeta_{\rm c}$, $\omega$ also has a zero of order $m_{\bar y} \coloneqq m_y$ at $\bar y \in \bar{\bzeta}_{\rm c}$.

It will be convenient to introduce the group $\Pi = \{ \id, \ms t \} \cong \ZZ_2$ which acts on $\CP$ by letting~$\ms t$ act by complex conjugation $\mu_{\ms t} \colon z \mapsto \bar z$. Note that we can then write $\bZ = \Pi \bz$ and we have $\Pi \bzeta = \bzeta_{\rm r} \sqcup \bzeta_{\rm c} \sqcup \bar \bzeta_{\rm c}$. Let $\Pi_x \subset \Pi$ denote the stabiliser subgroup of a point $x \in \CC$, so that $\Pi_x = \{ \id \}$ is the trivial group for $x \in \bz_{\rm c} \sqcup \bar{\bz}_{\rm c}$ and $\Pi_x = \Pi$ for $x \in \bz_{\rm r}$. In~particular, $|\Pi_x| = 2$ if~$x \in \bz_{\rm r}$ and $|\Pi_x| = 1$ if~$x \in \bz_{\rm c}$. The analogous statements hold for the stabilisers $\Pi_y$ of zeroes $y \in \bzeta_{\rm r} \sqcup \bzeta_{\rm c} \sqcup \bar \bzeta_{\rm c}$ of $\omega$.

Let $C \coloneqq \CP \setminus \bm \zeta$, $\Sigma \coloneqq \RR^2$ and $X \coloneqq \Sigma \times C$. We~will always think of $\omega$ as defining a $1$-form on $X$ with singularities along the disjoint union
\begin{gather*}
\D \coloneqq \bigsqcup_{x \in \bm z} \Sigma_x
\end{gather*}
of \emph{surface defects} $\Sigma_x \coloneqq \Sigma \times \{ x \} \subset X$, each trivially homeomorphic to $\Sigma$, i.e., $\Sigma_x \cong \Sigma$. We~denote the embedding of the individual surface defects by
\begin{gather*}
\iota_x \colon\ \Sigma_x \longhookrightarrow X.
\end{gather*}

To account for the fact that poles and zeroes of $\omega$ may not be simple, it will be convenient to let $\bsb{\bz}$ denote the set of pairs $\up{x, p}$ with $x \in \bm z$ and $p = 0, \dots, n_x - 1$, and likewise, let $\brb{\bzeta}$ be the set of pairs $\uz{y, q}$ with $y \in \bzeta$ and $q = 0, \dots, m_y - 1$. We~think of the collection $\up{x, p}$ for~$p = 0, \dots, n_x - 1$ (resp.~$\uz{y, q}$ for $q = 0, \dots, m_y - 1$) as an infinitesimal ``thickening'' of the pole~$x$ (resp.~the zero $y$).

\subsection{Defect Lie algebra} \label{sec: defect Lie alg}

Let $G$ be a real simply connected Lie group. We~suppose that its Lie algebra $\g$ is equipped with a non-degenerate invariant symmetric bilinear form $\langle \cdot, \cdot \rangle \colon \g \times \g \to \RR$.

Let $\g^\CC \coloneqq \g \otimes_\RR \CC$ denote the complexification of $\g$, which comes equipped with an anti-linear involution $\tau \colon \g^\CC \to \g^\CC$ given by complex conjugation in the second tensor factor. We~extend the bilinear form on $\g$ to a bilinear form $\langle \cdot, \cdot \rangle \colon \g^\CC \times \g^\CC \to \CC$ by complex linearity, so that $\langle \tau \ms u, \tau \ms v \rangle = \overline{\langle \ms u, \ms v \rangle}$ for any $\ms u, \ms v \in \g^\CC$.
Let $\tau \colon G^\CC \to G^\CC$ be the lift of $\tau \colon \g^\CC \to \g^\CC$ to an involutive automorphism of $G^\CC$. The real Lie group $G$ can then be identified as the subgroup of fixed points of $\tau$.

Let $\T^{n_x}_x \coloneqq \RR[\varepsilon_x]/ (\varepsilon_x^{n_x})$ for real poles $x \in \bz_{\rm r}$ and $\T^{n_x}_x \coloneqq \CC[\varepsilon_x]/ (\varepsilon_x^{n_x})$ for complex poles $x \in \bz_{\rm c}$. We~define the \emph{defect Lie algebra} as the real Lie algebra
\begin{gather} \label{defect alg}
\gz \coloneqq \bigoplus_{x \in \bz_{\rm r}} \g \otimes_\RR \T^{n_x}_x \oplus \bigoplus_{x \in \bz_{\rm c}} \big(\g^\CC \otimes_\CC \T^{n_x}_x\big)_\RR,
\end{gather}
where $(\g^\CC \otimes_\CC \T^{n_x}_x)_\RR$ is the realification of the complex Lie algebra $\g^\CC \otimes_\CC \T^{n_x}_x$, i.e., $\g^\CC \otimes_\CC \T^{n_x}_x$ regarded as a Lie algebra over $\RR$. We~use the notation $\ms u^{\up{x, p}} \coloneqq \ms u \otimes \varepsilon_x^p \in \gz$ for any $\ms u \in \g$ and~$\up{x, p} \in \bsb{\bz_{\rm r}}$ or $\ms u \in \g^\CC$ and $\up{x, p} \in \bsb{\bz_{\rm c}}$. The Lie algebra relations of $\gz$ are given explicitly in~terms of this basis as
\begin{gather*}
\big[ \ms u^{\up{x, p}}, \ms v^{\up{y, q}} \big] = \delta_{xy} [\ms u, \ms v]^{\up{x, p+q}}.
\end{gather*}
Note that this is zero if $p + q \geq n_x$. We~equip $\gz$ with a non-degenerate invariant symmetric bilinear form defined by
\begin{gather} \label{form on gz}
\langle\!\langle \cdot, \cdot \rangle\!\rangle_{\gz} \colon\ \gz \times \gz \longrightarrow \RR, \qquad
\big\langle{\mkern-4mu}\big\langle \ms u^{\up{x, p}}, \ms v^{\up{y, q}} \big\rangle{\mkern-4mu}\big\rangle_{\gz} = \delta_{xy} \, \frac{2}{|\Pi_x|} \Re \big( \ell^x_{p+q} \langle \ms u, \ms v \rangle \big).
\end{gather}
Here we define $\ell^x_p = 0$ for all $p \geq n_x$. Note that for $x \in \bz_{\rm r}$ we have
\begin{gather*}
\big\langle{\mkern-4mu}\big\langle \ms u^{\up{x, p}}, \ms v^{\up{y, q}} \big\rangle{\mkern-4mu}\big\rangle_{\gz} = \delta_{xy} \, \ell^x_{p+q} \langle \ms u, \ms v \rangle,
\end{gather*}
while for $x \in \bz_{\rm c}$ we have
\begin{gather*}
\big\langle{\mkern-4mu}\big\langle \ms u^{\up{x, p}}, \ms v^{\up{y, q}} \big\rangle{\mkern-4mu}\big\rangle_{\gz} = \delta_{xy} \, \big( \ell^x_{p+q} \langle \ms u, \ms v \rangle + \ell^{\bar x}_{p+q} \langle \tau \ms u, \tau \ms v \rangle \big).
\end{gather*}
One can also introduce a real Lie group with Lie algebra $\gz$ which we will call the \emph{defect group} and denote by $\Gz$.

From now on we will assume that the real Lie algebra $\gz$ is even dimensional. In~other words, either $\g$ itself is even dimensional or the number of poles of $\omega$ counting multiplicities, namely $\sum_{x \in \bm z} n_x$, is even. Note that since we are assuming $n_\infty = 2$ it follows that $\sum_{x \in \bm z'} n_x$ is also even.

\subsection{The map \texorpdfstring{$\bm j^\ast$}{jast}}

Let $\Omega^1\big(X, \g^\CC\big)$ denote the complex vector space of smooth $\g^\CC$-valued $1$-forms on $X$. We~can define two actions of the group $\Pi$ on $\Omega^1\big(X, \g^\CC\big)$: we can let $\ms t \in \Pi$ act as the pullback by complex conjugation $\mu_{\ms t} \colon z \mapsto \bar z$ or we can let it act as $\tau$ on $\g^\CC$. We~let $\Omega^1\big(X, \g^\CC\big)^\Pi$ denote the real vector space consisting of $1$-forms $\eta \in \Omega^1\big(X, \g^\CC\big)$ on which these two actions agree, namely such that $\mu_{\ms t}^\ast \eta = \tau \eta$.

The relationship between the defect Lie algebra $\gz$ and the surface defect $\D$ can be understood through the following linear map of real vector spaces
\begin{gather} \label{j pull}
\jb^\ast \colon\ \Omega^1\big(X, \g^\CC\big)^\Pi \longrightarrow \Omega^1\big(\Sigma, \gz\big), \qquad
\eta \longmapsto \bigg(\sum_{p=0}^{n_x - 1} \frac{1}{p!} \iota_x^\ast (\partial^p_z \eta) \otimes \varepsilon_x^p \bigg)_{x \in \bm z}.
\end{gather}
In words, this map takes a smooth equivariant $1$-form $\eta \in \Omega^1\big(X, \g^\CC\big)^\Pi$ and returns the first~$n_x$ terms in the holomorphic part of its Taylor expansion at points on the surface defect $\D$, keeping only the two components of the $1$-form along $\Sigma$. Note that for $x \in \bz_{\rm r}$ the corresponding component of~\eqref{j pull} is indeed in $\g \otimes_\RR \T^{n_x}_x$ since
\begin{gather*}
\tau \big( \iota_x^\ast (\partial^p_z \eta) \big) = \iota_x^\ast (\partial^p_{\bar z} (\tau \eta)) = \iota_x^\ast (\partial^p_{\bar z} (\mu_{\ms t}^\ast \eta)) = \iota_x^\ast \mu_{\ms t}^\ast (\partial^p_z \eta) = \iota_x^\ast (\partial^p_z \eta),
\end{gather*}
where in the first equality we used the anti-linearity of $\tau$, in the second step the equivariance of~$\eta$ and in the final step the fact that $\mu_{\ms t} \circ \iota_x = \iota_x$ since $x \in \bz_{\rm r}$.

In the simplest case when $n_x = 1$ for all $x \in \bm z$, the map $\jb^\ast$ is simply the pullback by the embedding $\jb \colon \D \hookrightarrow X$ since we have the canonical identification
\begin{gather} \label{iso surf defect}
\Omega^1\big(\D, \g^\CC\big)^\Pi \cong \bigg( \bigoplus_{x \in \bm z} \Omega^1\big(\Sigma, \g^\CC\big) \bigg)^\Pi \cong \Omega^1\bigg( \Sigma, \bigg( \bigoplus_{x \in \bm z} \g^\CC \bigg)^\Pi \bigg) = \Omega^1\big( \Sigma, \gz \big).
\end{gather}
In this case, the map~\eqref{j pull} is given simply by $\jb^\ast \eta = ( \iota_x^\ast \eta )_{x \in \bm z}$, namely the collection of pullbacks of $\eta$ to each surface defect $\Sigma_x$.

\subsection{4d Chern--Simons theory with edge modes} \label{sec: edge modes}

The action of $4$-dimensional Chern--Simons theory~\cite{Costello:2019tri} for an equivariant $\g^\CC$-valued $1$-form $A \in \Omega^1\big(X, \g^\CC\big)^\Pi$ is given by integrating $\frac{\ii}{4 \pi} \omega \wedge \CS(A) \in \Omega^4(X)$ over $X$, where $\CS(A) \coloneqq \big\langle A, {\rm d}A + \frac 13 [A,A] \big\rangle$ is the Chern--Simons 3-form. By \cite[Lemma 2.4]{Delduc:2019whp} this action is real.

Strictly speaking, the $4$-form $\omega \wedge \CS(A)$ is not integrable in the neighbourhood of a surface defect $\Sigma_x$ corresponding to a higher order pole $x \in \bm z$ with $n_x > 1$. For~this reason, one needs to~introduce a suitable regularisation of the action~\cite{Benini:2020skc}, which we denote by $S_{\rm 4d}(A)$. The proof of~\cite[Lemma 2.4]{Delduc:2019whp} generalises to this regularised action, showing that it is also real. The behaviour of $S_{\rm 4d}(A)$ under gauge transformations
\begin{gather*}
A \longmapsto \null^g A \coloneqq - {\rm d}g g^{-1} + g A g^{-1}
\end{gather*}
by $g \in C^\infty(X, G^\CC)^\Pi$ was studied in~\cite{Benini:2020skc}, where it was shown that gauge invariance can be achieved in two separate but equivalent ways: either by imposing boundary conditions on the field $A \in \Omega^1\big(X, \g^\CC\big)^\Pi$, or by coupling $A$ to a new field localised in the formal neighbourhood of the surface defect $\D$ which amounts to a field $h \in C^\infty\big(\Sigma, \Gz\big)$.

To describe a general class of boundary conditions on $A$, note that by applying the map~\eqref{j pull} to $A \in \Omega^1\big(X, \g^\CC\big)^\Pi$ we obtain a $1$-form $\jb^\ast A \in \Omega^1\big(\Sigma, \gz\big)$.
Let $\fa \subset \gz$ be a Lagrangian subalgebra of the defect Lie algebra $\gz$ and let $F$ be the corresponding connected real Lie subgroup of~$\Gz$. It~was shown in~\cite{Benini:2020skc} that the regularised action $S_{\rm 4d}(A)$ becomes gauge invariant if we restrict attention to fields $A \in \Omega^1\big(X, \g^\CC\big)^\Pi$ for which
$\jb^\ast A \in \Omega^1(\Sigma, \fa)$.
Correspondingly, gauge transformation parameters $g \in C^\infty\big(X, G^\CC\big)^\Pi$ are restricted to be such that their ``pullback'' to the formal neighbourhood of $\D$ is $F$-valued.

An alternative way of ensuring gauge-invariance of $4$d Chern--Simons theory, which provides a direct route to the action of the $2$d integrable field theory~\cite{Benini:2020skc}, requires introducing a new field $h \in C^\infty\big(\Sigma, \Gz\big)$ called the \emph{edge mode}. In~the simplest case when $\omega$ has only simple poles, i.e., $n_x = 1$ for all $x \in \bm z$, we have a canonical isomorphism $C^\infty\big(\D, G^\CC\big)^\Pi \cong C^\infty\big(\Sigma, \Gz\big)$ by the same line of reasoning as in~\eqref{iso surf defect}, allowing us to view the edge mode in this case as a~$G^\CC$-valued function on $\D$. In~the presence of higher order poles of $\omega$, one can think of the edge mode as a~$G^\CC$-valued field localised in a formal neighbourhood of the defect $\D$. Its role is to witness the boundary condition on $A$. Specifically, rather than imposing boundary conditions on $A$ \emph{strictly}, as in the previous paragraph, we require $A$ to satisfy these boundary conditions only up to a~gauge transformation by the edge mode, namely
\begin{gather} \label{A h constraint}
\null^h (\jb^\ast A) \in \Omega^1(\Sigma, \fa).
\end{gather}

\begin{Remark}
Here we depart slightly from the conventions used in~\cite{Benini:2020skc}, where the condition~\eqref{A h constraint} was written as $\null^{h^{-1}} (\jb^\ast A) \in \Omega^1(\Sigma, \fa)$. Effectively, our edge mode coincides with the inverse of the edge mode in~\cite{Benini:2020skc}.
\end{Remark}

We can now ensure gauge invariance of $4$d Chern--Simons theory by coupling the bulk field $A$ to the edge mode $h$, through its ``pullback'' $\jb^\ast A$. Explicitly, we introduce the extended action~\cite{Benini:2020skc}
\begin{gather} \label{ext action}
S^{\rm ext}_{\rm 4d}(A, h) = S_{\rm 4d}(A) - \frac{1}{2} \int_\Sigma \langle\!\langle h^{-1} {\rm d} h, \jb^\ast A \rangle\!\rangle_{\gz} - \frac 12 I^{\rm WZ}_{\gz}[h],
\end{gather}
where we use the standard WZ-term defined as in~\eqref{WZ term} but with the group $\Gz$ replacing the~role of~$D$.
The action~\eqref{ext action} and the constraint~\eqref{A h constraint} are invariant under the gauge transformation
\begin{subequations} \label{gauge tr A g}
\begin{gather} \label{gauge tr A g 1}
A \longmapsto \null^g A, \qquad
h \longmapsto h (\jb^\ast g)^{-1}
\end{gather}
for any $g \in C^\infty\big(X, G^\CC\big)^\Pi$. There is also a further gauge transformation acting on the edge mode alone as
\begin{gather} \label{gauge tr A g 2}
h \longmapsto f h
\end{gather}
\end{subequations}
for any $f \in C^\infty(\Sigma, F)$. The invariance of the action~\eqref{ext action} under these gauge transformations follows using the Polyakov--Wiegmann identity on the WZ-term.

\subsection{Reduction to 2d integrable field theories} \label{sec: red to 2d}

Having introduced edge modes in the extended action~\eqref{ext action}, the passage to $2$d integrable field theories is now fairly direct. Indeed, the edge mode $h \in C^\infty\big(\Sigma, \Gz\big)$ will ultimately play the role of the collection of fields of the $2$d integrable field theory. The gauge field $A \in \Omega^1(X, \g)$, on the other hand, will become the Lax connection $\L$ of the integrable field theory. For~this to happen, however, we have to restrict attention to $1$-forms $A$ which only have components along $\Sigma \subset X$ and which depend holomorphically on the complex direction $C \subset X$. More precisely, this can be done by focusing on a certain class of solutions to part of the equations of motion for the extended action~\eqref{ext action}, which we now describe.

For a complex vector space $V$ we let $R^\infty_{\Pi\bzeta}(V)$ denote the space of $V$-valued rational functions with poles at each $y \in \Pi \bzeta$ of order at most $m_y$, the order of the zero $y$ of $\omega$. If $V$ is equipped with an anti-linear involution $\tau \colon V \to V$ then we can define an action of $\Pi$ on $V$ by letting $\ms t \in \Pi$ act as $\tau$. This then also lifts to an action of $\Pi$ on~$R^\infty_{\Pi\bzeta}(V)$. We~can also define an action of $\Pi$ on~$R^\infty_{\Pi\bzeta}(V)$ by letting $\ms t \in \Pi$ act as the pullback by complex conjugation $\mu_{\ms t} \colon z \mapsto \bar z$. We~let $R^\infty_{\Pi\bzeta}(V)^\Pi$ denote the real vector space of rational functions in $R^\infty_{\Pi\bzeta}(V)$ on which these two actions coincide. In~what follows we will either take $V = \g^\CC$ or $V = C^\infty\big(\Sigma, \g^\CC\big)$, where the action of $\Pi$ on the latter is induced from the action of $\Pi$ on $\g^\CC$.

Following~\cite{Benini:2020skc}, we will restrict attention to \emph{admissible} $1$-forms $\L \in \Omega^1\big(X, \g^\CC\big)^\Pi$ with the following properties:
\begin{itemize}\itemsep=0pt
 \item[$(a)$] We have $\L = \L_\sigma {\rm d}\sigma + \L_\tau {\rm d}\tau$ with both components $\L_\sigma, \L_\tau \in R^\infty_{\Pi\bzeta}\big(C^\infty\big(\Sigma, \g^\CC\big)\big)^\Pi$. Explicitly, this means that we can write, for $\mu = \sigma, \tau$,
\begin{gather*}
\L_\mu = \L_{\rm c, \mu} + \sum_{\uz{y, q} \in \brb{\Pi \bzeta}} \frac{\L_\mu^{\uz{y, q}}}{(z-y)^{q+1}},
\end{gather*}
for some $\L_{\rm c, \mu} \in C^\infty(\Sigma, \g)$ and $\L_\mu^{\uz{y, q}} \in C^\infty\big(\Sigma, \g^\CC\big)$. In~the case when $y \in \bzeta_{\rm r}$ we have $\L_\mu^{\uz{y, q}} \in C^\infty(\Sigma, \g)$ and for $y \in \bzeta_{\rm c} \sqcup \bar \bzeta_{\rm c}$ we have $\tau \L_\mu^{\uz{y, q}} = \L_\mu^{\uz{\bar y, q}}$ for all $q = 0, \dots, m_y = m_{\bar y}$.
 \item[$(b)$] The single component of the curvature ${\rm d} \L + \frac 12 [\L, \L] = F(\L)_{\sigma\tau} {\rm d}\sigma \wedge {\rm d}\tau$ is also such that $F(\L)_{\sigma\tau} \in R^\infty_{\Pi\bzeta}\big(C^\infty\big(\Sigma, \g^\CC\big)\big)^\Pi$. Given property $(a)$, this is equivalent to the commutator term in $F(\L)_{\sigma\tau}$ having no poles of order greater than $m_y$ at each $y \in \bm \zeta$. Explicitly, we can write this as
\begin{gather*}
\sum_{q = p - m_y + 1}^{m_y - 1} \big[ \L_\sigma^{\uz{y, q}}, \L_\tau^{\uz{y, p - q}} \big] = 0
\end{gather*}
for every $y \in \bm \zeta$ and every $p = m_y - 1, \dots, 2 m_y - 2$.
\end{itemize}

Let us now \emph{suppose} that for every $h \in C^\infty\big(\Sigma, \Gz\big)$ there exists an admissible $1$-form $\L = \L(h) \in \Omega^1\big(X, \g^\CC\big)^\Pi$ such that the condition~\eqref{A h constraint} holds, namely
\begin{gather} \label{L h constraint}
\null^h (\jb^\ast \L) \in \Omega^1(\Sigma, \fa).
\end{gather}
Moreover, we require that the collection of solutions $\L(h)$ for every $h \in C^\infty\big(\Sigma, \Gz\big)$ is equivariant under those gauge transformations of the form~\eqref{gauge tr A g} which preserve the class of admissible $1$-forms.
Specifically, for every $g \in C^\infty(\Sigma, G)$ and $f \in C^\infty(\Sigma, F)$ we should have
\begin{gather} \label{L h gauge inv}
\null^{\Delta(g)^{-1}} \big( \jb^\ast \L\big(fh\Delta(g)^{-1}\big) \big) = \jb^\ast \L(h),
\end{gather}
with $\Delta \colon G \to G^{\times |\bz|} \subset \Gz$ the diagonal map. (At the Lie algebra level, the latter is given explicitly by the diagonal embedding $\g \to \g^{\oplus |\bz|} \subset \gz$, $\ms u \to (\ms u^{\up{x,0}})_{x \in \bz}$.)
Note that~\eqref{L h gauge inv} is compatible with the constraint~\eqref{L h constraint}.

In the terminology of \cite[Remark 5.10]{Benini:2020skc} this amounts to specifying a section of a certain surjective map. This section can then be used to pull back the action of $4$d Chern--Simons theory in the presence of edge modes~\eqref{ext action} to the action of a $2$d integrable field theory with Lax connection $\L(h)$. More explicitly, given an admissible $1$-form $\L(h)$ with the properties described above, if we substitute $A = \L(h)$ in the action~\eqref{ext action} then the first term $S_{\rm 4d}(\L(h))$ vanishes using both admissibility properties $(a)$ and $(b)$ and we are left with the action
\begin{gather} \label{ext action L}
S_{\rm 2d}(h) = - \frac{1}{2} \int_\Sigma \langle\!\langle h^{-1} {\rm d} h, \jb^\ast \L(h) \rangle\!\rangle_{\gz} - \frac 12 I^{\rm WZ}_{\gz}[h]
\end{gather}
for the field $h \in C^\infty\big(\Sigma, \Gz\big)$. In~particular, computing the variation of the action~\eqref{ext action L} with respect to the fields $h$ and $\L(h)$, taking into account the constraint~\eqref{L h constraint} relating these two fields, see \cite[equations~(5.2)--(5.5)]{Benini:2020skc} for details, we find that the equations of motion take the form
\begin{gather*}
{\rm d} \jb^\ast \L(h) + \frac 12 [\jb^\ast \L(h), \jb^\ast \L(h)] = 0.
\end{gather*}
By the admissibility of the $1$-form $\L(h)$ it then follows from \cite[Proposition 5.6]{Benini:2020skc}, see also the related discussion in Section~\ref{sec: inverse jz} below, that the above equation of motion is equivalent to the zero-curvature equation for $\L(h)$ itself, namely
\begin{gather*}
{\rm d} \L(h) + \frac 12 [\L(h), \L(h)] = 0.
\end{gather*}
Furthermore, because of the behaviour~\eqref{L h gauge inv} of $\L(h)$ under gauge transformations, the action~\eqref{ext action L} is invariant under
\begin{gather} \label{gauge tr}
h \longmapsto f h \Delta(g)^{-1}
\end{gather}
for any $g \in C^\infty(\Sigma, G)$ and $f \in C^\infty(\Sigma, F)$.

\subsection{Removing the edge mode at infinity} \label{sec: removing inf}

In order to obtain the $\Ec$-model from the action~\eqref{ext action L} we will need to make one further reduction. Specifically, we shall partially fix the gauge invariance~\eqref{gauge tr} by setting the component of the edge mode $h \in C^\infty\big(\Sigma, \Gz\big)$ at infinity to the identity.

Consider the real Lie subalgebra of the defect Lie algebra $\gz$ defined by
\begin{gather} \label{defect grp alg inf}
\da \coloneqq \bigoplus_{x \in \bm z'_{\rm r}} \g \otimes_\RR \T^{n_x}_x \oplus \bigoplus_{x \in \bm z_{\rm c}} \big(\g^\CC \otimes_\CC \T^{n_x}_x\big)_\RR.
\end{gather}
Notice that in comparing this definition with that of $\gz$ in~\eqref{defect alg} we have simply removed the factor $\g \otimes_\RR \T^2_\infty$ corresponding to the pole at infinifty. We~let $D$ denote the corresponding connected Lie subgroup of the defect group $\Gz$.
Recall from Section~\ref{sec: defect Lie alg} that we are assuming $\dim \gz$ is even, meaning that either $\g$ is even dimensional or $\sum_{x \in \bm z'} n_x$ is even. Hence $\dim \d$ is also even.

The non-degenerate bilinear form on $\gz$ defined in~\eqref{form on gz} restricts to the subalgebra $\da \subset \gz$. We~denote this restriction by
\begin{gather} \label{ip da}
\langle\!\langle \cdot, \cdot \rangle\!\rangle_\da \colon\ \da \times \da \longrightarrow \RR.
\end{gather}

\begin{Remark}
A more natural notation for the Lie group and Lie algebra in~\eqref{defect grp alg inf} would be~$\Gzp$ and $\gzp$. We~could also keep calling the induced bilinear form~\eqref{ip da} as $\langle\!\langle \cdot, \cdot \rangle\!\rangle_{\gz}$. The reason for using the above notation is that these will correspond to the standard notation for the Lie group on which the $\Ec$-model is defined.
\end{Remark}

Recalling from Section~\ref{sec: defects} that we are assuming $n_\infty = 2$, there is an obvious Lagrangian subalgebra of the factor $\g \otimes_\RR \T^2_\infty$ of $\gz$ at infinity given by the abelian subalgebra
\begin{gather*}
\g_{\rm ab} \coloneqq \g \otimes_\RR \varepsilon_\infty \RR[\varepsilon_\infty] / \big(\varepsilon_\infty^2\big).
\end{gather*}
For any choice of Lagrangian subalgebra $\k \subset \da$ we can then take $\fa = \g_{\rm ab} \oplus \k \subset \gz$ for the Lagrangian subalgebra used in Section~\ref{sec: edge modes}.

We denote by $G\big(\T^2_\infty\big)$ the factor of the Lie group $\Gz$ corresponding to the point at infinity. Concretely it is given by the tangent bundle $TG$ and as a Lie group it is isomorphic to $G \ltimes \g$. By a slight abuse of notation we will still denote by $G$ the Lie subgroup of $G\big(\T^2_\infty\big)$ identified with the subgroup $G \times \{ 0 \}$ of $G \ltimes \g$. Letting $G_{\rm ab} \subset G\big(\T^2_\infty\big)$, identified as $\{ \id \} \times \g \subset G \ltimes \g$, and $K \subset D$ denote the connected Lie subgroups corresponding to the Lagrangian subalgebras $\g_{\rm ab} \subset \g \otimes_\RR \T^2_\infty$ and $\k \subset \da$, then we also have the corresponding Lie subgroup $F = G_{\rm ab} \times K \subset \Gz = G\big(\T^2_\infty\big) \times D$.

Let $h_\infty \in C^\infty\big(\Sigma, G\big(\T^2_\infty\big)\big)$ be the component of the edge mode $h \in C^\infty\big(\Sigma, \Gz\big)$ at infinity. It~can be factorised uniquely as $h_\infty = v u$ for some $u \in C^\infty(\Sigma, G)$ and $v \in C^\infty(\Sigma, G_{\rm ab})$ relative to the global decomposition $G\big(\T^2_\infty\big) = G_{\rm ab} G$. Using the transformation~\eqref{gauge tr} with $f = \big(v^{-1}, \id_K\big)$ and $g = u$, we can then bring $h_\infty$ to the identity element.

Let $l \in C^\infty(\Sigma, D)$ denote the remaining components of the edge mode in $D$, so that we can write $h = \big(\id_{G(\T^2_\infty)}, l\big)$. The component of the condition~\eqref{L h constraint} at infinity then says that
$(\jb^\ast \L)|_{\g \otimes_\RR \T^2_\infty} \in \Omega^1(\Sigma, \g_{\rm ab})$,
which is equivalent to saying that the $1$-form $\L$ vanishes at infinity. In~terms of the notation introduced in the admissibility condition $(a)$ we therefore have $\L_{\rm c, \mu} = 0$ for $\mu = \sigma, \tau$. In~other words, having fixed the component of the edge mode at infinity to the identity, we will now focus on admissible $1$-forms $\L \in \Omega^1\big(X, \g^\CC\big)^\Pi$ of the form
\begin{gather} \label{admissible a inf}
\L_\mu = \sum_{\uz{y, q} \in \brb{\Pi \bzeta}} \frac{\L_\mu^{\uz{y, q}}}{(z-y)^{q+1}}.
\end{gather}
The remaining components of the constraint~\eqref{L h constraint} read
\begin{gather} \label{L l constraint}
\null^l (\jbt^\ast \L) \in \Omega^1(\Sigma, \k),
\end{gather}
where the map $\jbt^\ast$ is defined as in~\eqref{j pull} but with infinity removed, namely
\begin{gather} \label{j pull inf}
\jbt^\ast \colon\quad \Omega^1\big(X, \g^\CC\big)^\Pi \longrightarrow \Omega^1(\Sigma, \da), \qquad
\eta \longmapsto \bigg( \sum_{p=0}^{n_x - 1} \frac{1}{p!} \iota_x^\ast (\partial^p_z \eta) \otimes \varepsilon_x^p \bigg)_{x \in \bm z'}.
\end{gather}

Recall that in Section~\ref{sec: red to 2d} we assumed the existence of a collection of admissible $1$-forms $\L(h) \in \Omega^1\big(X, \g^\CC\big)^\Pi$ for each $h \in C^\infty\big(\Sigma, \Gz\big)$ satisfying the constraint~\eqref{L h constraint}. This allowed us to obtain the action~\eqref{ext action L} of a $2$d integrable field theory for the field $h \in C^\infty\big(\Sigma, \Gz\big)$ with associated Lax connection $\L(h)$. Moreover, we supposed that the 1-forms $\L(h)$ behave as~\eqref{L h gauge inv} under the gauge transformations~\eqref{gauge tr} of $h$, ensuring that these transformations define local symmetries of the action. In~the present subsection, we used part of these gauge symmetries to fix $h=(\id_{G(\T^2_\infty)}, l)$, with $l \in C^\infty(\Sigma, D)$.
Through this gauge fixing, finding admissible solutions~$\L(h)$ of the constraint~\eqref{L h constraint} for each $h \in C^\infty\big(\Sigma, \Gz\big)$, behaving under gauge transformations as in~\eqref{L h gauge inv}, then becomes equivalent to finding admissible $1$-forms $\L(l) \in \Omega^1\big(X, \g^\CC\big)^\Pi$ of the form~\eqref{admissible a inf} for each $l \in C^\infty(\Sigma, D)$, which solve the contraint equation~\eqref{L l constraint} and with the property that
\begin{gather} \label{constraint sol K-inv}
\jbt^\ast \L(kl) = \jbt^\ast \L(l)
\end{gather}
for all $k \in C^\infty(\Sigma, K)$. This last property follows from~\eqref{L h gauge inv} and describes the behaviour of~the collection of $1$-forms $\L(l)$ under what remains of the gauge symmetries~\eqref{gauge tr}, namely the transformations $l \mapsto k l$ for $k \in C^\infty(\Sigma, K)$. Performing the gauge fixing $h = (\id_{G(\T^2_\infty)}, l)$ in the action~\eqref{ext action L}, we then obtain
\begin{gather} \label{2d action}
S_{\rm 2d}(l) = - \frac{1}{2} \int_\Sigma \langle\!\langle l^{-1} {\rm d} l, \jbt^\ast \L(l) \rangle\!\rangle_\da - \frac 12 I^{\rm WZ}_{\da}[l],
\end{gather}
where the WZ-term $I^{\rm WZ}_{\da}[l]$ is defined in the same way as in~\eqref{WZ term}. By construction, the action~\eqref{2d action} is invariant under the residual gauge symmetry $l \mapsto k l$ with $k \in C^\infty(\Sigma, K)$, and defines an integrable field theory with Lax connection $\L(l)$.

\section[Integrable E-models from 4d Chern--Simons theory]
{Integrable $\boldsymbol{\Ec}$-models from 4d Chern--Simons theory}\label{sec: E-model from 4d}

In Section~\ref{sec: 2d from 4d} we reviewed the results of~\cite{Benini:2020skc} and arrived at the final expression~\eqref{2d action} for the action of a $2$d integrable field theory in the case when $\omega$ has a second order pole at infinity, i.e., $\omega$ is of~the form~\eqref{Eq:Omega}. Comparing the form of the action~\eqref{2d action} with that of the $\Ec$-model written in the form~\eqref{Eq:ActionJ} strongly suggests that the $2$d integrable field theories described by~\eqref{2d action} correspond to integrable $\Ec$-models.

Recall, however, that the derivation of the action~\eqref{2d action} hinges on the assumption made in~Section~\ref{sec: removing inf} that the constraint~\eqref{L l constraint} admits a solution $\L = \L(l) \in \Omega^1\big(X, \g^\CC\big)^\Pi$ in the subspace of admissible $\g^\CC$-valued $1$-forms, for every $l \in C^\infty(\Sigma, D)$, with the property~\eqref{constraint sol K-inv}. In~order to complete the description of the $2$d integrable field theory, it therefore remains to verify this assumption and explicitly construct solutions of the constraint equation~\eqref{L l constraint} within the admissible class of $1$-forms.

In order to construct a general class of solutions to the constraint~\eqref{L l constraint} in Section~\ref{sec: recover E-model}, we will see that we are naturally led to introduce an operator $\Ec \colon \da \to \da$ in Section~\ref{sec: admissible for E-model} which will correspond to the operator of the same name in the $\Ec$-model. The relationship between the actions~\eqref{Eq:ActionJ} and~\eqref{2d action} will then be made explicit in Section~\ref{sec: recover E-model}.

\subsection{The maps \texorpdfstring{$\bm j_{\bm z'}$}{jz} and \texorpdfstring{$\bm \pi_{\bm \zeta}$}{pizeta}} \label{sec: maps j pi}

Since the admissibility conditions $(a)$ and $(b)$ from Section~\ref{sec: red to 2d} are formulated in terms of the components $\L_\sigma$ and $\L_\tau$, it will be more convenient to express the constraint~\eqref{L l constraint} in terms of~these components as well.

Recall from Section~\ref{sec: red to 2d} the definition of the real vector space of $\Pi$-equivariant $V$-valued rational functions $R^\infty_{\Pi\bzeta}(V)^\Pi$ for any complex vector space $V$ equipped with an action of $\Pi$. Having removed the component of the edge mode at infinity in Section~\ref{sec: removing inf}, we are now working with admissible $1$-forms $\L \in \Omega^1\big(X, \g^\CC\big)^\Pi$ of the form~\eqref{admissible a inf}. It~is therefore convenient to introduce the subspace $R_{\Pi\bzeta}(V) \subset R^\infty_{\Pi\bzeta}(V)$ of $V$-valued rational functions which vanish at infinity.
This subspace is clearly stable under the action of $\Pi$ so that we may form the real vector space of $\Pi$-equivariants $R_{\Pi\bzeta}(V)^\Pi \subset R^\infty_{\Pi\bzeta}(V)^\Pi$. In~terms of this notation, we are therefore focusing on~the class of admissible $1$-form with components $\L_\sigma, \L_\tau \in R_{\Pi\bzeta}\big(C^\infty\big(\Sigma, \g^\CC\big)\big)^\Pi$.

We define, cf.~\eqref{j pull inf},
\begin{gather} \label{j z def}
\jb_{\bm z'} \colon\quad R_{\Pi\bzeta}\big(\g^\CC\big)^\Pi \longrightarrow \da, \qquad
f \longmapsto \bigg(\sum_{p=0}^{n_x - 1} \frac{1}{p!} (\partial^p_z f)|_x \otimes \varepsilon_x^p \bigg)_{x \in \bm z'},
\end{gather}
which returns the first $n_x$ terms in the Taylor expansion of the rational function at the set of~finite poles $\bm z'$ of $\omega$.
This extends component-wise to a morphism
\begin{gather*} 
\jb_{\bm z'} \colon\ R_{\Pi\bzeta}\big( C^\infty\big(\Sigma, \g^\CC\big) \big)^\Pi \longrightarrow C^\infty(\Sigma, \da).
\end{gather*}
Note that for an admissible $1$-form $\L = \L_\sigma {\rm d}\sigma + \L_\tau {\rm d}\tau \in \Omega^1\big(X, \g^\CC\big)$ we have the relation
$\jbt^\ast \L = \jb_{\bm z'} \L_\sigma {\rm d}\sigma + \jb_{\bm z'} \L_\tau {\rm d}\tau$ with the map $\jbt^\ast$ in~\eqref{j pull inf}. We~can then rewrite~\eqref{L l constraint} equivalently in~components as
\begin{gather} \label{L l constraint 2}
- \partial_\sigma l l^{-1} + \Ad_l (\jb_{\bm z'} \L_\sigma) \in C^\infty(\Sigma, \k), \qquad
- \partial_\tau l l^{-1} + \Ad_l (\jb_{\bm z'} \L_\tau) \in C^\infty(\Sigma, \k).
\end{gather}

We associate with the zeroes of $\omega$ the real vector space
\begin{gather} \label{gzeta def}
\gzeta \coloneqq \bigoplus_{\uz{y, q} \in \brb{\bzeta_{\rm r}}} \g \oplus \bigoplus_{\uz{y, q} \in \brb{\bzeta_{\rm c}}} \g^\CC,
\end{gather}
where $\g^\CC$ is regarded as a real vector space. Recall that $\brb{\bzeta}$ is the set of pairs $\uz{y, q}$ with $y \in \bm \zeta$ and $q = 0, \dots, m_y - 1$. Notice that the definition of $\gzeta$ in~\eqref{gzeta def} is very similar to that of $\gz$ in~\eqref{defect alg}. However, it is important to note that the former is only a vector space while the latter is a Lie algebra.

We shall also make use of the isomorphism
\begin{gather} \label{pi zeta def}
\bm \pi_{\bm \zeta} \colon\quad R_{\Pi\bzeta}\big(\g^\CC\big)^\Pi \overset{\cong}\longrightarrow \gzeta, \qquad
\sum_{\uz{y, q} \in \brb{\Pi \bzeta}} \frac{\ms u^{\uz{y,q}}}{(z-y)^{q+1}} \longmapsto \big(\ms u^{\uz{y, q}} \big)_{\uz{y, q} \in \brb{\bzeta}}
\end{gather}
which, as in the case of~\eqref{j z def}, extends component-wise to an isomorphism
\begin{gather} \label{pi zeta R}
\bm \pi_{\bm \zeta} \colon\ R_{\Pi\bzeta}\big(C^\infty\big(\Sigma, \g^\CC\big) \big)^\Pi \overset{\cong}\longrightarrow C^\infty\big(\Sigma, \gzeta\big).
\end{gather}
Applied explicitly to the components of the Lax connection in~\eqref{admissible a inf} this gives
\begin{gather*}
\bm \pi_{\bm \zeta} \L_\mu = \big( \L_\mu^{\uz{y, q}} \big)_{\uz{y, q} \in \bzeta}.
\end{gather*}
In particular, since~\eqref{pi zeta R} is an isomorphism, the field content of the Lax connection $\L$ in~\eqref{admissible a inf} is completely encoded in the collection of coefficients $\bm \pi_{\bm \zeta} \L_\mu \in C^\infty\big(\Sigma, \gzeta\big)$ in the partial fraction decomposition of its components at the set of zeroes $\bzeta$ of $\omega$.

\subsection[Admissible 1-forms for the E-model]
{Admissible 1-forms for the $\boldsymbol{\Ec}$-model}\label{sec: admissible for E-model}

The admissibility condition $(b)$ can easily be solved by choosing an $\epsilon_y \in \RR \setminus \{ 0 \}$ for each $y \in \bzeta_{\rm r}$ and an $\epsilon_y \in \CC \setminus \{ 0 \}$ for each $y \in \bzeta_{\rm c}$, and requiring that
\begin{gather}\label{Eq:Lz}
\L_\tau^{\uz{y, q}} = \epsilon_y \L_\sigma^{\uz{y, q}}
\end{gather}
for all $\uz{y, q} \in \brb{\Pi \bzeta}$, where we define $\epsilon_{\bar y} \coloneqq \overline{\epsilon_y}$ for $\bar y \in \bar\bzeta_{\rm c}$. This condition, which also appeared in~\cite{Benini:2020skc, Costello:2019tri, Delduc:2019whp, Vicedo:2019dej}, is motivated by the expression for the Lax connection in affine Gaudin models. In~the case when all the zeroes of $\omega$ are simple it takes the form \cite[equations~(2.39)--(2.40)]{Delduc:2019bcl}, which is to be compared with the expressions in the admissibility condition~\eqref{admissible a inf} for $m_y = 1$, combined with~\eqref{Eq:Lz}. In~order to make use of the condition~\eqref{Eq:Lz} to solve the constraint~\eqref{L l constraint 2}, it will be convenient to first reformulate it as a relation between $\jb_{\bm z'} \L_\sigma$ and $\jb_{\bm z'} \L_\tau$.

Consider the linear isomorphism
\begin{gather} \label{Etilde def}
\widetilde{\Ec} \colon\ \gzeta \overset{\cong}\longrightarrow \gzeta, \qquad
\big( \ms u^{\uz{y,q}} \big)_{\uz{y,q} \in \brb{\bzeta}} \longmapsto \big( \epsilon_y \ms u^{\uz{y,q}} \big)_{\uz{y,q} \in \brb{\bzeta}}.
\end{gather}
We may then rewrite~\eqref{Eq:Lz} as
\begin{gather} \label{admissible E def}
\bm \pi_{\bm \zeta} \L_\tau = \widetilde{\Ec}(\bm \pi_{\bm \zeta} \L_\sigma).
\end{gather}

\begin{Lemma} \label{lem: Cauchy}
We have an isomorphism of real vector spaces
\begin{gather*}
\C \coloneqq \jb_{\bm z'} \circ \bm \pi_{\bm \zeta}^{-1} \colon\ \gzeta \overset{\cong}\longrightarrow \da.
\end{gather*}
In particular, the linear map $\jb_{\bm z'}$ defined in~\eqref{j z def} is also an isomorphism.
\begin{proof}
Let $\ms U = \big(\ms u^{\uz{y,q}}\big)_{\uz{y,q} \in \brb{\bzeta}} \in \gzeta$. The components of $\jb_{\bm z'} \big(\bm \pi_{\bzeta}^{-1} \ms U\big) \in \da$ are obtained by taking the first $n_x$ terms in the Taylor expansion of the rational function $\bm \pi_{\bzeta}^{-1} \ms U$ at each $x \in \bz'$. For~the purpose of this proof, it is necessary to also consider separately the Taylor expansions at the conjugate poles $\bar x \in \bar \bz_{\rm c}$ for each $x \in \bz_{\rm c}$, even though these are related to the Taylor expansions at $x$ by the automorphism $\tau$. Explicitly, the coefficients of the expansions at all the poles $x \in \bZ'$ are given by
\begin{gather*}
\frac{1}{p!} \big( \partial_z^p (\bm \pi_{\bm \zeta}^{-1} \ms U) \big)\big|_x = \sum_{\uz{y,q} \in \brb{\Pi \bzeta}} C^{\up{x, p}}_{\quad\;\; \uz{y,q}} \ms u^{\uz{y,q}}
\end{gather*}
for all $\up{x,p} \in \bsb{\bZ'}$, where we have introduced the notation $\ms u^{\uz{\bar y,q}} \coloneqq \tau \big(\ms u^{\uz{y,q}}\big)$ for any $y \in \bzeta_{\rm c}$ and
\begin{gather} \label{Cauchy mat}
C^{\up{x, p}}_{\quad\;\; \uz{y,q}} \coloneqq \binom{p+q}{p} \frac{(-1)^p}{(x-y)^{p+q+1}}
\end{gather}
for all $\up{x,p} \in \bsb{\bZ'}$ and $\uz{y,q} \in \brb{\Pi \bzeta}$.
The expressions in~\eqref{Cauchy mat} are the components of what is known as a confluent Cauchy matrix, see for instance \cite[equation~(13)]{Vavrin}.

The map $\C \colon \gzeta \to \da$ is then defined for any
\[
\ms U = \big(\ms u^{\uz{y,q}}\big)_{\uz{y,q} \in \brb{\bzeta}} \in \gzeta$ by $\C(\ms U) = \big( \C (\ms U)^{x, p} \otimes \varepsilon_x^p \big)_{\up{x, p} \in \bsb{\bz'}},
\]
where
\begin{gather} \label{map C explicit}
\C (\ms U)^{x, p} = \sum_{\uz{y,q} \in \brb{\bzeta_{\rm r}}} C^{\up{x, p}}_{\quad\;\; \uz{y,q}} \ms u^{\uz{y,q}} + \sum_{\uz{y,q} \in \brb{\bzeta_{\rm c}}} \Big( C^{\up{x, p}}_{\quad\;\; \uz{y,q}} \ms u^{\uz{y,q}} + C^{\up{x, p}}_{\quad\;\; \uz{\bar y,q}} \tau \big(\ms u^{\uz{y,q}}\big) \Big),
\end{gather}
in terms of the Cauchy matrix~\eqref{Cauchy mat}.

On the other hand, since $\omega$ is a meromorphic differential with zeroes at each $y \in \Pi \bzeta$ of order~$m_y$ and poles at each $x \in \bZ$ of order $n_x$, we have
\begin{gather*}
\sum_{y \in \Pi \bzeta} m_y = \sum_{x \in \bZ} n_x - 2.
\end{gather*}
In other words, since we are assuming that $n_\infty = 2$, this yields
\begin{gather*}
\sum_{y \in \Pi \bzeta} m_y = \sum_{x \in \bZ'} n_x.
\end{gather*}
It follows that~\eqref{Cauchy mat} are the components of a \emph{square} confluent Cauchy matrix which is known to be invertible \cite[Corollary 10]{Vavrin}.

The inverse $\C^{-1} \colon \da \to \gzeta$ is then given explicitly, for any $\ms V = \big( \ms v^{x,p} \otimes \varepsilon^p_x \big)_{\up{x, p} \in \bsb{\bz'}} \in \da$, by~$\C^{-1} (\ms V) = \big( \C^{-1}(\ms V)^{\uz{y, q}} \big)_{\uz{y, q} \in \brb{\bzeta}}$, where
\begin{gather}
\C^{-1} (\ms V)^{\uz{y, q}} = \sum_{\up{x,p} \in \bsb{\bz_{\rm r}}} \big(C^{-1}\big)_{\quad\;\; \up{x, p}}^{\uz{y,q}} \ms v^{x,p} \notag
\\ \hphantom{\C^{-1} (\ms V)^{\uz{y, q}} =}
{} + \sum_{\up{x,p} \in \bsb{\bz_{\rm c}}} \Big( \big(C^{-1}\big)_{\quad\;\; \up{x, p}}^{\uz{y,q}} \ms v^{x,p} + \big(C^{-1}\big)_{\quad\;\; \up{\bar x, p}}^{\uz{y,q}} \tau (\ms v^{x,p}) \Big),
\label{map C inv explicit}
\end{gather}
in terms of the inverse $\big(C^{-1}\big)_{\quad\;\; \up{x, p}}^{\uz{y,q}}$ of the Cauchy matrix~\eqref{Cauchy mat}.
\end{proof}
\end{Lemma}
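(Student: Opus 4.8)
The plan is to reduce the statement to the invertibility of the single map $\jb_{\bm z'}$, and to establish the latter by a dimension count combined with an injectivity argument based on counting zeroes and poles on $\CP$. Since $\bm\pi_{\bm\zeta}$ is already an isomorphism, cf.~\eqref{pi zeta def}, the map $\C = \jb_{\bm z'} \circ \bm\pi_{\bm\zeta}^{-1}$ is an isomorphism precisely when $\jb_{\bm z'} \colon R_{\Pi\bzeta}(\g^\CC)^\Pi \to \da$ is. Concretely, $\jb_{\bm z'}$ sends a $\Pi$-equivariant rational function $f$, vanishing at infinity with poles confined to the zeroes of $\omega$, to the tuple of its first $n_x$ Taylor coefficients at the finite poles $x \in \bm z'$.

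First I would match dimensions. Using $\dim_\RR(\g^\CC \otimes_\CC \T^{n_x}_x)_\RR = 2 n_x \dim_\RR\g$ at complex poles and $\dim_\RR(\g \otimes_\RR \T^{n_x}_x) = n_x \dim_\RR\g$ at real poles, and pairing each complex pole with its conjugate, one gets $\dim_\RR\da = (\dim_\RR\g)\sum_{x \in \bZ'} n_x$. The same bookkeeping over the zeroes yields $\dim_\RR\gzeta = (\dim_\RR\g)\sum_{y \in \Pi\bzeta} m_y$. The degree $-2$ of the canonical divisor of $\CP$ gives $\sum_{y \in \Pi\bzeta} m_y = \sum_{x \in \bZ} n_x - 2$, which the standing assumption $n_\infty = 2$ converts into $\sum_{y \in \Pi\bzeta} m_y = \sum_{x \in \bZ'} n_x$. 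Hence $\dim_\RR\gzeta = \dim_\RR\da$, and it suffices to prove that $\jb_{\bm z'}$ is injective.

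For injectivity, suppose $\jb_{\bm z'}(f) = 0$ for some $f \in R_{\Pi\bzeta}(\g^\CC)^\Pi$, and assume $f \neq 0$. Then $f$ vanishes to order at least $n_x$ at every $x \in \bm z'$, and the $\Pi$-equivariance, which relates the jet of $f$ at $x$ to that at $\bar x$ through $\tau$, forces the same vanishing at each conjugate pole. Thus $f$ has zeroes of total order at least $\sum_{x \in \bZ'} n_x$ at the finite poles of $\omega$, together with an additional zero at infinity; its poles, on the other hand, lie only in $\Pi\bzeta$ and have total order at most $\sum_{y \in \Pi\bzeta} m_y$. Equating the numbers of zeroes and poles of a nonzero rational function on $\CP$ then forces $\sum_{x \in \bZ'} n_x + 1 \leq \sum_{y \in \Pi\bzeta} m_y = \sum_{x \in \bZ'} n_x$, a contradiction. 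Hence $f = 0$, so $\jb_{\bm z'}$ is injective and therefore an isomorphism; composing with $\bm\pi_{\bm\zeta}^{-1}$ yields the isomorphism $\C$.

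The step I expect to require the most care is the reality bookkeeping: one must verify that the real structure is compatible with the complex-analytic counting, i.e.\ that vanishing of the $\Pi$-equivariant Taylor jets at the independent poles $\bm z'$ genuinely propagates to the conjugate poles. A more computational alternative would realise $\C$ explicitly as a block matrix whose scalar entries form the confluent Cauchy matrix $C^{\up{x,p}}_{\;\;\uz{y,q}} = \binom{p+q}{p}(-1)^p/(x-y)^{p+q+1}$, check squareness via the same degree identity, and then invoke the known invertibility of confluent Cauchy matrices; this route has the benefit of producing $\C^{-1}$ in closed form, which is convenient for the subsequent construction.
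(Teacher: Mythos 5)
Your proposal is correct, but it proves the lemma by a genuinely different route than the paper. The paper's proof is matrix-theoretic: it writes out the components of $\C$ explicitly, recognises them as a confluent Cauchy matrix $C^{\up{x,p}}_{\quad\uz{y,q}} = \binom{p+q}{p}(-1)^p/(x-y)^{p+q+1}$, uses the degree identity $\sum_{y \in \Pi\bzeta} m_y = \sum_{x \in \bZ'} n_x$ (a consequence of $n_\infty = 2$) to conclude the matrix is square, and then cites the known invertibility of square confluent Cauchy matrices. You instead combine the same degree identity, now read as the equality $\dim_\RR \gzeta = \dim_\RR \da$, with an injectivity argument: a nonzero $f \in R_{\Pi\bzeta}\big(\g^\CC\big)^\Pi$ in the kernel of $\jb_{\bm z'}$ would have zeroes of total order at least $\sum_{x\in\bZ'} n_x$ at the finite poles of $\omega$ (equivariance correctly propagating the vanishing to conjugate poles) plus at least one more at infinity, against poles of total order at most $\sum_{y\in\Pi\bzeta} m_y$, contradicting the vanishing of the degree of a principal divisor on $\CP$. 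The extra zero at infinity, which comes from the definition of $R_{\Pi\bzeta}$ as rational functions vanishing there, is exactly what makes the inequality strict, and you rightly single it out. Your argument is self-contained and avoids the external reference, and in essence re-derives the invertibility of the confluent Cauchy matrix; what it does not deliver is the closed-form inverse $\big(C^{-1}\big)^{\uz{y,q}}_{\quad\up{x,p}}$, which the paper records in~\eqref{map C inv explicit} and leans on later (in the explicit expression~\eqref{map E explicit} for $\Ec$ and in the worked examples of Section~\ref{sec: examples}). Since you flag this and sketch the Cauchy-matrix route as the computational alternative, nothing is missing for the lemma as stated.
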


Recall from Section~\ref{sec: maps j pi} that the field content of the Lax connection $\L$ in~\eqref{admissible a inf} is encoded in the collection of coefficients $\bm \pi_{\bm \zeta} \L_\mu \in C^\infty\big(\Sigma, \gzeta\big)$. By Lemma~\ref{lem: Cauchy}, the field content of $\L$ is equivalently encoded in the jets $\jb_{\bm z'} \L_\mu \in C^\infty(\Sigma, \da)$ of its components at the set of finite poles $\bz'$ of $\omega$.

Applying the linear isomorphism $\C$ from Lemma~\ref{lem: Cauchy} to both sides of the relation~\eqref{admissible E def} we may obtain
\begin{gather}\label{Eq:Lz 2}
\jb_{\bm z'} \L_\tau = \Ec (\jb_{\bm z'} \L_\sigma),
\end{gather}
where we have defined the linear isomorphism
\begin{gather} \label{E def}
\Ec \coloneqq \C \circ \widetilde{\Ec} \circ \C^{-1} \colon\ \d \overset{\cong}\longrightarrow \d.
\end{gather}

Using the explicit forms~\eqref{Etilde def},~\eqref{map C explicit} and~\eqref{map C inv explicit} for the linear maps $\widetilde\Ec$, $\C$ and $\C^{-1}$, we may exp\-ress the linear isomorphism~\eqref{E def} in components as follows. For~every $\ms U = \big( \ms u^{y,q} \otimes \varepsilon^q_y \big)_{\up{y, q} \in \bsb{\bz'}} \in \da$ we have $\Ec (\ms U) = \big( \Ec (\ms U)^{x, p} \otimes \varepsilon^p_x \big)_{\up{x, p} \in \bsb{\bz'}}$, where
\begin{gather} \label{map E explicit}
\Ec (\ms U)^{x, p} = \sum_{\up{y,q} \in \bsb{\bz_{\rm r}}} E^{\up{x, p}}_{\quad\;\; \up{y,q}} \ms u^{y,q} + \sum_{\up{y,q} \in \bsb{\bz_{\rm c}}} \Big( E^{\up{x, p}}_{\quad\;\; \up{y,q}} \ms u^{y,q} + E^{\up{x, p}}_{\quad\;\; \up{\bar y,q}} \tau (\ms u^{y,q}) \Big),
\end{gather}
for some coefficients $E^{\up{x, p}}_{\quad\;\; \up{y,q}}$ expressible in terms of the Cauchy matrix~\eqref{Cauchy mat}, its inverse and the choice of $\epsilon_y$ for $y \in \bzeta$.

\subsection[Properties of E]{Properties of $\boldsymbol{\Ec}$}\label{SSec:EQ}

Recall that the Lie algebra $\da$ defined in Section~\ref{sec: removing inf} comes equipped with a non-degenerate symmetric invariant bilinear form $\langle\!\langle \cdot, \cdot \rangle\!\rangle_\da$. To study the symmetry property of $\Ec$ with respect to the latter, it is convenient to pull back this bilinear form to the vector space $\gzeta$ along the linear isomorphism from Lemma~\ref{lem: Cauchy} since the action of $\widetilde{\Ec}$ is much simpler. It~is useful to do this in two steps, by first pulling back $\langle\!\langle \cdot, \cdot \rangle\!\rangle_\da$ to $R_{\Pi\bzeta}\big(\g^\CC\big)^\Pi$ along the isomorphism~\eqref{j z def}, and then to~$\gzeta$ along the inverse of the isomorphism $\bm \pi_{\bm \zeta}$ given in~\eqref{pi zeta def}.

We define the non-degenerate symmetric bilinear form
\begin{subequations} \label{bilinear R zeta}
\begin{gather}
\langle\!\langle \cdot, \cdot \rangle\!\rangle_\omega \colon\ R_{\Pi\bzeta}\big(\g^\CC\big)^\Pi \times R_{\Pi\bzeta}\big(\g^\CC\big)^\Pi \longrightarrow \RR,
\end{gather}
defined for any $f, g \in R_{\Pi\bzeta}\big(\g^\CC\big)^\Pi$ by
\begin{gather}
\langle\!\langle f, g \rangle\!\rangle_\omega \coloneqq \sum_{x \in \bm z'} \frac{2}{|\Pi_x|} \Re \big( \!\res_x \langle f, g \rangle \omega \big) = - \sum_{y \in \bm \zeta} \frac{2}{|\Pi_y|} \Re\big( \!\res_y \langle f, g \rangle \omega \big).
\end{gather}
\end{subequations}
The equality here follows from the vanishing of the sum of residues, after observing that the poles of $\langle f, g \rangle \omega$ belong to the set $\bZ' \sqcup \Pi \bzeta = \Pi \bz' \sqcup \Pi \bzeta$. (Observe that infinity is not a pole of~$\langle f, g \rangle \omega$ because the double pole of $\omega$ at infinity is compensated by the simple zeroes of $f$ and~$g$ there.) Note in particular that
\begin{gather*}
\sum_{x \in \bz'} \frac{2}{|\Pi_x|} \Re \big( \!\res_x \langle f, g \rangle \omega \big) = \sum_{x \in \bZ'} \res_x \langle f, g \rangle \omega
\end{gather*}
and similarly for the sum of residues at the zeroes $\bzeta$.

\begin{Lemma} \label{lem: bilinear forms 1}
For any $f, g \in R_{\Pi\bzeta}\big(\g^\CC\big)^\Pi$, we have $\langle\!\langle \jb_{\bm z'} f, \jb_{\bm z'} g \rangle\!\rangle_\da = \langle\!\langle f, g \rangle\!\rangle_\omega$.
\begin{proof}
Let $f, g \in R_{\Pi\bzeta}\big(\g^\CC\big)^\Pi$. By definition we have
\begin{align*}
\langle\!\langle f, g \rangle\!\rangle_\omega &= \sum_{x \in \bm z'} \frac{2}{|\Pi_x|} \Re \big( \!\res_x \langle f, g \rangle \omega \big)
= \sum_{x \in \bm z'} \sum_{p=0}^{n_x-1} \frac{2}{|\Pi_x|} \Re \bigg( \!\res_x \langle f, g \rangle \frac{\ell^x_p {\rm d}z}{(z - x)^{p+1}} \bigg)\\
&= \sum_{x \in \bm z'} \sum_{p=0}^{n_x-1} \frac{2}{|\Pi_x|} \Re \bigg( \frac{\ell^x_p}{p!} \big( \partial_z^p \langle f, g \rangle \big)\big|_x \bigg)\\
&= \sum_{x \in \bm z'} \sum_{p=0}^{n_x-1} \sum_{q=0}^p \frac{2}{|\Pi_x|} \Re \bigg( \ell^x_p \bigg\langle \frac{1}{q!} (\partial_z^q f)|_x, \frac{1}{(p-q)!} (\partial_z^{p-q} g)|_x \bigg\rangle \bigg)\\
&= \sum_{x \in \bm z'} \sum_{q,r=0}^{n_x-1} \frac{2}{|\Pi_x|} \Re \bigg( \ell^x_{q+r} \bigg\langle \frac{1}{q!} (\partial_z^q f)|_x, \frac{1}{r!} (\partial_z^r g)|_x \bigg\rangle \bigg)
= \langle\!\langle \jb_{\bm z'} f, \jb_{\bm z'} g \rangle\!\rangle_\da,
\end{align*}
where in the second equality we used the explicit expression~\eqref{Eq:Omega} for $\omega$, dropping the term with~$\ell^\infty_1$ since it does not contribute to the residue at any of the finite poles $x \in \bm z'$. In~the second last step we changed variable from $p$ to $r = p-q$ and used the convention that $\ell^x_p = 0$ for~$p \geq n_x$. The last equality is by definition~\eqref{form on gz} of the induced bilinear form $\langle\!\langle \cdot, \cdot \rangle\!\rangle_\da$ on~$\da \subset \gz$ and of the map $\jb_{\bm z'}$ in~\eqref{j z def}.
\end{proof}
\end{Lemma}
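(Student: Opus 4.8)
The plan is to evaluate $\langle\!\langle f, g \rangle\!\rangle_\omega$ directly from its residue definition over the finite poles and to match the outcome, term by term, with the explicit component formula~\eqref{form on gz} for $\langle\!\langle \cdot, \cdot \rangle\!\rangle_\da$ evaluated on the jets $\jb_{\bm z'} f$ and $\jb_{\bm z'} g$. The key preliminary remark is that, for $f, g \in R_{\Pi\bzeta}(\g^\CC)^\Pi$, the scalar rational function $\langle f, g \rangle$ has poles only at the zeroes $\Pi\bzeta$ of $\omega$ and is therefore holomorphic at each finite pole $x \in \bm z'$. Consequently only the principal part $\sum_{p=0}^{n_x-1} \ell^x_p (z-x)^{-(p+1)}$ of $\omega$ contributes to $\res_x \langle f, g\rangle\,\omega$, while the constant term $-\ell^\infty_1$ drops out of every finite residue.

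First I would apply, for each $x \in \bm z'$ and each $p$, the elementary identity $\res_x\big( h(z)\, {\rm d}z/(z-x)^{p+1}\big) = \frac{1}{p!}(\partial_z^p h)|_x$ for $h$ holomorphic at $x$, with $h = \langle f, g\rangle$; this turns the residue into a finite sum of Taylor coefficients of $\langle f, g\rangle$ at $x$. Next I would expand $(\partial_z^p\langle f,g\rangle)|_x$ by the Leibniz rule, using bilinearity of $\langle \cdot, \cdot\rangle$, to obtain $\sum_{q=0}^p \binom{p}{q}\langle (\partial_z^q f)|_x,(\partial_z^{p-q} g)|_x\rangle$. Distributing $\binom{p}{q}/p! = 1/\big(q!\,(p-q)!\big)$ reconstitutes exactly the normalised Taylor coefficients $\frac{1}{q!}(\partial_z^q f)|_x$ that appear in the definition~\eqref{j z def} of $\jb_{\bm z'}$.

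The final step is purely combinatorial: substituting $r = p-q$ turns the double sum over $0 \le q \le p \le n_x - 1$ into a sum over $0 \le q, r \le n_x-1$ with weight $\ell^x_{q+r}$, where I invoke the convention $\ell^x_p = 0$ for $p \ge n_x$ to free the summation ranges. Carrying along the overall factor $\frac{2}{|\Pi_x|}\Re(\cdot)$ unchanged, this is precisely the pairing~\eqref{form on gz} of the components $\frac{1}{q!}(\partial_z^q f)|_x \otimes \varepsilon_x^q$ and $\frac{1}{r!}(\partial_z^r g)|_x \otimes \varepsilon_x^r$ of $\jb_{\bm z'} f$ and $\jb_{\bm z'} g$, summed over $x$. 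I do not anticipate any real obstacle here: the lemma merely records that the residue description and the Taylor-jet description of the same quadratic pairing agree. The only items demanding attention are the bookkeeping of the reality projection $\Re$ and the weights $2/|\Pi_x|$, and the (immediate) fact that $f$ and $g$ are regular at $\bm z'$ so that the residue-to-derivative formula applies.
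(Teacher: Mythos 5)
Your proposal is correct and follows essentially the same route as the paper's proof: drop the $-\ell^\infty_1$ term, convert each finite residue into Taylor coefficients via $\res_x\big(h\,{\rm d}z/(z-x)^{p+1}\big)=\tfrac{1}{p!}(\partial_z^p h)|_x$, expand by Leibniz, and resum with $r=p-q$ using the convention $\ell^x_p=0$ for $p\geq n_x$ to match the definition~\eqref{form on gz}. The only (welcome) addition is your explicit remark that $\langle f,g\rangle$ is holomorphic at each $x\in\bm z'$, which the paper leaves implicit.
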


We introduce the symmetric bilinear form
\begin{subequations} \label{bilinear g zeta}
\begin{gather}
\langle\!\langle \cdot, \cdot \rangle\!\rangle_{\gzeta} \colon\ \gzeta \times \gzeta \longrightarrow \RR,
\end{gather}
defined for any $\ms U = (\ms u^{\uz{x,p}})_{\uz{x,p}\in \brb{\bzeta}}$, $\ms V = (\ms v^{\uz{y, q}})_{\uz{y,q} \in \brb{\bzeta}} \in \gzeta$ by
\begin{gather}
\langle\!\langle \ms U, \ms V \rangle\!\rangle_{\gzeta} \coloneqq - \sum_{y \in \bm \zeta} \sum_{\substack{p, q = 0\\ p+q \geq m_y - 1}}^{m_y - 1} \frac{2}{|\Pi_y|} \Re \bigg( \frac{\big( \partial_z^{p+q+1-m_y} \psi_y\big)(y)}{(p+q+1-m_y)!} \big\langle \ms u^{\uz{y, p}}, \ms v^{\uz{y, q}} \big\rangle \bigg).
\end{gather}
\end{subequations}
Here we wrote the twist function $\varphi(z)$ defined in~\eqref{Eq:Omega} as $\varphi(z) = \psi_y(z) (z - y)^{m_y}$ with $\psi_y(y) \neq 0$ using the fact that it has a zero of order $m_y$ at $y \in \bm \zeta$. This definition is motivated by the following lemma.

\begin{Lemma} \label{lem: bilinear forms 2}
For any $f, g \in R_{\Pi\bzeta}\big(\g^\CC\big)^\Pi$, we have $\langle\!\langle \bm\pi_{\bzeta} f, \bm\pi_{\bzeta} g \rangle\!\rangle_{\gzeta} = \langle\!\langle f, g \rangle\!\rangle_\omega$.
In particular, for any $\ms U, \ms V \in \gzeta$ we have $\langle\!\langle \ms U, \ms V \rangle\!\rangle_{\gzeta} = \langle\!\langle \C\ms U, \C\ms V \rangle\!\rangle_\d$.
\end{Lemma}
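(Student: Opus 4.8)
The plan is to evaluate $\langle\!\langle f, g \rangle\!\rangle_\omega$ through its expression as a sum of residues at the zeroes of $\omega$ and to check term by term that the result matches the defining formula~\eqref{bilinear g zeta} for $\langle\!\langle \bm\pi_{\bzeta} f, \bm\pi_{\bzeta} g \rangle\!\rangle_{\gzeta}$. I~would fix $f, g \in R_{\Pi\bzeta}\big(\g^\CC\big)^\Pi$ and denote by $\ms u^{\uz{y,p}}$, $\ms v^{\uz{y,q}}$ the coefficients of their partial fraction decompositions~\eqref{pi zeta def}, so that near a zero $y \in \Pi\bzeta$ the principal parts of $f$ and $g$ are $\sum_{p=0}^{m_y-1} \ms u^{\uz{y,p}}(z-y)^{-p-1}$ and $\sum_{q=0}^{m_y-1} \ms v^{\uz{y,q}}(z-y)^{-q-1}$. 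Since both sides are sums over $y \in \bm\zeta$ carrying identical $\tfrac{2}{|\Pi_y|}\Re(\cdot)$ prefactors --- which account for the real/complex dichotomy exactly as in the proof of Lemma~\ref{lem: bilinear forms 1} --- it suffices to compute $\res_y \langle f, g\rangle\omega$ for a single $y$.

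The first key step is to observe that only the product of the two principal parts contributes to this residue. Writing $\omega = \psi_y(z)(z-y)^{m_y}{\rm d}z$ with $\psi_y(y)\neq 0$, any contribution to $\langle f, g\rangle$ containing a factor holomorphic at $y$ carries a pole of order at most $m_y$, which is killed upon multiplication by the zero $(z-y)^{m_y}$ of $\omega$; as $\psi_y$ is holomorphic, such terms give no residue. The second step is then to extract the coefficient of $(z-y)^{-1}$ from
\begin{gather*}
\sum_{p,q=0}^{m_y-1} \langle \ms u^{\uz{y,p}}, \ms v^{\uz{y,q}}\rangle\, \frac{\psi_y(z)}{(z-y)^{p+q+2-m_y}}.
\end{gather*}
Expanding $\psi_y$ in a Taylor series about $y$, the $(p,q)$-term selects the derivative of order $p+q+1-m_y$, which exists only when $p+q\geq m_y-1$. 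This reproduces
\begin{gather*}
\res_y \langle f, g\rangle\omega = \sum_{\substack{p,q=0\\ p+q\geq m_y-1}}^{m_y-1} \frac{\big(\partial_z^{p+q+1-m_y}\psi_y\big)(y)}{(p+q+1-m_y)!}\,\big\langle \ms u^{\uz{y,p}}, \ms v^{\uz{y,q}}\big\rangle,
\end{gather*}
and inserting it into $\langle\!\langle f, g\rangle\!\rangle_\omega = -\sum_{y\in\bm\zeta}\tfrac{2}{|\Pi_y|}\Re\big(\res_y\langle f, g\rangle\omega\big)$ yields exactly~\eqref{bilinear g zeta}, proving the first identity.

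For the ``in particular'' statement I would simply set $f \coloneqq \bm\pi_{\bzeta}^{-1}\ms U$ and $g \coloneqq \bm\pi_{\bzeta}^{-1}\ms V$, so that $\C\ms U = \jb_{\bm z'} f$ and $\C\ms V = \jb_{\bm z'} g$ by the definition of $\C$ in Lemma~\ref{lem: Cauchy}. Chaining the identity just established with Lemma~\ref{lem: bilinear forms 1} then gives $\langle\!\langle \ms U, \ms V\rangle\!\rangle_{\gzeta} = \langle\!\langle f, g\rangle\!\rangle_\omega = \langle\!\langle \jb_{\bm z'} f, \jb_{\bm z'} g\rangle\!\rangle_\d = \langle\!\langle \C\ms U, \C\ms V\rangle\!\rangle_\d$. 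The only genuinely delicate point in the whole argument is the residue bookkeeping above: verifying that all cross terms drop out and that the surviving double sum is correctly truncated by the condition $p+q\geq m_y-1$. Everything else is a direct substitution.
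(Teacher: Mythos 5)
Your proof is correct and follows essentially the same route as the paper's: both evaluate $\langle\!\langle f, g \rangle\!\rangle_\omega$ as a sum of residues at the zeroes, reduce to the product of the principal parts of $f$ and $g$ at each $y$ using the order-$m_y$ zero of $\omega$, and extract the surviving Taylor coefficient of $\psi_y$, which is nonzero precisely when $p+q \geq m_y - 1$. The ``in particular'' step via Lemma~\ref{lem: bilinear forms 1} is also identical to the paper's.
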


\begin{proof}
Let $f, g \in R_{\Pi\bzeta}\big(\g^\CC\big)^\Pi$ which we can write out explicitly as
\begin{gather*}
f(z) = \sum_{\uz{y,p} \in \brb{\Pi \bzeta}} \frac{\ms u^{\uz{y,p}}}{(z-y)^{p+1}}, \qquad
g(z) = \sum_{\uz{x,q} \in \brb{\Pi \bzeta}} \frac{\ms v^{\uz{x,q}}}{(z-x)^{q+1}}.
\end{gather*}
Using the second expression for the bilinear form in~\eqref{bilinear R zeta} we then find
\begin{align*}
\langle\!\langle f, g \rangle\!\rangle_\omega &= - \sum_{y \in \bm \zeta} \frac{2}{|\Pi_y|} \Re\big( \!\res_y \langle f, g \rangle \omega \big) = - \sum_{y \in \bzeta} \sum_{p=0}^{m_y-1} \frac{2}{|\Pi_y|} \Re \bigg( \res_y \bigg\langle \frac{\ms u^{\uz{y,p}}}{(z-y)^{p+1}}, \varphi g \bigg\rangle {\rm d}z \bigg)
\\
&= - \sum_{y \in \bm \zeta} \sum_{p = 0}^{m_y-1} \frac{2}{|\Pi_y|} \Re \bigg\langle \ms u^{\uz{y,p}}, \frac{1}{p!} \big( \partial_z^p (\varphi g) \big) \big|_y \bigg\rangle\\
&= - \sum_{y \in \bm \zeta} \sum_{p,q = 0}^{m_y-1} \frac{2}{|\Pi_y|} \Re \bigg( \frac{1}{p!} \partial_z^p \bigg( \frac{\varphi(z)}{(z-y)^{q+1}} \bigg)\bigg|_y \langle \ms u^{\uz{y,p}}, \ms v^{\uz{y, q}} \rangle \bigg).
\end{align*}
In the second equality we wrote $\omega = \varphi {\rm d}z$. In~the third equality we used the fact that $\varphi g$ is regular at $\bm \zeta$, so that only the poles from $f$ contribute to the residue at each $y \in \bm \zeta$, and took the residue. In~the last equality we have used the fact that the terms in $\partial_z^q(\varphi g)$ coming from the poles of $g$ at $x \neq y$ all vanish at $y$.
Using $\varphi(z) = \psi_y(z) (z - y)^{m_y}$ we find
\begin{gather*}
\frac{1}{p!} \partial_z^p \bigg( \frac{\varphi(z)}{(z-y)^{q+1}} \bigg)\bigg|_y = \frac{1}{p!} \partial_z^p \big( \psi_y(z) (z-y)^{m_y-q-1} \big)\big|_y
= \frac{\big( \partial_z^{p+q+1-m_y} \psi_y\big)(y)}{(p+q+1-m_y)!},
\end{gather*}
where in the last step we have used the Leibniz rule and the fact that if any term still contains a factor of $(z-y)$ it will vanish upon setting $z=y$.

\looseness=1
The final statement follows form the above, Lemma~\ref{lem: bilinear forms 1} and the definition of $\C$ in Lem\-ma~\ref{lem: Cauchy}.
\end{proof}

\begin{Corollary}\label{CorE}
$\Ec$ is symmetric with respect to $\langle\!\langle \cdot, \cdot \rangle\!\rangle_\da$.
\end{Corollary}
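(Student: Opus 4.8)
The plan is to reduce the symmetry of $\Ec$ on $(\da, \langle\!\langle \cdot, \cdot \rangle\!\rangle_\da)$ to the manifest symmetry of $\widetilde{\Ec}$ on $(\gzeta, \langle\!\langle \cdot, \cdot \rangle\!\rangle_{\gzeta})$, transporting everything along the isomorphism $\C$ of Lemma~\ref{lem: Cauchy}. The crucial input is the second assertion of Lemma~\ref{lem: bilinear forms 2}, namely that $\C$ is an isometry: $\langle\!\langle \C \ms U, \C \ms V \rangle\!\rangle_\da = \langle\!\langle \ms U, \ms V \rangle\!\rangle_{\gzeta}$ for all $\ms U, \ms V \in \gzeta$.

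First I would use the factorisation $\Ec = \C \circ \widetilde{\Ec} \circ \C^{-1}$ from~\eqref{E def} and write arbitrary elements of $\da$ as $\C \ms u$, $\C \ms v$ with $\ms u, \ms v \in \gzeta$. Applying the isometry property twice gives
\[
\langle\!\langle \C \ms u, \Ec (\C \ms v) \rangle\!\rangle_\da = \langle\!\langle \C \ms u, \C \widetilde{\Ec} \ms v \rangle\!\rangle_\da = \langle\!\langle \ms u, \widetilde{\Ec} \ms v \rangle\!\rangle_{\gzeta},
\]
and symmetrically $\langle\!\langle \Ec (\C \ms u), \C \ms v \rangle\!\rangle_\da = \langle\!\langle \widetilde{\Ec} \ms u, \ms v \rangle\!\rangle_{\gzeta}$. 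Since $\C$ is surjective, this shows that $\Ec$ is symmetric with respect to $\langle\!\langle \cdot, \cdot \rangle\!\rangle_\da$ if and only if $\widetilde{\Ec}$ is symmetric with respect to $\langle\!\langle \cdot, \cdot \rangle\!\rangle_{\gzeta}$, so the problem is reduced to the much simpler operator $\widetilde{\Ec}$.

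It then remains to check the symmetry of $\widetilde{\Ec}$ directly from its definition~\eqref{Etilde def} and the explicit form~\eqref{bilinear g zeta}. The key observation is that $\langle\!\langle \cdot, \cdot \rangle\!\rangle_{\gzeta}$ is block-diagonal in the zeroes: it pairs only components $\ms u^{\uz{y, p}}$ and $\ms v^{\uz{y, q}}$ attached to one and the same zero $y \in \bzeta$, whereas $\widetilde{\Ec}$ acts on the entire $y$-block by the single scalar $\epsilon_y$. Because the extended form $\langle \cdot, \cdot \rangle$ on $\g^\CC$ is complex bilinear, the scalar $\epsilon_y$ may be extracted from either argument, producing in both cases the identical quantity $\epsilon_y \langle \ms u^{\uz{y, p}}, \ms v^{\uz{y, q}} \rangle$ inside each $\Re(\cdot)$; hence $\langle\!\langle \widetilde{\Ec} \ms U, \ms V \rangle\!\rangle_{\gzeta} = \langle\!\langle \ms U, \widetilde{\Ec} \ms V \rangle\!\rangle_{\gzeta}$.

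The only mild subtlety — which I would flag explicitly — is the complex case $y \in \bzeta_{\rm c}$, where $\epsilon_y \in \CC$ rather than $\RR$ and the bilinear form carries a real part $\Re(\cdot)$. This causes no difficulty: since $\epsilon_y$ multiplies the same complex quantity whether it is pulled out of the left or the right slot, the two expressions coincide \emph{before} taking $\Re(\cdot)$, so no separate reality argument is needed. There is therefore no genuine obstacle here; the content of the corollary is carried entirely by the isometry statement of Lemma~\ref{lem: bilinear forms 2} together with the blockwise-scalar nature of $\widetilde{\Ec}$.
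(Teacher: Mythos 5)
Your proof is correct and follows essentially the same route as the paper: the paper's own argument is exactly to write $\langle\!\langle \ms U, \Ec \ms V \rangle\!\rangle_\da = \langle\!\langle \C^{-1}\ms U, \widetilde{\Ec}\,\C^{-1}\ms V \rangle\!\rangle_{\gzeta}$ using the isometry statement of Lemma~\ref{lem: bilinear forms 2} and then invoke the (``clearly'' manifest) symmetry of $\widetilde{\Ec}$ with respect to~\eqref{bilinear g zeta}. Your explicit check of that last point, including the remark that $\epsilon_y$ can be pulled out of either slot before taking $\Re(\cdot)$, merely fills in detail the paper leaves implicit.
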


\begin{proof}
Let $\ms U, \ms V \in \da$. Using~\ref{lem: bilinear forms 2} we have
\begin{gather*}
\langle\!\langle \ms U, \Ec \ms V \rangle\!\rangle_\da = \langle\!\langle \C^{-1} \ms U, \C^{-1} \Ec \ms V \rangle\!\rangle_{\gzeta} = \langle\!\langle \C^{-1} \ms U, \widetilde{\Ec} \C^{-1} \ms V \rangle\!\rangle_{\gzeta}.
\end{gather*}
Since $\widetilde{\Ec}$ is clearly symmetric with respect to~\eqref{bilinear g zeta} the claim follows.
\end{proof}

\begin{Remark} \label{rem: CorE simple pole}
When all the zeroes of $\omega$ are simple and real, i.e., $m_y = 1$ for every $y \in \bm \zeta$ and $\bzeta = \bzeta_{\rm r}$, a simple condition for $\Ec$ to be positive (i.e., $\langle\!\langle \cdot, \Ec \cdot \rangle\!\rangle_\da$ to be positive-definite) can be given in the case when $\g$ is compact, in which case we can choose the bilinear form $\langle \cdot, \cdot \rangle\colon \g \times \g \to \RR$ to be positive definite. Specifically, in this case the bilinear form~\eqref{bilinear g zeta} on $\gzeta$ reduces simply to
\begin{gather*}
\langle\!\langle \ms U, \ms V \rangle\!\rangle_{\gzeta} = - \sum_{y \in \bzeta_{\rm r}} \psi_y(y) \big\langle \ms u^{\uz{y, p}}, \ms v^{\uz{y, q}} \big\rangle.
\end{gather*}
It then follows directly from the proof of Corollary~\ref{CorE} that $\Ec$ is positive with respect to $\langle\!\langle \cdot, \cdot \rangle\!\rangle_\da$ if and only if $\widetilde \Ec$ is positive with respect to $\langle\!\langle \cdot, \cdot \rangle\!\rangle_{\gzeta}$, i.e., if and only if $- \epsilon_y \psi_y(y) > 0$ for every $y \in \bzeta$. Noting that $\psi_y(y) = \varphi'(y)$ this means $\varphi'(y)$ and $\epsilon_y$ should have opposite signs.
\end{Remark}

\subsection[Recovering the E-model]{Recovering the $\boldsymbol{\Ec}$-model}\label{sec: recover E-model}

In Section~\ref{sec: admissible for E-model} we described a very simple class of admissible $\g^\CC$-valued $1$-form $\L \in \Omega^1\big(X, \g^\CC\big)^\Pi$, namely ones satisfying the condition~\eqref{Eq:Lz}. We~showed that the latter could be rewritten in the form~\eqref{Eq:Lz 2} in terms of the linear isomorphism $\Ec \in \End \da$ defined in~\eqref{E def}. We~will now show that, assuming $\Ec$ and $\k \subset \da$ satisfy the condition~\eqref{Eq:Assump}, there exists a \emph{unique} solution to the constraint~\eqref{L l constraint 2} for $\L$ in terms of $l \in C^\infty(\Sigma, D)$ within this class of admissible $1$-forms. In~Section~\ref{SSec:EQ}, specifically Remark~\ref{rem: CorE simple pole}, we gave sufficient conditions for~\eqref{Eq:Assump} to hold in view of~Lemma~\ref{Lem:Pos}.

Let us define
$B_\sigma \coloneqq - \partial_\sigma l l^{-1} + \Ad_l (\jb_{\bm z'} \L_\sigma)$ and $B_\tau \coloneqq - \partial_\tau l l^{-1} + \Ad_l (\jb_{\bm z'} \L_\tau)$,
which both belong to $C^\infty(\Sigma, \k)$ by~\eqref{L l constraint 2}. Then using the relation~\eqref{Eq:Lz 2} we deduce
\begin{gather} \label{At As rel}
\Ad_l^{-1} B_\tau = \Ec \Ad_l^{-1} B_\sigma + \Ec\big(l^{-1} \partial_\sigma l\big) - l^{-1} \partial_\tau l.
\end{gather}
The left hand side takes value in $\Ad_l^{-1} \k = \ker\Pc_l$ while the first term on the right hand side is valued in $\Ec \Ad_l^{-1} \k = \im\Pc_l$. Here $\Pc_l$ is the projector with kernel and image~\eqref{Eq:KerIm} defined in~Section~\ref{sec: operators E Pl}, where now $l \in C^\infty(\Sigma, D)$ is the edge mode from Section~\ref{sec: removing inf}. Note that the existence of this projector is ensured by the condition~\eqref{Eq:Assump} which we are assuming holds.

Applying $\Pc_l$ to both sides of the equation~\eqref{At As rel} we then obtain
\begin{gather*}
0 = \Ec \big( \jb_{\bm z'} \L_\sigma - l^{-1} \partial_\sigma l \big) + \Pc_l \big( \Ec\big(l^{-1} \partial_\sigma l\big) - l^{-1} \partial_\tau l \big).
\end{gather*}
We can now solve this for $\jb_{\bm z'} \L_\sigma$ and then substitute the result into the relation~\eqref{Eq:Lz 2} to find $\jb_{\bm z'} \L_\tau$, yielding
\begin{gather*}
\jb_{\bm z'} \L_\sigma = \big(\id - \Ec^{-1} \Pc_l \Ec \big) \big(l^{-1} \partial_\sigma l\big) + \Ec^{-1} \Pc_l\big(l^{-1} \partial_\tau l\big),\\
\jb_{\bm z'} \L_\tau = (\Ec - \Pc_l \Ec) \big(l^{-1} \partial_\sigma l\big) + \Pc_l \big(l^{-1} \partial_\tau l\big).
\end{gather*}
Finally, note that using part $(ii)$ of Proposition~\ref{PropE} we can rewrite these as
\begin{subequations} \label{L for E-model}
\begin{gather}
\jb_{\bm z'} \L_\sigma = \overline{\Pc}_l\big(l^{-1} \partial_\sigma l\big) + \Ec^{-1} \Pc_l\big(l^{-1} \partial_\tau l\big), \\
\jb_{\bm z'} \L_\tau = \Ec \overline{\Pc}_l \big(l^{-1} \partial_\sigma l\big) + \Pc_l \big(l^{-1} \partial_\tau l\big).
\end{gather}
\end{subequations}
Since $\jb_{\bz'}$ is invertible by Lemma~\ref{lem: Cauchy}, this gives the desired unique solution $\L = \L(l)$ of the constraint~\eqref{L l constraint}. See Section~\ref{sec: inverse jz} below.

We observe that the expressions~\eqref{L for E-model} coincide with those in~\eqref{Eq:J} for the current $\J$ in the $\Ec$-model. It~is now clear, as advertised at the start of this section, that the action~\eqref{2d action} for the solution $\L = \L(l)$ to the constraint equation~\eqref{L l constraint} which we have obtained in this section coincides exactly with the action of the $\Ec$-model written in the form~\eqref{Eq:ActionJ}. Explicitly, we then have
\begin{gather}
S_{\rm 2d}(l) = \frac{1}{2} \int_\Sigma \big( \langle\!\langle l^{-1} \partial_\tau l, \Ec^{-1} \Pc_l \big(l^{-1} \partial_\tau l\big) \rangle\!\rangle_\da - \langle\!\langle l^{-1} \partial_\sigma l, \Ec \overline{\Pc}_l \big(l^{-1} \partial_\sigma l\big) \rangle\!\rangle_\da \notag
\\ \hphantom{S_{\rm 2d}(l) = \frac{1}{2} \int_\Sigma \big(}
{}+ \langle\!\langle l^{-1} \partial_\tau l, \big(\overline{\Pc}_l - \null^t \Pc_l\big) \big(l^{-1} \partial_\sigma l\big) \rangle\!\rangle_\da \big) {\rm d}\sigma \wedge {\rm d}\tau - \frac{1}{2} I^{\rm WZ}_{\da}[l],
\label{Eq:Action}
\end{gather}
which coincides with $S_{\Ec,\k}(l)$ defined in~\eqref{Eq:ActionE}. We~also note that, as required from Section~\ref{sec: removing inf}, the solution $\L(l)$ of the constraint~\eqref{L l constraint} satisfies~\eqref{constraint sol K-inv} for any $k \in C^\infty(\Sigma, K)$ since the expressions for $\J$ were noted in Section~\ref{sec: eom E-model} to have this property.

\subsection{The inverse of \texorpdfstring{$\bm j_{\bm z'}$}{jz}} \label{sec: inverse jz}

The admissibility condition $(b)$ from Section~\ref{sec: red to 2d} plays a central role in the passage from $4$d Chern--Simons theory to $2$d integrable field theories in the approach described in~\cite{Benini:2020skc}. In~particular, by \cite[Proposition 5.6]{Benini:2020skc} it allows one to lift the flatness equation for the $1$-form $\J = \J_\sigma {\rm d}\sigma + \J_\tau {\rm d}\tau$ with components $\J_\mu \coloneqq \jb_{\bz'} \L_\mu$, i.e.,
\begin{gather} \label{J flat}
{\rm d}\J + \frac 12 [\J, \J] = 0,
\end{gather}
which is essentially the boundary equation of motion for the extended action~\eqref{ext action}, to the flatness of the Lax connection $\L$ itself, namely
\begin{gather} \label{L flat}
{\rm d}\L + \frac 12 [\L, \L] = 0.
\end{gather}
In this section we will give a different perspective on the above passage from~\eqref{J flat} to~\eqref{L flat} in the case when the admissibility condition $(b)$ is ensured by the $\Ec$-model condition~\eqref{Eq:Lz 2}. It~follows from Section~\ref{sec: recover E-model} that in this case the flatness equation~\eqref{J flat} coincides with the equations of motion for the $\Ec$-model by Proposition~\ref{prop: E-model eom}.

Recall from Lemma~\ref{lem: Cauchy} that the map $\jb_{\bz'}$ defined in~\eqref{j z def} is an isomorphism. We~denote its inverse by
\begin{gather} \label{Severa map}
\bp \colon\ \da \longrightarrow R_{\Pi\bzeta}\big(\g^\CC\big)^\Pi.
\end{gather}
Applying this map to both equations in~\eqref{L for E-model} it follows that the components of the Lax connection $\L = \L_\sigma {\rm d}\sigma + \L_\tau {\rm d}\tau$ of the $\Ec$-model are given by
\begin{gather} \label{Lax for E-model}
\L_\sigma = \bp \J_\sigma, \qquad
\L_\tau = \bp \J_\tau.
\end{gather}

\begin{Lemma} \label{lem: p morph}
For any $\ms U \in \da$ we have $\bp([ \ms U, \Ec \ms U]) = [ \bp \ms U, \bp\Ec \ms U]$.
\begin{proof}
Let $\ms U \in \da$. Since $\jb_{\bz'}$ is an isomorphism we can write it as $\ms U = \jb_{\bz'} f$ for some $f \in R_{\Pi\bzeta}(\g)^\Pi$. Then we have
\begin{gather*}
[ \bp \ms U, \bp\Ec \ms U] = \big[ f, \bm \pi_{\bzeta}^{-1} \widetilde{\Ec} \bm \pi_{\bzeta} f \big]
= \bp \jb_{\bz'} \big( \big[ f, \bm \pi_{\bzeta}^{-1} \widetilde{\Ec} \bm \pi_{\bzeta} f \big] \big),
\end{gather*}
where in the first equality we substituted $\Ec = \jb_{\bz'} \bm \pi_{\bzeta}^{-1} \widetilde{\Ec} \bm \pi_{\bzeta} \jb_{\bz'}^{-1}$ and $\ms U = \jb_{\bz'} f$ and used the fact that $\bp \jb_{\bz'} = \id$. The second equality follows from noting that
\begin{gather} \label{admissibility f}
\big[ f, \bm \pi_{\bzeta}^{-1} \widetilde{\Ec} \bm \pi_{\bzeta} f \big] \in R_{\Pi\bzeta}\big(\g^\CC\big)^\Pi,
\end{gather}
i.e., that the order of the pole of $\big[ f, \bm \pi_{\bzeta}^{-1} \widetilde{\Ec} \bm \pi_{\bzeta} f \big]$ at each $y \in \Pi \bzeta$ is of order at most $m_y$, and inserting the identity on $R_{\Pi\bzeta}\big(\g^\CC\big)^\Pi$ in the form $\id = \bp \jb_{\bz'}$. Indeed, by definitions~\eqref{pi zeta def} and~\eqref{Etilde def} of the operators $\bm \pi_{\bzeta}$ and $\widetilde{\Ec}$, we have that
\begin{align*}
\bm \pi_{\bzeta}^{-1} \widetilde{\Ec} \bm \pi_{\bzeta} \colon\ R_{\Pi\bzeta}\big(\g^\CC\big)^\Pi &\longrightarrow R_{\Pi\bzeta}\big(\g^\CC\big)^\Pi,
\\
\sum_{\uz{y, q} \in \brb{\Pi \bzeta}} \frac{\ms u^{\uz{y,q}}}{(z-y)^{q+1}} &\longmapsto \sum_{\uz{y, q} \in \brb{\Pi \bzeta}} \frac{\epsilon_y \ms u^{\uz{y,q}}}{(z-y)^{q+1}}.
\end{align*}
We thus see that~\eqref{admissibility f} is true for precisely the same reason that the relation~\eqref{Eq:Lz} we imposed on the components of $\L$ in Section~\ref{sec: admissible for E-model} solves the admissibility condition $(b)$.

Noting that the operation of taking jets of $\g$-valued functions is a morphism of Lie algebras, we then obtain that
\begin{gather*}
[ \bp \ms U, \bp\Ec \ms U] = \bp \big( \big[ \jb_{\bz'} f, \jb_{\bz'} \bm \pi_{\bzeta}^{-1} \widetilde{\Ec} \bm \pi_{\bzeta} f \big] \big)
= \bp( [\ms U, \Ec \ms U]),
\end{gather*}
where the last step is by definition of $\Ec$ in~\eqref{E def} and the fact that $\ms U = \jb_{\bz'} f$.
\end{proof}
\end{Lemma}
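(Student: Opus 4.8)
The plan is to transport the identity through the isomorphism $\jb_{\bz'}$ of Lemma~\ref{lem: Cauchy}, reducing everything to a statement about pointwise commutators of rational functions in $R_{\Pi\bzeta}(\g^\CC)^\Pi$. Since $\jb_{\bz'}$ is invertible, I would first write $\ms U = \jb_{\bz'} f$ for a unique $f \in R_{\Pi\bzeta}(\g^\CC)^\Pi$, so that $\bp \ms U = f$. Using the factorisation $\Ec = \jb_{\bz'}\, \bm\pi_{\bzeta}^{-1}\widetilde{\Ec}\,\bm\pi_{\bzeta}\,\jb_{\bz'}^{-1}$ from~\eqref{E def} together with $\bp\jb_{\bz'} = \id$, I get $\bp\Ec\ms U = \bm\pi_{\bzeta}^{-1}\widetilde{\Ec}\,\bm\pi_{\bzeta} f =: g$, where $g$ is simply the rational function obtained from $f$ by multiplying the principal part of $f$ at each zero $y$ by the scalar $\epsilon_y$. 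The right-hand side of the claimed identity is then the pointwise commutator $[f,g]$.

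Next I would use that taking jets at the finite poles is a morphism of Lie algebras. Because every element of $R_{\Pi\bzeta}(\g^\CC)^\Pi$ is holomorphic at each $x \in \bz'$ (its only poles sit at the zeros of $\omega$), the $n_x$-jet of a pointwise commutator equals the commutator of the jets, so $[\jb_{\bz'} f, \jb_{\bz'} g] = \jb_{\bz'}[f,g]$ in $\d$, i.e.\ $[\ms U,\Ec\ms U] = \jb_{\bz'}[f,g]$. Applying $\bp$ and using $\bp\jb_{\bz'} = \id$ would give the result --- provided the pointwise commutator $[f,g]$ again lies in $R_{\Pi\bzeta}(\g^\CC)^\Pi$, so that $\bp$ returns it unchanged rather than some other function with the same jets.

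The crux, and the only real obstacle, is therefore verifying that $[f,g] \in R_{\Pi\bzeta}(\g^\CC)^\Pi$, which is exactly the assertion~\eqref{admissibility f}. Here I would argue as in the solution of admissibility condition $(b)$: the $\Pi$-equivariance of $[f,g]$ and its vanishing at infinity are immediate (the bracket of two $\Pi$-equivariant functions vanishing at infinity has both properties, using $\epsilon_{\bar y} = \overline{\epsilon_y}$), so the only issue is the order of the poles at the zeros. Fixing a zero $y$ and splitting $f = \mathrm{PP}_y(f) + R_f$ and $g = \epsilon_y\,\mathrm{PP}_y(f) + R_g$ into principal part at $y$ and a remainder regular at $y$, the potentially dangerous principal-principal term is $[\mathrm{PP}_y(f), \epsilon_y\,\mathrm{PP}_y(f)] = \epsilon_y\,[\mathrm{PP}_y(f), \mathrm{PP}_y(f)] = 0$, since a function pointwise commutes with a scalar multiple of itself; equivalently, the coefficient of each pole of order exceeding $m_y$ is a symmetric sum $\epsilon_y\sum_{q+q'=p}[\ms u^{\uz{y,q}}, \ms u^{\uz{y,q'}}]$ of antisymmetric brackets and hence vanishes. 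The surviving contributions $[\mathrm{PP}_y(f), R_g]$ and $[R_f, \epsilon_y\,\mathrm{PP}_y(f)]$ have poles of order at most $m_y$, so $[f,g]$ has a pole of order at most $m_y$ at $y$, as required. With~\eqref{admissibility f} established, the chain $\bp([\ms U,\Ec\ms U]) = \bp\jb_{\bz'}[f,g] = [f,g] = [\bp\ms U, \bp\Ec\ms U]$ closes the argument.
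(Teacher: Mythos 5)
Your proposal is correct and follows essentially the same route as the paper: write $\ms U = \jb_{\bz'} f$, use the factorisation $\Ec = \C\,\widetilde{\Ec}\,\C^{-1}$ to identify $\bp\Ec\ms U$ with the pointwise rescaling of the principal parts of $f$, verify that the pointwise bracket stays in $R_{\Pi\bzeta}\big(\g^\CC\big)^\Pi$, and conclude via the fact that taking jets is a Lie algebra morphism. The only difference is that you spell out the antisymmetry argument for~\eqref{admissibility f} explicitly, where the paper simply refers back to the discussion of admissibility condition $(b)$; your version is a correct expansion of that reference.
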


\begin{Remark} 
In the relativistic case $\Ec^2 = \id$, we have the following direct comparison with the results of~\cite{Severa:2017kcs}.
Let $\da_\pm \coloneqq \ker(\Ec \mp \id) = \im(\Ec \pm \id) \subset \da$ denote the $\pm 1$ eigenspaces of $\Ec$ in $\da$, so~that we have a direct sum decomposition $\da = \da_+ \dotplus \da_-$.
The statement of Lemma~\ref{lem: p morph} can then be rephrased as follows: for any $\ms U_\pm \in \da_\pm$ we have
\begin{gather*}
\bp \big( [ \ms U_+, \ms U_-] \big) = \big[ \bp (\ms U_+), \bp(\ms U_-) \big].
\end{gather*}
This is exactly the property considered in~\cite{Severa:2017kcs}. Note, however, that our linear map $p$ in~\eqref{Severa map} already takes values in $\g^\CC$-valued rational functions, rather than just $\g^\CC$ itself as in~\cite{Severa:2017kcs}. Thus our map~\eqref{Severa map} plays the role of the spectral parameter dependent map $p_\lambda \colon \da \to \g^\CC$ from~\cite{Severa:2017kcs}, where $\lambda$ there is the spectral parameter.
\end{Remark}

We can now give an alternative derivation of~\eqref{L flat} from~\eqref{J flat} by using Lemma~\ref{lem: p morph}. Specifically, writing $\J \in \Omega^1(\Sigma, \da)$ in components as $\J = \J_\sigma {\rm d}\sigma + \J_\tau {\rm d}\tau$ we note that $\frac 12 [\J, \J] = [\J_\sigma, \J_\tau] {\rm d}\sigma \wedge {\rm d}\tau = [\J_\sigma, \Ec \J_\sigma] {\rm d}\sigma \wedge {\rm d}\tau$, where in the last step we used the condition~\eqref{Eq:Lz 2}. It~follows that
\begin{gather}
\frac 12 \bp ([\J, \J]) = \bp([\J_\sigma, \Ec \J_\sigma]) {\rm d}\sigma \wedge {\rm d}\tau = [\bp \J_\sigma, \bp \Ec \J_\sigma] {\rm d}\sigma \wedge {\rm d}\tau
= [\bp \J_\sigma, \bp \J_\tau] {\rm d}\sigma \wedge {\rm d}\tau \notag
\\ \hphantom{\frac 12 \bp ([\J, \J])}
{}= \frac 12 [\bp \J, \bp \J],\label{pJJ}
\end{gather}
where in the second equality we used Lemma~\ref{lem: p morph}. Applying the linear map~\eqref{Severa map} to~\eqref{J flat} we thus obtain
\begin{gather*}
{\rm d}(\bp\J) + \frac 12 [\bp \J, \bp \J] = 0,
\end{gather*}
where in the first term we used the linearity of $\bp$ and in the second term we used~\eqref{pJJ}. This is equivalent to~\eqref{L flat} by definition of $\J$.
The above derivation of the flatness equation for $\bp \J$ from that of $\J$ is analogous to~\cite[Proposition 1]{Severa:2017kcs}.

\subsection{Energy-momentum tensor} \label{sec: EM tensor}

In Section~\ref{sec: EM tensor E-model} we derived expressions for the components of the energy-momentum tensor of the $\Ec$-model in terms of the $\da$-valued field $\J_\sigma \in C^\infty(\Sigma, \da)$, the linear operator $\Ec\colon \da \to \da$ and the bilinear form $\langle\!\langle \cdot, \cdot \rangle\!\rangle_\da$ on $\da$, see Proposition~\ref{prop: EM tensor}. Having identified $\J_\sigma$ with the image under $\jb_{\bz'}$ of~the Lax matrix $\L_\sigma$ in Section~\ref{sec: recover E-model}, we may re-express the components of the energy-momentum tensor of the integrable $\Ec$-models we have constructed in terms of the Lax matrix $\L_\sigma$ itself.

Specifically, we may rewrite the expressions in Proposition~\ref{prop: EM tensor} as
\begin{subequations} \label{EM tensor E-model}
\begin{gather}
\label{T01}
T^\tau_{\;\;\, \sigma} = \frac 12 \langle\!\langle \jb_{\bz'} \L_\sigma, \jb_{\bz'} \L_\sigma \rangle\!\rangle_\da
= \frac 12 \langle\!\langle \bm \pi_{\bzeta} \L_\sigma, \bm \pi_{\bzeta} \L_\sigma \rangle\!\rangle_{\gzeta},\\
\label{Hamiltonian density}
T^\tau_{\;\;\, \tau} = - T^\sigma_{\;\;\, \sigma} = \frac 12 \langle\!\langle \jb_{\bz'} \L_\sigma, \Ec \jb_{\bz'} \L_\sigma \rangle\!\rangle_\da
= \frac 12 \langle\!\langle \bm \pi_{\bzeta} \L_\sigma, \widetilde{\Ec} \bm \pi_{\bzeta} \L_\sigma \rangle\!\rangle_{\gzeta},\\
\label{T10}
T^\sigma_{\;\;\, \tau} = - \frac 12 \langle\!\langle \jb_{\bz'} \L_\sigma, \Ec^2 \jb_{\bz'} \L_\sigma \rangle\!\rangle_\da
= - \frac 12 \langle\!\langle \bm \pi_{\bzeta} \L_\sigma, \widetilde{\Ec}^2 \bm \pi_{\bzeta} \L_\sigma \rangle\!\rangle_{\gzeta},
\end{gather}
\end{subequations}
where in each case we used Lemmas~\ref{lem: bilinear forms 1} and~\ref{lem: bilinear forms 2} in the last equality.

The expressions~\eqref{EM tensor E-model} can be directly compared with those in \cite[Proposition 2.4]{Delduc:2019bcl} for the energy-momentum tensor of an affine Gaudin model which were derived in the case when $\omega$ has only simple zeroes. Indeed, in the present notation, the expressions in \cite[Proposition 2.4]{Delduc:2019bcl} read
\begin{gather} \label{T00 T01 T10}
T^\tau_{\;\;\, \sigma} = \sum_{y \in \bm \zeta} q_y, \qquad
T^\tau_{\;\;\, \tau} = - T^\sigma_{\;\;\, \sigma} = \sum_{y \in \bm \zeta} \epsilon_y q_y, \qquad
T^\sigma_{\;\;\, \tau} = - \sum_{y \in \bm \zeta} \epsilon_y^2 q_y,
\end{gather}
where
$q_y \coloneqq - \frac 12 \varphi'(y) \big\langle \L_\sigma^{\uz{y,0}}, \L_\sigma^{\uz{y,0}} \big\rangle$ for each simple zero $y \in \bm \zeta$.
Using the definition~\eqref{bilinear g zeta} of the bilinear form $\langle\!\langle \cdot, \cdot \rangle\!\rangle_{\gzeta}$ on $\gzeta$ along with Remark~\ref{rem: CorE simple pole} about the simple zero case, and the definition of the operator $\widetilde{\Ec}$ in~\eqref{Etilde def}, we see that~\eqref{T00 T01 T10} coincides exactly with the expressions in~\eqref{EM tensor E-model}. In~particular, the relativistic invariance of the affine Gaudin model was shown in~\cite{Delduc:2019bcl} to be ensured by $\epsilon_y^2=1$ for all $y \in \bm \zeta$. We~see that this coincides with the condition $\Ec^2=\id$ for the relativistic invariance of the $\Ec$-model, see Remark~\ref{rem: rel inv E-model}.

In Remark~\ref{rem: CorE simple pole}, we also gave a simple condition for the operator $\Ec$ to be positive in the case when $\omega$ has simple real poles and $\g$ is compact, namely that $\varphi'(y)$ and $\epsilon_y$ should have opposite signs for every $y \in \bzeta$. This corresponds to the condition given in \cite[Section~2.2.3]{Delduc:2019bcl} for the Hamiltonian $\int_\RR {\rm d}\sigma \; T^\tau_{\;\; \tau}$ to be positive. See also Section~\ref{sec: EM tensor E-model}.

\subsection{Symmetries of the model}

\subsubsection[Global $G diag$-symmetry]{Global $\boldsymbol{G^{\diag}}$-symmetry} 

In this section we show that the $\Ec$-model action~\eqref{Eq:Action} for the edge mode $l \in C^\infty(\Sigma, D)$ has a~\emph{global} diagonal $G$-symmetry.

Let $\Delta \colon G \to G^{\times |\bz'|} \subset D$ denote the diagonal embedding of $G$ into $D$. For~any $g_0 \in G$ and $\ms U = (\ms u_{x, p} \otimes \varepsilon_x^p )_{\up{x, p} \in \bsb{\bz'}} \in \da$, the adjoint action of $\Delta(g_0) \in D$ on $\ms U$ reads
\begin{gather*}
\Ad_{\Delta(g_0)} \ms U = \big( (\Ad_{g_0} \ms u_{x, p}) \otimes \varepsilon_x^p \big)_{\up{x,p} \in \bsb{\bz'}}.
\end{gather*}
Since $g_0 \in G$ we have $\tau \Ad_{g_0} = \Ad_{g_0} \tau$ and so it follows from the explicit form~\eqref{map E explicit} of the linear operator $\Ec \colon \da \to \da$ defined in~\eqref{E def} that
\begin{gather} \label{E com Adg}
\Ec \Ad_{\Delta(g_0)} = \Ad_{\Delta(g_0)} \Ec.
\end{gather}

\begin{Proposition} \label{prop: Gdiag sym}
The action~\eqref{Eq:Action} is invariant under $l \mapsto l \Delta(g_0)$ for any $g_0 \in G$.
\begin{proof}
By construction, the kernel and image of the projector $\Pc_{l\,\Delta(g_0)}$ are given by
\begin{gather*}
\ker \Pc_{l\,\Delta(g_0)} = \Ad^{-1}_{l\,\Delta(g_0)} \k = \Ad_{\Delta(g_0)}^{-1} \Ad_l^{-1} \k = \Ad_{\Delta(g_0)}^{-1} \ker \Pc_l,\\
\im \Pc_{l\,\Delta(g_0)} = \Ec \Ad^{-1}_{l\,\Delta(g_0)} \k = \Ec \Ad_{\Delta(g_0)}^{-1} \Ad_l^{-1} \k = \Ad_{\Delta(g_0)}^{-1} \Ec \Ad_l^{-1} \k = \Ad_{\Delta(g_0)}^{-1} \im \Pc_l,
\end{gather*}
where in the second line we have used~\eqref{E com Adg} in the third step. It~is a standard result on~pro\-jec\-tors that $\Pc_{l\,\Delta(g_0)}$ is thus given by
\begin{gather*}
\Pc_{l\,\Delta(g_0)} = \Ad_{\Delta(g_0)}^{-1} \Pc_l \Ad_{\Delta(g_0)}.
\end{gather*}
Similar equalities also hold for $\Pb_{l\,\Delta(g_0)}$ and the transpose of $\Pc_{l\,\Delta(g_0)}$ and $\Pb_{l\,\Delta(g_0)}$.

Moreover, under $l \mapsto l \Delta(g_0)$, the Maurer--Cartan current $l^{-1}\partial_\mu l$ transforms as $l^{-1}\partial_\mu l \mapsto \Ad_{\Delta(g_0)}^{-1} l^{-1}\partial_\mu l$. Putting all of the above together it now follows that the first term in the action~\eqref{Eq:Action} is invariant under $l \mapsto l \Delta(g_0)$.

Finally, since the WZ-term in~\eqref{Eq:Action} is independent of the choice of extension $\widehat{l}^{-1} {\rm d}\widehat{l}$ of the $1$-form $l^{-1} {\rm d}l \in \Omega^1(\Sigma, \da)$ to the bulk $\Sigma \times I$, we can choose this extension for the transformed $1$-form $\Ad_{\Delta(g_0)}^{-1} l^{-1} {\rm d}l$ to be $\Ad_{\Delta(g_0)}^{-1} \widehat{l}^{-1} {\rm d}\widehat{l}$, from which it follows that the WZ-term is also invariant under the transformation $l \mapsto l \Delta(g_0)$.
\end{proof}
\end{Proposition}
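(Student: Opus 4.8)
The plan is to observe that $l \mapsto l\Delta(g_0)$ is a \emph{constant right} translation on $D$, so that every ingredient of~\eqref{Eq:Action} transforms covariantly by conjugation with $\Ad_{\Delta(g_0)}$, after which all these conjugations collapse thanks to the ad-invariance of $\langle\!\langle\cdot,\cdot\rangle\!\rangle_\da$. The two facts that make this work are already available: the commutation relation~\eqref{E com Adg}, $\Ec\Ad_{\Delta(g_0)}=\Ad_{\Delta(g_0)}\Ec$ (which upon inverting also gives $\Ec^{-1}\Ad_{\Delta(g_0)}=\Ad_{\Delta(g_0)}\Ec^{-1}$), together with the orthogonality $\tp\Ad_{\Delta(g_0)}=\Ad_{\Delta(g_0)}^{-1}$, which is itself a restatement of ad-invariance of the form.

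Writing $d\coloneqq\Delta(g_0)$, I would first determine how the projectors transform. From the kernel and image~\eqref{Eq:KerIm} one gets $\ker\Pc_{ld}=\Ad_d^{-1}\Ad_l^{-1}\k=\Ad_d^{-1}\ker\Pc_l$ and, pulling $\Ad_d^{-1}$ through $\Ec$ by~\eqref{E com Adg}, $\im\Pc_{ld}=\Ec\Ad_d^{-1}\Ad_l^{-1}\k=\Ad_d^{-1}\im\Pc_l$. Since a projector is fixed by its kernel and image, and conjugating both subspaces by the invertible operator $\Ad_d$ conjugates the projector, this gives $\Pc_{ld}=\Ad_d^{-1}\Pc_l\Ad_d$. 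The identical reasoning applied to~\eqref{Eq:KerImbarPb}, now using the commutation of $\Ec^{-1}$ with $\Ad_d$, yields $\overline{\Pc}_{ld}=\Ad_d^{-1}\overline{\Pc}_l\Ad_d$; and because $\tp\Ad_d=\Ad_d^{-1}$ the transposes obey the same conjugation rule, $\tp\Pc_{ld}=\Ad_d^{-1}\,\tp\Pc_l\,\Ad_d$.

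Next I would substitute into the density of~\eqref{Eq:Action}, using that the Maurer--Cartan current transforms as $l^{-1}\partial_\mu l\mapsto\Ad_d^{-1}(l^{-1}\partial_\mu l)$ for constant $d$. In the first term the operator $\Ec^{-1}\Pc_l$ becomes $\Ec^{-1}\Ad_d^{-1}\Pc_l\Ad_d=\Ad_d^{-1}\,\Ec^{-1}\Pc_l\,\Ad_d$, so the term reads $\langle\!\langle\Ad_d^{-1}(l^{-1}\partial_\tau l),\Ad_d^{-1}\Ec^{-1}\Pc_l(l^{-1}\partial_\tau l)\rangle\!\rangle_\da$, which equals its untransformed value by ad-invariance. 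The second term (with $\Ec\overline{\Pc}_l$) and the mixed third term (with $\overline{\Pc}_l-\tp\Pc_l$) are identical in structure: each operator is conjugated into $\Ad_d^{-1}(\,\cdot\,)\Ad_d$, and the two factors of $\Ad_d^{-1}$ left acting on the arguments are absorbed by the isometry property of $\Ad_d$. Hence the bulk Lagrangian density is pointwise invariant.

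It remains to treat the topological term $I^{\rm WZ}_\da[l]$. Here I would regard it, as in~\eqref{WZ term}, as a functional of an extension of the $\da$-valued $1$-form $l^{-1}{\rm d}l$, and use that $(ld)^{-1}{\rm d}(ld)=\Ad_d^{-1}(l^{-1}{\rm d}l)$. Exploiting the independence of the WZ-term of the chosen extension, I would extend the transformed $1$-form $\Ad_d^{-1}(l^{-1}{\rm d}l)$ by $\Ad_d^{-1}\widehat{l}^{-1}{\rm d}\widehat{l}$, which still vanishes near $\Sigma\times\{1\}$ as required; its integrand then differs from the original only by the insertion of $\Ad_d^{-1}$ in all three slots, and hence coincides with it by ad-invariance of $\langle\!\langle\cdot,\cdot\rangle\!\rangle_\da$ together with the fact that $\Ad_d$ is a Lie-algebra automorphism. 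Combining this with the invariance of the bulk density proves $S_{\rm 2d}(l\Delta(g_0))=S_{\rm 2d}(l)$. I expect the only genuine point of care to be precisely this extension argument: phrasing it in terms of the $1$-form $l^{-1}{\rm d}l$ rather than a group-valued extension is what keeps the boundary condition $\widehat{l}=\id$ near $\Sigma\times\{1\}$ consistent, whereas everything else reduces mechanically to the commutation relation~\eqref{E com Adg} and ad-invariance.
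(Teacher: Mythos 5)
Your proposal is correct and follows essentially the same route as the paper: conjugation of the projectors via their kernel and image together with~\eqref{E com Adg}, covariance of the Maurer--Cartan current, ad-invariance of $\langle\!\langle\cdot,\cdot\rangle\!\rangle_\da$ for the bulk terms, and the choice of extension $\Ad_{\Delta(g_0)}^{-1}\widehat{l}^{-1}{\rm d}\widehat{l}$ for the WZ-term. The extra remarks on $\tp\Ad_{\Delta(g_0)}=\Ad_{\Delta(g_0)}^{-1}$ and the commutation of $\Ec^{-1}$ with $\Ad_{\Delta(g_0)}$ just make explicit what the paper summarises as ``similar equalities''.
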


\begin{Remark}
The only property of the element $\Delta(g_0) \in D$ which we used in the proof of~Pro\-position~\ref{prop: Gdiag sym} is~\eqref{E com Adg}. Therefore, the statement of the proposition would also hold for any other Lie subgroup of $D$ with the property that all its elements $d \in D$ are such that $\Ad_d \colon \da \to \da$ commutes with $\Ec$.
\end{Remark}

\subsubsection [Global symmetries for k an ideal]
{Global symmetries for $\boldsymbol{\k}$ an ideal}\label{sec: PCM sym}

We now consider the $\Ec$-model action~\eqref{Eq:Action} in a more specific setup.

\begin{Proposition} \label{prop: PCM sym}
If the Lagrangian subalgebra $\k \subset \da$ is an ideal, then the $\Ec$-model action~\eqref{Eq:Action} is invariant under $l \mapsto al$ for any $a \in D$.
\end{Proposition}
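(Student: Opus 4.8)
The plan is to exploit that left multiplication by a \emph{constant} group element leaves the Maurer--Cartan current untouched, so the only possible sources of non-invariance in the action~\eqref{Eq:Action} are the $l$-dependence of the projectors $\Pc_l$, $\Pb_l$ and of the WZ-term. Concretely, since $a \in D$ is constant we have $(al)^{-1}\partial_\mu(al) = l^{-1}\partial_\mu l$ for $\mu = \tau, \sigma$, so $l^{-1}\partial_\mu l$ is manifestly invariant. The crux is therefore to show that the projectors are unchanged, and this is exactly where the hypothesis that $\k$ is an ideal enters.

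First I would record the consequence of $\k$ being an ideal. The condition $[\da,\k]\subseteq\k$ means $\ad_{\ms U}\k\subseteq\k$ for every $\ms U\in\da$, and exponentiating gives $\Ad_{\exp \ms U}\k=\k$; since $D$ is connected, every $a\in D$ is a product of exponentials, so $\Ad_a\k=\k$ for all $a\in D$. Using this together with the kernel and image of $\Pc_l$ recorded in~\eqref{Eq:KerIm}, I would compute
\[
\ker\Pc_{al} = \Ad_{al}^{-1}\k = \Ad_l^{-1}\Ad_a^{-1}\k = \Ad_l^{-1}\k = \ker\Pc_l,
\]
and similarly $\im\Pc_{al} = \Ec\Ad_{al}^{-1}\k = \Ec\Ad_l^{-1}\k = \im\Pc_l$. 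Since a projector is determined by its kernel and image, this gives $\Pc_{al}=\Pc_l$, and hence $\Pb_{al}=\Pb_l$ by~\eqref{Eq:Pb}. Combined with the invariance of $l^{-1}\partial_\mu l$ and the fact that $\Ec$ is a fixed operator not depending on $l$, each of the three terms in the bulk part of~\eqref{Eq:Action} is then manifestly unchanged under $l\mapsto al$.

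It remains to treat the WZ-term $I^{\rm WZ}_{\da}[l]$. For this I would apply the Polyakov--Wiegmann identity~\eqref{PW} with $b$ replaced by the constant field $a$. Since $a$ is constant, $a^{-1}\partial_\mu a = 0$ and the integral term in~\eqref{PW} vanishes identically. The remaining contribution $I^{\rm WZ}_{\da}[a]$ also vanishes: by extension-independence of the WZ-term one may compute it using an extension of $a$ that depends only on the interval coordinate of $\Sigma\times I$, whereupon $\widehat{a}^{-1}{\rm d}\widehat{a}$ points purely in the interval direction and the cubic integrand in~\eqref{WZ term} vanishes. Hence $I^{\rm WZ}_{\da}[al]=I^{\rm WZ}_{\da}[l]$, and together with the invariance of the bulk term this yields $S_{\Ec,\k}(al)=S_{\Ec,\k}(l)$.

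The only genuinely nontrivial input is the identity $\Ad_a\k=\k$, which is precisely where the ideal hypothesis is used and which relies on the connectedness of $D$; everything else is routine because left translation by a constant trivializes the Maurer--Cartan form. I would emphasise that this is the exact analogue, for left multiplication, of the commutation relation~\eqref{E com Adg} used in the proof of Proposition~\ref{prop: Gdiag sym} for right multiplication: there the current transformed but $\Ec$ commuted with $\Ad_{\Delta(g_0)}$, whereas here the current is invariant but one needs $\Ad_a$ to preserve $\k$.
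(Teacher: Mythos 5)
Your proposal is correct and follows essentially the same route as the paper: the ideal hypothesis gives $\Ad$-invariance of $\k$, hence $\Pc_{al}=\Pc_l$ and $\Pb_{al}=\Pb_l$ (the paper phrases this by noting $\Ad_l^{-1}\k=\k$ for all $l$, so the projectors are in fact independent of $l$ altogether), while the Maurer--Cartan currents and the WZ-term are manifestly unchanged under left multiplication by a constant. Your more explicit treatment of the WZ-term via Polyakov--Wiegmann is a harmless elaboration of what the paper simply asserts.
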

\begin{Remark}
In fact, since the $\Ec$-model action has a gauge invariance $l \mapsto kl$ for $k \in C^\infty(\Sigma, K)$, see Section~\ref{sec: gauge inv E-model}, the global symmetry in Proposition~\ref{prop: PCM sym} is really only an additional symmetry by the Lie group $K \backslash D$.
\end{Remark}
\begin{proof}{Proposition~\ref{prop: PCM sym}}
The currents $l^{-1}\partial_\mu l$ and the WZ-term in~\eqref{Eq:Action} are both invariant under the transformation $l \mapsto al$. It~remains to check that $\Pc_{al} = \Pc_l$ and $\Pb_{al} = \Pb_l$. Since $\k$ is an ideal of $\da$ we have $\Ad_l^{-1} \k = \k$ and thus
\begin{gather*}
\im \Pc_l = \Ec^{-1} \k, \qquad \ker \Pc_l = \k.
\end{gather*}
We see that the projector $\Pc_l$, and also $\Pb_l$, is in fact independent of $l$. It~is therefore invariant under $l \mapsto a l$, as required.
\end{proof}

We can construct simple examples of Lagrangian ideals $\k \subset \da$ in the case when all the multiplicities $n_x$ of the poles $x \in \bz$ of $\omega$ are even, i.e., $n_x = 2r_x$ for some $r_x \in \ZZ_{\geq 1}$. Recall the definition~\eqref{defect grp alg inf} of the Lie algebra $\da$, namely
\begin{gather*}
\da = \bigoplus_{x \in \bm z'_{\rm r}} \g \otimes_\RR \T^{2r_x}_x \oplus \bigoplus_{x \in \bm z_{\rm c}} \big(\g^\CC \otimes_\CC \T^{2r_x}_x\big)_\RR.
\end{gather*}
Consider the ideals $\T^{\geq r_x}_x \coloneqq \varepsilon_x^{r_x} \RR[\varepsilon_x]/ \big(\varepsilon_x^{2r_x}\big) \subset \T^{2 r_x}_x$ for real finite poles $x \in \bz'_{\rm r}$ and $\T^{\geq r_x}_x \coloneqq \varepsilon_x^{r_x} \CC[\varepsilon_x]/ \big(\varepsilon_x^{2r_x}\big) \subset \T^{2 r_x}_x$ for complex poles $x \in \bz_{\rm c}$. It~is easy to check that
\begin{gather*}
\k \coloneqq \bigoplus_{x \in \bm z'_{\rm r}} \g \otimes_\RR \T^{\geq r_x}_x \oplus \bigoplus_{x \in \bm z_{\rm c}} \big(\g^\CC \otimes_\CC \T^{\geq r_x}_x\big)_\RR
\end{gather*}
is a Lagrangian ideal of $\da$.

\section{Examples} \label{sec: examples}

In this section we give a few examples of the above general construction, including details of the one mentioned in the introduction. In~each case, we make a choice of meromorphic $1$-form~$\omega$, Lagrangian subalgebra $\k \subset \da$ and set of parameters $\epsilon_y$ associated with each zero $y \in \bzeta$ of $\omega$. This is then fed into the general construction to produce the action and Lax connection of the corresponding integrable $\sigma$-models.

\subsection[Principal chiral model and non-abelian T-dual]
{Principal chiral model and non-abelian $\boldsymbol T$-dual}\label{sec: PCM+Tdual}

We rederive the well-known actions of the principal chiral model and its non-abelian $T$-dual as an application of our general construction. The principal chiral model was already derived in \cite[Section~10.2]{Costello:2019tri} and then again in \cite[Section~5.1]{Delduc:2019whp} using the unifying $2$d action valid when $\omega$ has at most double poles. We~derive it again in Section~\ref{sec: PCM} below since it provides the simplest illustration of our construction.

However, the derivation of the non-abelian $T$-dual in Section~\ref{sec: Tdual} below from $4$d Chern--Simons theory is new. This was conjectured but could not be derived in~\cite{Delduc:2019whp} since the formula for the unifying $2$d action there was only applicable under a technical condition on the Lagrangian subalgebra $\k \subset \da$, see \cite[equation~(4.16)]{Delduc:2019whp}. The latter is not satisfied by the choice of Lagrangian subalgebra used in Section~\ref{sec: Tdual} to derive the non-abelian $T$-dual.

We begin by setting up the formalism to discuss both the principal chiral model and its non-abelian $T$-dual. We~let $a > 0$ and consider the $1$-form
\begin{gather} \label{PCM 1form}
\omega = \frac{a^2 {\rm d}z}{z^2} - {\rm d}z.
\end{gather}
In the notation of Section~\ref{sec: defects} we have the set of poles $\bz = \{ 0, \infty \}$ and the set of zeroes is $\bzeta = \{ a, -a \}$. Both poles are double poles so that $n_0 = n_\infty= 2$ and both zeroes are simple so~that $m_a = m_{-a} = 1$. Moreover, all the zeroes and poles are real so here $\bz = \bz_{\rm r}$ and $\bzeta = \bzeta_{\rm r}$. The~levels are read off from $\omega$ to be $\ell^0_0 = 0$, $\ell^0_1 = a^2$ and $\ell^\infty_1 = 1$.

We also choose the parameters associated to the set of zeroes in~\eqref{Eq:Lz} to be
\begin{gather} \label{PCM epsilon choice}
\epsilon_{\pm a} = \pm 1.
\end{gather}
Below we shall construct all the data associated with the choice of $1$-form~\eqref{PCM 1form} and para\-me\-ters~\eqref{PCM epsilon choice}, and then use this data to build the actions for the principal chiral model in Section~\ref{sec: PCM} and its non-abelian $T$-dual in Section~\ref{sec: Tdual}.

\subsubsection[Lie groups D, K and tilde K]{Lie groups $\boldsymbol D$, $\boldsymbol K$ and $\boldsymbol{\tilde K}$}

The defect Lie algebra~\eqref{defect grp alg inf} is given here by
\begin{gather*}
\da = \g \otimes_\RR \RR[\varepsilon_0] / \big(\varepsilon_0^2\big) = \g \ltimes \g_{\rm ab},
\end{gather*}
where $\g_{\rm ab} \coloneqq \g \otimes_\RR \varepsilon_0 \RR[\varepsilon_0] / \big(\varepsilon_0^2\big)$ is isomorphic to the vector space $\g$ equipped with the trivial Lie bracket and the adjoint action of $\g$. By using the abbreviated notation $\ms u^p \coloneqq \ms u^{[0, p]}= \ms u \otimes \varepsilon^p_0$ for any $\ms u \in \g$ and $p \in \{ 0,1 \}$, the Lie algebra relations in $\da$ read
\begin{gather} \label{relations Tdual}
\big[\ms u^0, \ms v^0\big] = [\ms u, \ms v]^0, \qquad
\big[\ms u^0, \ms v^1\big] = \big[\ms u^1, \ms v^0\big] = [\ms u, \ms v]^1, \qquad
\big[\ms u^1, \ms v^1\big] = 0,
\end{gather}
for any $\ms u, \ms v \in \g$.

The associated Lie group $D$ is the tangent bundle $TG$ of $G$, which in the right trivialisation is isomorphic to the Lie group $G \ltimes \g$ with product and inverse
\begin{gather*}
(g, \ms u) (h, \ms v) = \big( gh, \ms u + \Ad_g \ms v \big),\qquad
(g, \ms u)^{-1} = \big( g^{-1}, - \Ad_g^{-1} \ms u \big)
\end{gather*}
for every $g, h \in G$ and $\ms u, \ms v \in \g$.

We have the two obvious Lie subalgebras
\begin{gather} \label{k tilde k PCM}
\tilde \k \coloneqq \g \oplus \{ 0 \} = \big\{ \ms u^0 \mid \ms u \in \g \big\}, \qquad
\k \coloneqq \{ 0 \} \oplus \g_{\rm ab} = \big\{ \ms u^1 \mid \ms u \in \g \big\}
\end{gather}
of $\da$. These are complementary since we have the direct sum decomposition
\begin{gather} \label{da k tilde k}
\da = \k \dotplus \tilde{\k}.
\end{gather}
Let $\tilde K, K \subset D$ denote the corresponding connected Lie subgroups of $D$, which are isomorphic to $G \times \{ 0 \}$ and $\{ \id \} \times \g$, respectively. In~particular, it is clear that $K$ is normal in $D$. We~have the global decomposition $D = K \tilde K = \tilde K K$.

The bilinear form~\eqref{form on gz} on $\da$ is given explicitly by
\begin{gather} \label{bilinear Tdual}
\langle\!\langle \ms u^0, \ms v^0 \rangle\!\rangle_{\da} = \langle\!\langle \ms u^1, \ms v^1 \rangle\!\rangle_{\da} = 0, \qquad
\langle\!\langle \ms u^0, \ms v^1 \rangle\!\rangle_{\da} = \langle\!\langle \ms u^1, \ms v^0 \rangle\!\rangle_{\da} = a^2 \langle \ms u, \ms v \rangle,
\end{gather}
for any $\ms u, \ms v \in \g$ so that $\k$ and $\tilde{\k}$ are both Lagrangian subalgebras of $\da$.

\subsubsection[Linear operator E]{Linear operator $\boldsymbol{\Ec}$}

The real vector space~\eqref{gzeta def} associated with the two simple zeroes of $\omega$ is given here by $\gzeta = \g \oplus \g$. One checks by computing the Cauchy matrix~\eqref{Cauchy mat} that the isomorphism $\C \colon \gzeta \to \da$ from Lemma~\ref{lem: Cauchy} (see also~\eqref{map C explicit} and~\eqref{map C inv explicit}) and its inverse are given here by
\begin{gather*}
\C(\ms u, \ms v) = - \frac{\ms u^0}{a} + \frac{\ms v^0}{a} - \frac{\ms u^1}{a^2} - \frac{\ms v^1}{a^2}, \qquad
\C^{-1}\big( \ms u^0 + \ms v^1\big) = \bigg({-}\frac 12 a \ms u - \frac 12 a^2 \ms v, \frac 12 a \ms u - \frac 12 a^2 \ms v \bigg)
\end{gather*}
for every $\ms u, \ms v \in \g$. Given the choice of parameters $\epsilon_y$ for $y \in \bzeta$ we made in~\eqref{PCM epsilon choice}, the linear isomorphism~\eqref{Etilde def} then reads $\widetilde{\Ec}(\ms u, \ms v) = (\ms u, - \ms v)$ and the linear operator~\eqref{E def} is found to act in the present case as
\begin{gather} \label{E Tdual}
\Ec\big(\ms u^0 + \ms v^1\big) = a \ms v^0 + \frac{\ms u^1}a
\end{gather}
for every $\ms u, \ms v \in \g$. We~have $\Ec^2 = \id$, corresponding to the fact that the principal chiral model and its non-abelian $T$-dual are relativistic.

The Lie algebra relations~\eqref{relations Tdual}, bilinear form~\eqref{bilinear Tdual} and linear operator~\eqref{E Tdual} agree with those for the principal chiral model given in \cite[equations~(19)--(21)]{Klimcik:2015gba}.

\subsubsection{Principal chiral model} \label{sec: PCM}

Here we apply the general construction of Section~\ref{sec: E-model from 4d} with $\k$ defined in~\eqref{k tilde k PCM} as the Lagrangian subalgebra.
Since we have the global factorisation $D = K \tilde K$ we can factorise the field $l \in C^\infty(\Sigma, D)$ uniquely as $l = kg$ for some $k \in C^\infty(\Sigma, K)$ and $g \in C^\infty\big(\Sigma, \tilde K\big)$. Using the gauge invariance of the action~\eqref{Eq:ActionE} under $l \mapsto k^{-1} l$ from Section~\ref{sec: gauge inv E-model}, we can then fix $l = g$. In~particular, the action~\eqref{Eq:ActionE} now reads
\begin{gather}
S_{\rm 2d}(g) = \frac{1}{2} \int_\Sigma \big( \langle\!\langle g^{-1} \partial_\tau g, \Ec \Pc_g \big(g^{-1} \partial_\tau g\big) \rangle\!\rangle_\da - \langle\!\langle g^{-1} \partial_\sigma g, \Ec \Pc_g\big(g^{-1} \partial_\sigma g\big) \rangle\!\rangle_\da \notag
\\ \hphantom{S_{\rm 2d}(g) =\frac{1}{2} \int_\Sigma \big(}
{} + \langle\!\langle g^{-1} \partial_\tau g, \big(\Pc_g - \null^t \Pc_g\big)\big(g^{-1} \partial_\sigma g\big) \rangle\!\rangle_\da \big) {\rm d}\sigma \wedge {\rm d}\tau - \frac 12 I^{\rm WZ}_{\tilde \k}[g]
\label{action PCM}
\end{gather}
for the group valued field $g \in C^\infty(\Sigma, \tilde K)$, where we have used the fact that $\Ec^2 = \id$ together with part $(vi)$ of Proposition~\ref{PropE}. Here $\Pc_g$ denotes the projector defined by the relations~\eqref{Eq:KerIm}, namely
\begin{gather*}
\ker \Pc_g = \Ad_g^{-1} \k, \qquad \im \Pc_g = \Ec \Ad_g^{-1} \k.
\end{gather*}
Since $[\tilde \k, \k] \subset \k$ we have $\Ad_g^{-1} \k = \k$ and hence
\begin{gather*}
\ker \Pc_g = \k, \qquad \im \Pc_g = \Ec \k = \tilde \k,
\end{gather*}
where the last equality uses the explicit forms~\eqref{k tilde k PCM} and~\eqref{E Tdual} of the two subalgebras $\k, \tilde \k \subset \da$ and of~$\Ec$.
Thus $\Pc_g$ is simply the projection onto $\tilde \k$ along $\k$, relative to the direct sum decomposition~\eqref{da k tilde k}. In~particular, it acts as the identity on $g^{-1} \partial_\mu g \in C^\infty(\Sigma, \tilde \k)$. Moreover, since $\tilde \k$ is isotropic with respect to the bilinear form~\eqref{bilinear Tdual}, the WZ-term in the action~\eqref{action PCM} vanishes.

Putting all of the above together and noting the identity $\langle\!\langle \ms u^0, \Ec \ms v^0 \rangle\!\rangle_{\da} = a \langle \ms u, \ms v \rangle$ for any $\ms u, \ms v \in \g$, the action~\eqref{action PCM} reduces to the usual principal chiral model action
\begin{gather*}
S_{\rm 2d}(g) = \frac{1}{2} a \int_\Sigma \langle g^{-1} \partial_+ g, g^{-1} \partial_- g \rangle {\rm d}\sigma \wedge {\rm d}\tau,
\end{gather*}
where $\partial_\pm = \partial_\tau \pm \partial_\sigma$, for the Lie group valued field $g \in C^\infty\big(\Sigma, \tilde K\big) \cong C^\infty(\Sigma, G)$. The global $G^{\rm diag}$-symmetry from Proposition~\ref{prop: Gdiag sym} corresponds here to the right $G$-symmetry of the principal chiral model. Since the Lagrangian subalgebra $\k \subset \da$ is an ideal we also have the global left symmetry of~Proposition~\ref{prop: PCM sym} by the Lie group $K \backslash D \simeq \tilde K \simeq G$, corresponding to the left $G$-symmetry of the principal chiral model.

Let $j \coloneqq g^{-1} {\rm d}g$. Using $\ast {\rm d}\sigma = - {\rm d}\tau$ and $\ast {\rm d}\tau = - {\rm d}\sigma$, the Lax connection~\eqref{Lax for E-model} is given by~$\L = \bp \big( j^0 - \frac 1a \ast j^1 \big)$, where the map $\bp$ in~\eqref{Severa map} is the inverse of the map $\jb_{\bz'}$ defined in~\eqref{j z def} and given explicitly in the present case by
\begin{gather*}
\jb_{\bm z'} \colon\ \frac{\ms u_a}{z-a} + \frac{\ms u_{-a}}{z+a} \longmapsto \frac{1}{a} (\ms u_{-a} - \ms u_a)^0 + \frac{1}{a^2} (-\ms u_{-a} - \ms u_a)^1.
\end{gather*}
Its inverse is then given explicitly by
\begin{gather} \label{p for PCM-Tdual}
\bp \colon\ \ms v^0 + \ms w^1 \longmapsto \frac{a \ms v + a^2 \ms w}{2(a - z)} + \frac{a \ms v - a^2 \ms w}{2(a+z)}.
\end{gather}
We therefore obtain the Lax connection of the principal chiral model
\begin{gather} \label{Lax PCM}
\L = \bp \big( j^0 - \ast j^1 \big) = a \frac{a j - z \ast j}{a^2-z^2} = \frac{a j_+}{a - z} {\rm d}\sigma^+ + \frac{a j_-}{a + z} {\rm d}\sigma^-,
\end{gather}
where $j_\pm \coloneqq g^{-1} \partial_\pm g$.
This coincides with the usual Lax connection of the principal chiral model after rescaling the spectral parameter as $z \mapsto a z$.

\subsubsection[Non-abelian T-dual]{Non-abelian $\boldsymbol{T}$-dual} \label{sec: Tdual}

We will now use the reverse factorisation $D = \tilde K K$, treating $\tilde K \subset D$ as the Lie subgroup which we quotient by in Section~\ref{sec: Emodel action}. As in Section~\ref{sec: PCM}, we can factorise our field $l \in C^\infty(\Sigma, D)$ uniquely as $l = \tilde k p$ for some $\tilde k \in C^\infty\big(\Sigma, \tilde K\big)$ and $p \in C^\infty(\Sigma, K)$. We~may therefore use the gauge invariance by the subgroup $\tilde K$ from Section~\ref{sec: gauge inv E-model} to fix $l = p$, obtaining the action~\eqref{Eq:ActionE} for $p \in C^\infty(\Sigma, K)$, where $\tilde \k$ now plays the role of $\k$. As in Section~\ref{sec: PCM}, the WZ-term in this action vanishes since the Lie subalgebra $\k$ is isotropic. Furthermore, by definition we can write $p = \big(\id, \frac{1}{a} \ms m\big)$ for some $\ms m \in C^\infty(\Sigma, \g)$, where the factor of $\frac{1}{a}$ is introduced for later convenience. In~particular, we then have $p^{-1} \partial_\mu p = \frac{1}{a} \partial_\mu \ms m^1 \in C^\infty(\Sigma, \k)$. In~the present case, the action~\eqref{Eq:ActionE} therefore simplifies to
\begin{gather}
S_{\rm 2d}(\ms m) = \frac{1}{2} a^{-2} \int_\Sigma \big( \langle\!\langle \partial_\tau \ms m^1, \Ec \tilde \Pc_p \big(\partial_\tau \ms m^1\big) \rangle\!\rangle_\da - \langle\!\langle \partial_\sigma \ms m^1, \Ec \tilde \Pc_p\big(\partial_\sigma \ms m^1\big) \rangle\!\rangle_\da \notag
\\ \hphantom{S_{\rm 2d}(\ms m) = \frac{1}{2} a^{-2} \int_\Sigma \big(}
{}+ \langle\!\langle \partial_\tau \ms m^1, \big(\tilde \Pc_p - \null^t \tilde \Pc_p\big)\big(\partial_\sigma \ms m^1\big) \rangle\!\rangle_\da \big) {\rm d}\sigma \wedge {\rm d}\tau,
\label{action Tdual}
\end{gather}
where now $\tilde \Pc_p$ denotes the projector defined in the same way as in~\eqref{Eq:KerIm} but relative to the Lagrangian subalgebra $\tilde \k$, namely
\begin{gather*}
\ker \tilde \Pc_p = \Ad_p^{-1} \tilde \k =\bigg(\id - \frac{1}{a} \ad_{\ms m^1}\bigg) \tilde \k = \bigg\{ \ms u^0 - \frac{1}{a} [\ms m, \ms u]^1 \,\bigg|\, \ms u \in \g \bigg\},\\
\im \tilde \Pc_p = \Ec \Ad_p^{-1} \tilde \k = \Ec\bigg( \bigg(\id - \frac{1}{a} \ad_{\ms m^1}\bigg) \tilde \k \bigg) = \bigg\{ \frac{1}{a} \ms u^1 - [\ms m, \ms u]^0 \,\bigg|\, \ms u \in \g \bigg\}.
\end{gather*}
It is straightforward to check that the projector $\tilde \Pc_p \colon \da \to \da$ with the above kernel and image is given by
\begin{gather*}
\tilde \Pc_p \big(\ms u^0 + \ms v^1\big) = - \bigg( \frac{\ad^2_{\ms m}}{\id - \ad_{\ms m}^2} \ms u \bigg)^0 - \bigg( \frac{a \ad_{\ms m}}{\id - \ad_{\ms m}^2} \ms v \bigg)^0 + \bigg( \frac{a^{-1} \ad_{\ms m}}{\id - \ad_{\ms m}^2} \ms u \bigg)^1 + \bigg( \frac{1}{\id - \ad_{\ms m}^2} \ms v \bigg)^1
\end{gather*}
for any $\ms u, \ms v \in \g$. In~particular, applying this to $\partial_\mu \ms m^1$ we find
\begin{gather} \label{Pg for Tdual}
\tilde \Pc_p \big( \partial_\mu \ms m^1 \big) = \bigg({-} \frac{a \ad_{\ms m}}{\id - \ad^2_{\ms m}} \partial_\mu \ms m \bigg)^0 + \bigg( \frac{1}{\id - \ad^2_{\ms m}} \partial_\mu \ms m \bigg)^1.
\end{gather}
Substituting this into the action~\eqref{action Tdual} and using the expressions~\eqref{bilinear Tdual} and~\eqref{E Tdual} of the bilinear form $\langle\!\langle \cdot, \cdot \rangle\!\rangle_\da$ and the linear operator $\Ec$ we arrive at the standard action of the non-abelian $T$-dual of the principal chiral model~\cite{Fradkin:1984ai, Fridling:1983ha}, namely
\begin{gather*}
S_{\rm 2d}(\ms m) = \frac{1}{2} a \int_\Sigma \bigg\langle \partial_+ \ms m, \frac{1}{\id - \ad_{\ms m}} \partial_- \ms m \bigg\rangle {\rm d}\sigma \wedge {\rm d}\tau.
\end{gather*}

The Lax connection~\eqref{Lax for E-model} is now given by
\begin{gather*}
\L = \bp \bigg( \frac{1}{a} \tilde \Pc_p\big({\rm d}\ms m^1\big) - \frac{1}{a} \Ec \tilde \Pc_p\big({\ast} {\rm d}\ms m^1\big) \bigg).
\end{gather*}
Using the explicit form of the inverse $\bp$ in~\eqref{p for PCM-Tdual} and using~\eqref{Pg for Tdual} we find
\begin{gather*}
\L = \frac{a}{a-z} \frac{1}{\id + \ad_{\ms m}} \partial_+ \ms m \; {\rm d}\sigma^+ - \frac{a}{a+z} \frac{1}{\id - \ad_{\ms m}} \partial_- \ms m \; {\rm d}\sigma^-.
\end{gather*}
This becomes the usual Lax connection of the non-abelian $T$-dual of the principal chiral model after the rescaling $z \mapsto a z$.

\subsection{Fourth order pole}

In this section we give an example of our construction in the case when $\omega$ has a pole of order $4$. We~will let $a > b > 0$ and take
\begin{gather} \label{Fourth 1form}
\omega = \frac{\big(z^2 - a^2\big) \big(b^2 - z^2\big)}{z^4} {\rm d}z.
\end{gather}
The set of poles is $\bz = \bz_{\rm r} = \{ 0, \infty \}$ with orders $n_0 = 4$ and $n_\infty= 2$. The associated levels are $\ell^0_3 = - a^2 b^2$, $\ell^0_2 = 0$, $\ell^0_1 = a^2 + b^2$, $\ell^0_0 = 0$ and $\ell^\infty_1 = 1$. The set of zeroes is $\bzeta = \bzeta_{\rm r} = \{ a, -a, b, -b \}$ with all zeroes being simple.

We let the parameters in~\eqref{Eq:Lz} associated with the set $\bzeta$ of zeroes of $\omega$ be
\begin{gather} \label{Fourth epsilon choice}
\epsilon_{\pm a} = \pm 1, \qquad \epsilon_{\pm b} = \mp 1.
\end{gather}
This choice ensures that $\Ec$ will be positive (for compact $\g$) and such that $\Ec^2 = \id$. Indeed, the latter condition follows since $\epsilon_y^2 = 1$ for all $y \in \bzeta$ and positivity follows from Remark~\ref{rem: CorE simple pole} after noting that $- \epsilon_y \varphi'(y) > 0$ for each $y \in \bzeta$. In~what follows we shall construct all the necessary data associated with the choice of $1$-form~\eqref{Fourth 1form} and parameters~\eqref{Fourth epsilon choice}. We~then extract from this data the action and Lax connection of a new $2$d integrable field theory.

\subsubsection[Lie groups D and K]{Lie groups $\boldsymbol{D}$ and $\boldsymbol{K}$}

Since the pole $0$ is real, the defect Lie algebra~\eqref{defect grp alg inf} is
\begin{gather*}
\da = \g \otimes_\RR \RR[\varepsilon_0] / \big(\varepsilon_0^4\big).
\end{gather*}
As in Section~\ref{sec: PCM+Tdual} we use the abbreviated notation $\ms u^p \coloneqq \ms u^{[0, p]}= \ms u \otimes \varepsilon^p_0$ for any $\ms u \in \g$ and $p \in \{ 0,1,2,3 \}$. The Lie algebra relations in $\da$ read{\samepage
\begin{gather*} 
\big[\ms u^p, \ms v^q\big] = [\ms u, \ms v]^{p+q},
\end{gather*}
for any $\ms u, \ms v \in \g$ and $p, q \in \{ 0,1,2,3 \}$. In~particular, recall this vanishes for $p+q \geq 4$.}

The Lie group $D$ is given by the $3^{\rm rd}$ order jet bundle $J^3 G$ of the Lie group $G$. In~the right trivialisation it is isomorphic to $G \times \g \times \g \times \g$ equipped with the Lie group product and inverse~\cite{Vizman}
\begin{gather*}
(g, \ms u, \ms v, \ms w) (h, \ms x, \ms y, \ms z) = \big( gh, \ms u + \Ad_g \ms x, \ms v + \Ad_g \ms y + [\ms u, \Ad_g \ms x],
\\ \hphantom{(g, \ms u, \ms v, \ms w) (h, \ms x, \ms y, \ms z) = \big(}
{} \ms w + \Ad_g \ms z + 2 [\ms u, \Ad_g \ms y] + [\ms v, \Ad_g \ms x] + [\ms u, [\ms u, \Ad_g \ms x] \big),
\\
(g, \ms u, \ms v, \ms w)^{-1} = \big( g^{-1}, - \Ad_g^{-1} \ms u, - \Ad_g^{-1} \ms v, - \Ad_g^{-1} \ms w + \Ad_g^{-1} [\ms u, \ms v] \big)
\end{gather*}
for every $g, h \in G$ and $\ms u, \ms v, \ms w, \ms x, \ms y, \ms z \in \g$.

We consider the ideal $\k \subset \da$ defined in Section~\ref{sec: PCM sym}, which is given here by
\begin{gather*}
\k = \g \otimes_\RR \varepsilon_0^2 \RR[\varepsilon_0] / \big(\varepsilon_0^4\big) = \big\{ \ms u^2 + \ms v^3 \mid \ms u, \ms v \in \g \big\}.
\end{gather*}
Let $K \subset D$ denote the corresponding connected Lie subgroup of $D$. It~is isomorphic to $\{ \id \} \times \{ 0 \} \times \g \times \g$ which is normal in $G \times \g \times \g \times \g$ since
\begin{gather*}
(g, \ms u, \ms v, \ms w) (\id, 0, \ms y, \ms z) (g, \ms u, \ms v, \ms w)^{-1}
= \big( \id, 0, \Ad_g \ms y, \Ad_g \ms z + 3 [\ms u, \Ad_g \ms y] \big),
\end{gather*}
for any $g \in G$ and $\ms u, \ms v, \ms w, \ms y, \ms z \in \g$. In~particular, the left coset $K \backslash D$ is naturally a Lie group which, as a manifold, is diffeomorphic to $G \times \g \times \{ 0 \} \times \{ 0 \}$.

The bilinear form~\eqref{form on gz} on $\da$ is given explicitly by
\begin{gather*} 
\langle\!\langle \ms u^p, \ms v^q \rangle\!\rangle_{\da} = \left\{
\begin{array}{ll}
- a^2 b^2 \langle \ms u, \ms v \rangle, & \text{if}\quad p+q = 3,\\
\big(a^2 +b^2\big) \langle \ms u, \ms v \rangle, & \text{if}\quad p+q = 1,\\
0, & \textup{otherwise}
\end{array}
\right.
\end{gather*}
for any $\ms u, \ms v \in \g$. In~particular, we see that $\k$ is indeed a Lagrangian subalgebra of $\da$.

\subsubsection[Linear operator E]{Linear operator $\boldsymbol{\Ec}$}

The real vector space~\eqref{gzeta def} associated with the zeroes of $\omega$ is given here by $\gzeta = \g^{\oplus 4}$.
With the choice of parameters $\epsilon_y$ for $y \in \bzeta$ in~\eqref{Fourth epsilon choice} we find by explicitly computing the Cauchy matrix~\eqref{Cauchy mat} that the linear operator $\Ec \colon \da \to \da$ in~\eqref{E def} is given by
\begin{gather}
\Ec\big(\ms u^0 + \ms v^1 + \ms w^2 + \ms x^3\big) = \frac{a^2-ab+b^2}{a-b} \ms v^0 - \frac{a^2 b^2}{a-b} \ms x^0 + \frac{1}{a-b} \ms u^1 - \frac{a b}{a-b} \ms w^1 \notag
\\ \hphantom{\Ec\big(\ms u^0 + \ms v^1 + \ms w^2 + \ms x^3\big) =}
{} + \frac{1}{a-b} \ms v^2 - \frac{ab}{a-b} \ms x^2 + \frac{1}{a b(a - b)} \ms u^3 - \frac{a^2-ab+b^2}{ab(a-b)} \ms w^3,
\label{E order 4}
\end{gather}
for every $\ms u, \ms v, \ms w, \ms x \in \g$. As previously noted after~\eqref{Fourth epsilon choice} we have $\Ec^2 = \id$.
Since $\k \subset \da$ is an ideal we have $\Ad_l^{-1} \k = \k$ for any $l \in D$ and therefore the projector $\Pc_l$ defined as in~\eqref{Eq:KerIm} has kernel and image
\begin{gather*}
\ker \Pc_l = \k = \big\{ \ms u^2 + \ms v^3 \mid \ms u, \ms v \in \g \big\},
\\
\im \Pc_l = \Ec \k = \big\{ a^3 b^3 \ms u^0 + a^2 b^2 \ms v^1 + a^2 b^2 \ms u^2 + \big(a^2-ab+b^2\big) \ms v^3 \mid \ms u, \ms v \in \g \big\},
\end{gather*}
where we used the explicit form of $\Ec$ in~\eqref{E order 4}. Note that $\Pc_l$ is therefore independent of $l$.
Explicitly, we have
\begin{gather}
\Pc_l\big(\ms u^0 + \ms v^1 + \ms w^2 + \ms x^3\big) = \ms u^0 + \ms v^1 + \frac 1{ab} \ms u^2 + \frac{a^2-ab+b^2}{a^2b^2} \ms v^3, \nonumber
\\
\Ec \Pc_l\big(\ms u^0 + \ms v^1 + \ms w^2 + \ms x^3\big) = \frac{b-a}{ab} \ms v^2 + \frac{b-a}{a^2b^2} \ms u^3,
\label{Fourth order Pl EPl}
\end{gather}
for any $\ms u, \ms v, \ms w, \ms x \in \g$.

\subsubsection{Action}

Let $p = (g, \ms u, 0,0) \in C^\infty(\Sigma, D)$ be a representative of a class in $K \backslash D$ in $D$. We~would like to explicitly determine the corresponding action~\eqref{Eq:ActionE}, namely
\begin{gather} \label{Fourth order E-model}
S_{\rm 2d}(p) = \frac{1}{2} \int_\Sigma \big( \langle\!\langle p^{-1} {\rm d}p, \Ec \Pc_p \big({\ast} p^{-1} {\rm d}p\big) \rangle\!\rangle_\da - \langle\!\langle p^{-1} {\rm d}p, \Pc_p \big(p^{-1} {\rm d}p\big) \rangle\!\rangle_\da \big) - \frac 12 I^{\rm WZ}_{\da}[p].
\end{gather}
First, we note that~\cite{Vizman}
\begin{gather*}
p^{-1} {\rm d}p = \big( g^{-1} {\rm d}g \big)^0 + \big(\Ad_g^{-1} {\rm d}\ms u\big)^1 - \frac 12 \big(\Ad_g^{-1} [\ms u, {\rm d}\ms u]\big)^2 + \frac 16 \big( \! \Ad_g^{-1} [\ms u, [\ms u,{\rm d}\ms u] ] \big)^3.
\end{gather*}
Applying the operators~\eqref{Fourth order Pl EPl} we then find
\begin{gather*}
\Pc_p \big(p^{-1} {\rm d}p\big) = \big(g^{-1} {\rm d}g\big)^0 + \big(\Ad_g^{-1} {\rm d}\ms u\big)^1 + \frac 1{ab} \big(g^{-1} {\rm d}g\big)^2 + \frac{a^2-ab+b^2}{a^2b^2} (\Ad_g^{-1} {\rm d}\ms u)^3,
\\
\Ec \Pc_p \big({\ast} p^{-1} {\rm d}p\big) = \frac{b-a}{ab} \big(\Ad_g^{-1} \ast {\rm d}\ms u\big)^2 + \frac{b-a}{a^2b^2} \big({\ast} g^{-1} {\rm d}g\big)^3.
\end{gather*}
The first two terms in the action~\eqref{Fourth order E-model} then take the form
\begin{gather*}
\langle\!\langle p^{-1} {\rm d}p, \Ec \Pc_p \big({\ast} p^{-1} {\rm d}p\big) \rangle\!\rangle_\da
= (a-b) \langle {\rm d}g g^{-1}, \ast {\rm d}g g^{-1} \rangle + ab(a-b) \langle {\rm d}\ms u, \ast {\rm d}\ms u \rangle,\\
\langle\!\langle p^{-1} {\rm d}p, \Pc_p \big(p^{-1} {\rm d}p\big) \rangle\!\rangle_\da
= - (a-b)^2 \langle {\rm d}gg^{-1}, {\rm d}\ms u \rangle + \frac{a^2b^2}2 \bigg\langle [\ms u, {\rm d}\ms u], {\rm d}\ms u + \frac 13 \big[\ms u, {\rm d}gg^{-1}\big] \bigg\rangle.
\end{gather*}
On the other hand, we find that the Wess--Zumino $3$-form is exact since
\begin{gather*}
\langle\!\langle \widehat p^{-1} {\rm d}\widehat p, \big[\widehat p^{-1} {\rm d}\widehat p, \widehat p^{-1} {\rm d}\widehat p\big] \rangle\!\rangle_\da
= {\rm d} \big( \big\langle {\rm d} \widehat g \widehat g^{-1}, 6\big(a^2+b^2\big) {\rm d} \widehat{\ms u} - a^2b^2 [\widehat{\ms u}, [\widehat{\ms u}, {\rm d}\widehat{\ms u}]] \big\rangle
\\ \hphantom{\langle\!\langle \widehat p^{-1} {\rm d}\widehat p, [\widehat p^{-1} {\rm d}\widehat p, \widehat p^{-1} {\rm d}\widehat p] \rangle\!\rangle_\da
= {\rm d} \big(}
{}- a^2b^2 \big\langle [\widehat{\ms u}, [{\rm d} \widehat{\ms u}, {\rm d}\widehat{\ms u}]] \big\rangle \big).
\end{gather*}
Putting all of the above together we then arrive at the action
\begin{gather}
S_{\rm 2d}(g, \ms u) = \int_\Sigma \bigg( \frac 12 (a-b) \langle {\rm d}g g^{-1}, \ast {\rm d}g g^{-1} \rangle + \frac 12 ab (a - b) \langle {\rm d}\ms u, \ast {\rm d}\ms u \rangle - a b \langle {\rm d}g g^{-1}, {\rm d}\ms u \rangle\notag
\\ \hphantom{S_{\rm 2d}(g, \ms u) = \int_\Sigma \bigg(}
{} - \frac 16 a^2 b^2 \langle \ms u, [{\rm d}\ms u, {\rm d}\ms u] \rangle \bigg).
\label{Fourth order action}
\end{gather}
It is interesting to note that in the limit $b \to 0$ we recover the principal chiral model action. In~particular, the model with action~\eqref{Fourth order action} can be seen as a deformation of the principal chiral model to which a new $\g$-valued field $\ms u$ is added. In~fact, removing all the terms involving the field $g$ from the above action we are left with the action of the pseudo-dual of the principal chiral model for the field $\ms u$~\cite{Curtright, Nappi, Zakharov}. One may therefore view the action~\eqref{Fourth order action} as coupling together a principal chiral model field $g$ and a pseudo-dual principal chiral model field $\ms u$ in an~integ\-rable~way.

Note that the pseudo-dual of the principal chiral model was derived very recently in~\cite{Bittleston:2020hfv} starting from 6d holomorphic Chern--Simons theory. It~was argued there that such an action could also be derived directly from 4d Chern--Simons theory where $\omega$ is taken to have a fourth order pole but is regular at infinity. By contrast, in the present work we explicitly required~$\omega$ to have a double pole at infinity in~\eqref{Eq:Omega} and then used the right diagonal gauge invariance by~$G$ in~\eqref{gauge tr} to fix the corresponding edge modes at infinity in Section~\ref{sec: removing inf}. We~expect that by~starting instead from a meromorphic $1$-form $\omega$ with a fourth order pole at the origin and which is regular at infinity we would obtain a gauged version of the action~\eqref{Fourth order action}. Moreover, after fixing the gauge invariance by setting $g = \id$ this action should reduce to that of the pseudo-dual of~the principal chiral model field, as in~\cite{Bittleston:2020hfv}.

\subsubsection{Lax connection}

The Lax connection~\eqref{Lax for E-model} takes the form
\begin{align*}
\L &= \bp \big( \Pc_p\big(p^{-1} {\rm d}p\big) - \Ec \Pc_p\big({\ast} p^{-1} {\rm d}p\big) \big)\\
&= \bp \bigg( j^0 + \Ad_g^{-1} {\rm d}\ms u^1 + \frac 1{ab} j^2 + \frac{a-b}{ab} \Ad_g^{-1} \ast {\rm d}\ms u^2 + \frac{a^2 - ab + b^2}{a^2 b^2} \Ad_g^{-1} {\rm d}\ms u^3 + \frac{a-b}{a^2 b^2} \ast j^3 \bigg),
\end{align*}
where we have introduced the shorthand $j \coloneqq g^{-1} {\rm d}g$ and $\bp$ is the inverse~\eqref{Severa map} of $\jb_{\bz'}$ defined in~\eqref{j z def} given explicitly here by
\begin{align*}
\bp \colon\ \ms u^0 + \ms v^1 + \ms w^2 + \ms x^3 &\longmapsto \frac{a^3(\ms u + a \ms v - b^2(\ms w + a \ms x))}{2(b^2-a^2)(z-a)} + \frac{a^3(-\ms u + a \ms v + b^2(\ms w - a \ms x))}{2(b^2-a^2)(z+a)}\\
&\qquad - \frac{b^3 (\ms u + b \ms v - a^2(\ms w + b \ms x))}{2(b^2-a^2)(z-b)}
- \frac{b^3(-\ms u + b \ms v + a^2(\ms w - b \ms x))}{2(b^2-a^2)(z+b)}.
\end{align*}
The above Lax connection therefore explicitly reads
\begin{gather*}
\L = \bigg( a^2 \frac{j_+ + b \Ad_g^{-1} \partial_+ \ms u}{(a+b) (a-z)} +
b^2 \frac{j_+ - a \Ad_g^{-1} \partial_+ \ms u}{(a+b) (b+z)} \bigg) {\rm d}\sigma^+
\\ \hphantom{\L = \bigg(}
{} + \bigg( a^2 \frac{j_- - b \Ad_g^{-1} \partial_- \ms u}{(a+b) (a+z)} + b^2 \frac{j_- + a \Ad_g^{-1} \partial_- \ms u}{(a+b) (b-z)} \bigg) {\rm d}\sigma^-.
\end{gather*}

Note that in the limit $b \to 0$ we recover the Lax connection of the principal chiral model, in the form given in~\eqref{Lax PCM}. This is in agreement with the observation made above about the action~\eqref{Fourth order action}.

The flatness of $\L$ is equivalent to
\begin{subequations} 
\begin{gather}
\label{n eq 4 eom a}
{\rm d}j + \frac 12 [j, j] = 0,\\
\label{n eq 4 eom b}
{\rm d}\big(\Ad_g^{-1} {\rm d}\ms u\big) + \Ad_g^{-1} \big[{\rm d}\ms u, {\rm d}g g^{-1}\big] = 0,\\
\label{n eq 4 eom c}
{\rm d} \ast {\rm d}\ms u + \frac{a b [{\rm d}\ms u, {\rm d}\ms u]}{2(a-b)} + \frac{\big[{\rm d}gg^{-1}, {\rm d}gg^{-1}\big]}{2(a-b)} = 0,\\
\label{n eq 4 eom d}
{\rm d}\big({\ast} {\rm d}gg^{-1}\big) + \frac{ab}{a-b} \big[{\rm d}\ms u, {\rm d}g g^{-1}\big] = 0.
\end{gather}
\end{subequations}
The equations~\eqref{n eq 4 eom a} and~\eqref{n eq 4 eom b} are both identically true off-shell. The first is the Maurer--Cartan equation for $j$ and the second holds because
\begin{align*}
{\rm d}\big(\Ad_g^{-1} {\rm d}\ms u\big) &= {\rm d}\big(g^{-1} {\rm d}\ms u g\big) = - g^{-1} {\rm d}g g^{-1} \wedge {\rm d}\ms u g - g^{-1} {\rm d}\ms u \wedge {\rm d}g\\
&= - \frac 12 \Ad_g^{-1} \big( \big[{\rm d}\ms u, {\rm d}g g^{-1}\big] + \big[dg g^{-1}, {\rm d}\ms u\big] \big) = - \Ad_g^{-1} \big[{\rm d}\ms u, {\rm d}g g^{-1}\big].
\end{align*}
The equations~\eqref{n eq 4 eom c} and~\eqref{n eq 4 eom d} coincide with the equations of motion obtained from the action~\eqref{Fourth order action}, as expected.

\subsection{Real simple zeroes and poles} \label{sec: intro example}

In this final section we discuss the example mentioned in the introduction. Let $\omega$ be given as in~\eqref{omega intro}, namely
\begin{gather*}
\omega = - \ell^\infty_1 \frac{\prod_{i=1}^N (z - \zeta_i)}{\prod_{i=1}^N (z - z_i)} {\rm d}z,
\end{gather*}
where the poles and zeroes are all real and distinct. In~the notation of Section~\ref{sec: defects} we then have $\bz = \bz_{\rm r} = \{ z_i \}_{i=1}^N$ and $\bzeta = \bzeta_{\rm r} = \{ \zeta_i \}_{i=1}^N$. And since all the poles and zeroes are simple we have $n_x = 1$ for all $x \in \bz$ and $m_y = 1$ for all $y \in \bzeta$.

As in the introduction, we shall use the shorthand notation $\epsilon_i \coloneqq \epsilon_{\zeta_i}$ for every $i = 1, \dots, N$ but leave these real non-zero parameters arbitrary.

Since all the poles of $\omega$ are real and simple, the defect Lie algebra~\eqref{defect grp alg inf} is simply given by the direct sum of Lie algebras $\da = \g^{\oplus N}$. The corresponding Lie group is $D = G^{\times N}$. As in the introduction, we shall leave the choice of Lagrangian subalgebra $\k \subset \da$ unspecified, and only assume that it satisfies the technical condition~\eqref{Eq:Direct}. We~denote the corresponding connected Lie subgroup by $K \subset D$.

Since all the zeroes of $\omega$ are real and simple, the real vector space~\eqref{gzeta def} associated with these zeroes is given here by $\gzeta = \g^{\oplus N}$.

The linear isomorphism $\C \colon \gzeta = \g^{\oplus N} \to \da = \g^{\oplus N}$ from Lemma~\ref{lem: Cauchy} is given in components by~\eqref{Cauchy mat}, namely
\begin{gather} \label{simple Cauchy mat}
C^i_{\; j} \coloneqq C^{\up{z_i, 0}}_{\quad\;\; \uz{\zeta_j, 0}} = \frac{1}{z_i - \zeta_j}
\end{gather}
for $i, j = 1, \dots, N$. These are simply the components of the usual Cauchy matrix. It~is well know that the inverse of the Cauchy matrix~\eqref{simple Cauchy mat} has components
\begin{gather} \label{inv Cauchy intro ex}
\big(C^{-1}\big)^i_{\; j} = \frac{\prod_{r \neq i} (\zeta_r - z_j) \prod_r (z_r - \zeta_i)}{\prod_{r \neq j} (z_r - z_j) \prod_{r \neq i} (\zeta_r - \zeta_i)}
\end{gather}
for $i, j = 1, \dots, N$. In~particular, since $\widetilde \Ec \colon \gzeta = \g^{\oplus N} \to \gzeta = \g^{\oplus N}$ defined in~\eqref{Etilde def} is given in components by the diagonal matrix $\diag(\epsilon_1, \dots, \epsilon_N)$, the components of the linear operator $\Ec \colon \da = \g^{\oplus N} \to \da = \g^{\oplus N}$ defined in~\eqref{E def} are given by
\begin{gather*}
\sum_{j=1}^N C^i_{\; j} \epsilon_j \big(C^{-1}\big)^j_{\; k} = \sum_{j=1}^N \epsilon_j \frac{\prod_{r \neq j} (\zeta_r - z_k) \prod_{r \neq i} (z_r - \zeta_j)}{\prod_{r \neq k} (z_r - z_k) \prod_{r \neq j} (\zeta_r - \zeta_j)}.
\end{gather*}
These coincide with the components given in~\eqref{E intro}.

Finally, to compute the Lax connection using the general formula~\eqref{Lax for E-model} we need to compute the inverse $\bp$ in~\eqref{Severa map} of the map $\jb_{\bz'}$ defined in~\eqref{j z def}. The latter reads
\begin{gather*}
\jb_{\bm z'} \colon\ \sum_{j=1}^N \frac{\ms u_j}{z-\zeta_j} \longmapsto \bigg( \sum_{j=1}^N \frac{\ms u_j}{z_i - \zeta_j} \bigg)_{i=1}^N = \bigg( \sum_{j=1}^N C^i_{\; j} \ms u_j \bigg)_{i=1}^N,
\end{gather*}
where the equality is by definition~\eqref{simple Cauchy mat} of the Cauchy matrix. Its inverse is then clearly gi\-ven~by
\begin{gather} \label{p for intro}
\bp \colon\ (\ms v_i)_{i=1}^N \longmapsto \sum_{i,j=1}^N \frac{\big(C^{-1}\big)^i_{\;\; j} \ms v_j}{z-\zeta_i}.
\end{gather}
According to~\eqref{Lax for E-model}, the Lax connection is now given by $\L = \bp(\J_\sigma {\rm d}\sigma + \J_\tau {\rm d}\tau)$, where the $\d$-valued fields $\J_\sigma, \J_\tau \in C^\infty(\Sigma, \d)$ are given in components by
\begin{gather*}
\J_\sigma = (\J^i_\sigma)_{i=1}^N = l^{-1} \partial_\sigma l - \Ec^{-1} \Pc_l \big( \Ec \big(l^{-1} \partial_\sigma l\big) - l^{-1} \partial_\tau l \big),
\\
\J_\tau = (\J^i_\tau)_{i=1}^N = \Ec\big(l^{-1} \partial_\sigma l\big) - \Pc_l \big( \Ec \big(l^{-1} \partial_\sigma l\big) - l^{-1} \partial_\tau l \big).
\end{gather*}
In other words, using the explicit form~\eqref{p for intro} of the linear map $\bp$ we deduce that the Lax connection reads
\begin{gather*}
\L = \sum_{i,j=1}^N \frac{\big(C^{-1}\big)^i_{\;\; j} \big( \J^j_\sigma {\rm d}\sigma + \J^j_\tau {\rm d}\tau \big)}{z-\zeta_i},
\end{gather*}
which corresponds to the expression~\eqref{Lax intro} from the introduction using the explicit inverse of the Cauchy matrix in~\eqref{inv Cauchy intro ex}.

\section{Outlook} \label{sec: outlook}

In this work we constructed a very broad family of integrable $\Ec$-models using the formalism of Costello--Yamazaki~\cite{Costello:2019tri} by starting from the general $2$d action obtained in~\cite{Benini:2020skc}. There are a~number of interesting open problems.

\subsection{Hamiltonian formalism} In this work we focused entirely on constructing the actions of the new family of $2$d integrable field theories. Indeed, the formalism of Costello--Yamazaki~\cite{Costello:2019tri} is most convenient for describing integrable field theories in the Lagrangian formalism.

By contrast, $2$d integrable field theories can be best described in the Hamiltonian formalism using the framework of classical dihedral affine Gaudin models proposed in~\cite{Vicedo:2017cge}, and further developed in~\cite{Delduc:2019bcl, Lacroix:2019xeh}. The formalisms of~\cite{Costello:2019tri} and~\cite{Vicedo:2017cge} were shown to be intimately related in~\cite{Vicedo:2019dej} by performing a Hamiltonian analysis of $4$d Chern--Simons theory. It~would therefore be interesting to perform the Hamiltonian analysis of the family of integrable $\Ec$-model actions described in the present work. In~particular, one should show that the Poisson bracket of~the Lax matrix is of the Maillet $r/s$-form~\cite{Maillet:1985fn, Maillet:1985ek} with twist function, which is equivalent to describing these models as classical dihedral affine Gaudin models. We~will come back to this in a forthcoming paper~\cite{InPrep}.

\subsection[Degenerate E-model]{Degenerate $\boldsymbol{\Ec}$-model}

An important restriction we imposed on the general setting of~\cite{Benini:2020skc} is that the $1$-form $\omega$ had a~double pole at infinity. This allowed us in Section~\ref{sec: removing inf} to partially fix the gauge invariance of the $2$d action of~\cite{Benini:2020skc}, by bringing the component of the edge mode at infinity to the identity.

It would be natural to try to extend our construction to the general setting of~\cite{Benini:2020skc} by allowing arbitrary orders at all the poles of $\omega$. The resulting $2$d integrable field theory would have an~additional gauge invariance and so it is natural to expect that this generalisation would lead to an~integrable family of the class of degenerate $\Ec$-models introduced in~\cite{Klimcik:1996np}, see also~\cite{Klimcik:2018vhl, Klimcik:2019kkf, Sfetsos:1999zm, Squellari:2011dg}, which, in particular, would include by~\cite{Klimcik:2019kkf} the bi-Yang--Baxter model with WZ-term~\cite{Delduc:2017fib}.

\subsection[Integrable E-model hierarchy]{Integrable $\boldsymbol{\Ec}$-model hierarchy}

A crucial step in our analysis was imposing the condition~\eqref{Eq:Lz} on the coefficients in the partial fraction decomposition of the components of the Lax connection $\L$. Indeed, this condition gave a~particular way of satisfying the admissibility condition $(b)$ from Section~\ref{sec: red to 2d} and we showed that within this class of admissible $1$-forms $\L$ there was a unique solution $\L = \L(l)$ to the boundary equation~\eqref{L l constraint} relating $\L$ to the edge mode $l \in C^\infty(\Sigma, D)$. Moreover, the condition~\eqref{Eq:Lz} is at the origin of the introduction of the operator $\Ec$ in our construction.

However,~\eqref{Eq:Lz} is by no means the only way to solve the admissibility condition $(b)$, and it would be very interesting to explore other classes of admissible $1$-forms $\L$. In~the case when the zeroes of $\omega$ are all simple, an obvious alternative way to solve the admissibility condition $(b)$ is to work in a representation of the complex Lie algebra $\g^\CC$, pick $n \in \ZZ_{\geq 1}$ and set
\begin{gather} \label{E-model hierarchy}
\L_{t_n}^{\uz{y, 0}} = \epsilon_{n, y} \big( \L_\sigma^{\uz{y, 0}} \big)^n,
\end{gather}
for every $y \in \bzeta$ and some choice of $\epsilon_{n, y}$. On the left hand side we used the notation $t_n$ instead of~$\tau$ for the time coordinate since we expect the corresponding model to be related to a different flow in the same hierarchy. Indeed, the above solution~\eqref{E-model hierarchy} of the admissibility condition $(b)$ is motivated by the expressions for the Lax matrices inducing higher flows in the integrable hierarchies of affine Gaudin models~\cite{Lacroix:2017isl}. Explicitly, when $\g^\CC$ is of type $B$, for instance, to~each simple zero $y \in \bzeta$ and every odd positive integer $n$ is associated a higher flow $\partial_{t_n}$ with corresponding Lax matrix
\begin{gather*}
\L_{t_n} = \frac{\big( \L_\sigma^{\uz{y, 0}} \big)^n}{z-y}.
\end{gather*}
We therefore expect from~\cite{Lacroix:2017isl} that the Lax connection defined by imposing~\eqref{E-model hierarchy} corresponds to a~higher flow of the same integrable $\Ec$-model hierarchy. It~would be very interesting to investigate this further. In~particular, we expect from~\cite{Lacroix:2017isl}, see also~\cite{Evans:2001sz, Evans:1999mj, Evans:2000qx}, that when $\g^\CC$ is of type~$B$, $C$ or $D$ the condition~\eqref{E-model hierarchy} should give non-trivial commuting flows for all odd $n \in \ZZ_{\geq 1}$, corresponding to the set of exponents of the (untwisted) affine Kac--Moody algebra associated with $\g^\CC$. In~type $A$ we also expect that one should have to modify the ansatz~\eqref{E-model hierarchy} accordingly to produce commuting flows~\cite{Evans:2001sz, Evans:1999mj, Evans:2000qx, Lacroix:2017isl}.

\subsection{3d Chern--Simons theory}

\def\Ann{\raisebox{-.2mm}{\begin{tikzpicture}[scale=0.065]
 \path [draw=none,fill=gray, fill opacity = 0.1] (0,-1) circle (2);
 \path [draw=none,fill=white, fill opacity = 1] (0,-1) circle (.7);
 \draw[solid] (0,-1) circle (.7);
 \draw[solid] (0,-1) circle (2);
\end{tikzpicture}}}
\def\Disc{\raisebox{-.2mm}{\begin{tikzpicture}[scale=0.065]
 \path [draw=none,fill=gray, fill opacity = 0.1] (0,-1) circle (2);
 \draw[solid] (0,-1) circle (2);
\end{tikzpicture}}}

The $\Ec$-model on the infinite cyclinder $S^1 \times \RR$ was shown in~\cite{Severa:2016prq} to arise from $3$d Chern--Simons theory for the Lie group $D$ on the solid cylinder $\Disc \times \RR$, with $\Disc$ a disc, by imposing a suitable boundary condition on the gauge field at the boundary $\partial \, \Disc \simeq S^1$. Moreover, the $\sigma$-model on~$K \backslash D$ was also obtained from $3$d Chern--Simons theory on a hollowed out cylinder $\Ann \times \RR$, with $\Ann$ an annulus, by imposing the same boundary condition as before on the gauge field at the outer boundary and another topological boundary condition depending on the choice of~Lagrangian subalgebra $\k \subset \da$ at the inner boundary.

It would be interesting to understand if, in the integrable case, there is a relation between the above description of the $\sigma$-model on $K \backslash D$ from $3$d Chern--Simons derived in~\cite{Severa:2016prq} and the description from $4$d Chern--Simons theory obtained here.

Another possible connection to $3$d Chern--Simons theory is suggested by the results of~\cite{Schmidtt:2017ngw, Schmidtt:2018hop} where the action of the $\lambda$-deformation~\cite{Sfetsos:2013wia} of the principal chiral model, in the form of the universal $2$d action~\eqref{2d action}, was obtained from a certain ``doubled'' version of $3$d Chern--Simons theory on $\Disc \times \RR$.

\appendix

\section{Proof of Proposition~\ref{PropE}} \label{sec: PropE}

As noted in Section~\ref{sec: operators E Pl}, the direct sum~\eqref{Eq:Direct} is orthogonal with respect to $\langle\!\langle \cdot, \cdot \rangle\!\rangle_{\da, \Ec}$. The corresponding projector $\Pc_l$ is then symmetric with respect to $\langle\!\langle \cdot, \cdot \rangle\!\rangle_{\da, \Ec}$. Hence, for any $\ms U, \ms V \in \da$, we~have
\begin{gather*}
\langle\!\langle \ms U, \tp\Pc_l \ms V \rangle\!\rangle_\da = \langle\!\langle \Pc_l \ms U, \ms V \rangle\!\rangle_\da = \langle\!\langle \Pc_l \ms U, \Ec \ms V \rangle\!\rangle_{\da, \Ec} = \langle\!\langle \ms U, \Pc_l \Ec \ms V \rangle\!\rangle_{\da, \Ec} = \langle\!\langle \ms U, \Ec^{-1} \Pc_l \Ec \ms V \rangle\!\rangle_\da,
\end{gather*}
where we used the symmetry of $\Pc_l$ with respect to $\langle\!\langle \cdot, \cdot \rangle\!\rangle_{\da, \Ec}$ in the third step. Part $(i)$ now follows from the non-degeneracy of $\langle\!\langle \cdot, \cdot \rangle\!\rangle_\da$.

Using this result and the definition~\eqref{Eq:Pb} of $\Pb_l$, we then get
\begin{gather*}
\Pc_l \Ec + \Ec \Pb_l = \Pc_l \Ec - \Ec\, \tp\Pc_l + \Ec = \Ec.
\end{gather*}
This is the first equation in $(ii)$. The second one is simply obtained by multiplying it on both sides by $\Ec^{-1}$.

Part $(iii)$ is easily proved by observing that $\tp\Pb_l \,\Pc_l = (\id - \Pc_l)\Pc_l =0$ since $\Pc_l$ is a projector, and similarly for $\tp \Pc_l \Pb_l = 0$.

Let us now prove $(iv)$. Let $\ms U, \ms V\in \da$. From the definition~\eqref{Eq:Pb} of $\Pb_l$, we have
\begin{align*}
\langle\!\langle \big(\Pc_l - \Pb_l\big) \ms U, \ms V \rangle\!\rangle_\da
&= \langle\!\langle \Pc_l \ms U, \ms V \rangle\!\rangle_\da - \langle\!\langle \ms U, (\id-\Pc_l) \ms V \rangle\!\rangle_\da\\
&= \langle\!\langle \Pc_l \ms U, \Pc_l \ms V + (\id - \Pc_l) \ms V \rangle\!\rangle_\da - \langle\!\langle \Pc_l \ms U + (\id - \Pc_l) \ms U, (\id-\Pc_l) \ms V \rangle\!\rangle_\da\\
&= \langle\!\langle \Pc_l \ms U, \Pc_l \ms V \rangle\!\rangle_\da - \langle\!\langle (\id - \Pc_l) \ms U, (\id-\Pc_l) \ms V \rangle\!\rangle_\da.
\end{align*}
Clearly $(\id-\Pc_l) \ms U$ and $(\id-\Pc_l) \ms V$ belong to $\ker\Pc_l=\Ad_l^{-1}\k$. By the ad-invariance of $\langle\!\langle \cdot, \cdot \rangle\!\rangle_\da$ and the isotropy of $\k$, we thus get $\langle\!\langle (\id - \Pc_l) \ms U, (\id-\Pc_l) \ms V \rangle\!\rangle_\da = 0$, leaving only the first term in the above equation. Moreover, a similar computation can be performed with $\Pc_l$ and $\Pb_l$ exchanged. In~the end, we get
\begin{gather}\label{Eq:PP}
\langle\!\langle (\Pc_l - \Pb_l) \ms U, \ms V \rangle\!\rangle_\da = \langle\!\langle \Pc_l \ms U, \Pc_l \ms V \rangle\!\rangle_\da = - \langle\!\langle \Pb_l \ms U, \Pb_l \ms V \rangle\!\rangle_\da.
\end{gather}
Bringing all operators on the left-hand side of the bilinear form $\langle\!\langle \cdot, \cdot \rangle\!\rangle_\da$ and using its non-degeneracy, we get $\Pc_l - \Pb_l = \tp \Pc_l \Pc_l = - \tp\Pb_l \Pb_l$, proving part $(iv)$.

Part $(v)$ then follows from
\begin{gather}
\Pc_l \tp\Pc_l = \big(1-\tp\Pb_l\big)\big(1-\Pb_l\big) = 1 - \tp\Pb_l - \Pb_l + \tp \Pb_l\Pb_l = \Pc_l - \Pb_l + \Pb_l - \Pc_l = 0,
\end{gather}
and from a similar computation for $\Pb_l\tp\Pb_l$.

Finally, we note that $\Pc_l \ms U$ and $\Pc_l \ms V$ in~\eqref{Eq:PP} belong to $\im\Pc_l = \Ec\Ad_l^{-1}\k$. If $\Ec^2=\id$, the latter is an isotropic subspace and hence $\langle\!\langle \big(\Pc_l - \Pb_l\big) \ms U, \ms V \rangle\!\rangle_\da=\langle\!\langle \Pc_l \ms U, \Pc_l \ms V \rangle\!\rangle_\da =0$. So $(vi)$ follows from the non-degeneracy of $\langle\!\langle \cdot, \cdot \rangle\!\rangle_\da$.

\section{Proof of Proposition~\ref{prop: EM tensor}}\label{App:EMT}

It is convenient to first consider a general model describing a field $l\in C^\infty(\Sigma,D)$, with an action of the form
\begin{gather}
 S(l) = \int_\Sigma \bigg( \frac{1}{2} \langle\!\langle l^{-1} \partial_\tau l, \Oc^{\tau\tau} l^{-1}\partial_\tau l \rangle\!\rangle_\da - \frac{1}{2} \langle\!\langle l^{-1} \partial_\sigma l, \Oc^{\sigma\sigma} l^{-1}\partial_\sigma l \rangle\!\rangle_\da \notag \\ \hphantom{ S(l) = \int_\Sigma \bigg(}
{} + \langle\!\langle l^{-1} \partial_\tau l, \Oc^{\tau\sigma} l^{-1}\partial_\sigma l \rangle\!\rangle_\da \bigg) {\rm d}\sigma \wedge {\rm d}\tau - \frac{1}{2} I^{\rm WZ}_{\da}[l],
\label{Eq:ActionApp}
\end{gather}
where $\Oc^{\tau\tau}$, $\Oc^{\sigma\sigma}$ and $\Oc^{\tau\sigma}$ are linear operators on $\da$, which can depend on the field $l$ but not on its derivatives. Without loss of generality, we can suppose $\Oc^{\tau\tau}$ and $\Oc^{\sigma\sigma}$ symmetric with respect to $\langle\!\langle \cdot, \cdot \rangle\!\rangle_\da$. Note that the model~\eqref{Eq:ActionApp} is relativistic if and only if $\Oc^{\tau\tau}=\Oc^{\sigma\sigma}$ and $\Oc^{\tau\sigma}$ is skew-symmetric with respect to $\langle\!\langle \cdot, \cdot \rangle\!\rangle_\da$. The goal of this appendix is to prove the following result.

\begin{Lemma}\label{PropEMT}
The components of the energy-momentum tensor of the model~\eqref{Eq:ActionApp} are given by
\begin{gather*}
T^\tau_{\;\;\,\tau} = -T^\sigma_{\;\;\,\sigma} = \frac{1}{2} \langle\!\langle l^{-1} \partial_\tau l, \Oc^{\tau\tau} l^{-1}\partial_\tau l \rangle\!\rangle_\da + \frac{1}{2} \langle\!\langle l^{-1} \partial_\sigma l, \Oc^{\sigma\sigma} l^{-1}\partial_\sigma l \rangle\!\rangle_\da,
\\
T^\tau_{\;\;\,\sigma}= \langle\!\langle l^{-1} \partial_\tau l, \Oc^{\tau\tau} l^{-1}\partial_\sigma l \rangle\!\rangle_\da + \langle\!\langle l^{-1} \partial_\sigma l, \Oc^{\tau\sigma} l^{-1}\partial_\sigma l \rangle\!\rangle_\da,
\\
T^\sigma_{\;\;\,\tau}= -\langle\!\langle l^{-1} \partial_\tau l, \Oc^{\sigma\sigma} l^{-1}\partial_\sigma l \rangle\!\rangle_\da + \langle\!\langle l^{-1} \partial_\tau l, \Oc^{\tau\sigma} l^{-1}\partial_\tau l \rangle\!\rangle_\da.
\end{gather*}
\end{Lemma}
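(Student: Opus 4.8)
The plan is to compute the canonical (Noether) energy-momentum tensor attached to the invariance of \eqref{Eq:ActionApp} under translations of the worldsheet coordinates $(\tau,\sigma)$, and then to check that the topological Wess--Zumino term drops out of it. Writing $L$ for the Lagrangian density, i.e.\ the coefficient of $\dd\sigma\wedge\dd\tau$ on the first line of \eqref{Eq:ActionApp}, the canonical tensor is $T^\mu_{\;\;\,\nu}=\frac{\partial L}{\partial(\partial_\mu l)}\,\partial_\nu l-\delta^\mu_{\;\;\nu}L$. The essential simplification is that $L$ depends on the field $l$ only through the Maurer--Cartan currents $l^{-1}\partial_\mu l$, and that the right-hand side $l^{-1}\partial_\nu l$ carries the same left factor $l^{-1}$; consequently $\frac{\partial L}{\partial(\partial_\mu l)}\,\partial_\nu l=\langle\!\langle \Oc^\mu,\,l^{-1}\partial_\nu l\rangle\!\rangle_\da$, where $\Oc^\mu\in\da$ is the conjugate momentum defined by $\delta L=\langle\!\langle \Oc^\tau,\delta(l^{-1}\partial_\tau l)\rangle\!\rangle_\da+\langle\!\langle \Oc^\sigma,\delta(l^{-1}\partial_\sigma l)\rangle\!\rangle_\da$. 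That the currents transform as scalars under translation, so that this Noether procedure applies cleanly, follows from the Maurer--Cartan flatness $\dd(l^{-1}\dd l)+\tfrac12[l^{-1}\dd l,l^{-1}\dd l]=0$.

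First I would read off the momenta. Using the assumed symmetries $\tp\Oc^{\tau\tau}=\Oc^{\tau\tau}$ and $\tp\Oc^{\sigma\sigma}=\Oc^{\sigma\sigma}$ together with the symmetry of $\langle\!\langle\cdot,\cdot\rangle\!\rangle_\da$, a direct variation gives $\Oc^\tau=\Oc^{\tau\tau}(l^{-1}\partial_\tau l)+\Oc^{\tau\sigma}(l^{-1}\partial_\sigma l)$ and $\Oc^\sigma=-\Oc^{\sigma\sigma}(l^{-1}\partial_\sigma l)+\tp\Oc^{\tau\sigma}(l^{-1}\partial_\tau l)$. Substituting these into $T^\mu_{\;\;\,\nu}=\langle\!\langle\Oc^\mu,l^{-1}\partial_\nu l\rangle\!\rangle_\da-\delta^\mu_{\;\;\nu}L$ and expanding, the four components fall out. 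In the diagonal entries the mixed pieces $\langle\!\langle\Oc^{\tau\sigma}(l^{-1}\partial_\sigma l),l^{-1}\partial_\tau l\rangle\!\rangle_\da$ coming from $\Oc^\mu$ cancel against the cross term of $L$ precisely because $\langle\!\langle\cdot,\cdot\rangle\!\rangle_\da$ is symmetric, leaving the purely quadratic expressions and the relation $T^\tau_{\;\;\,\tau}=-T^\sigma_{\;\;\,\sigma}$. The off-diagonal entries $T^\tau_{\;\;\,\sigma}=\langle\!\langle\Oc^\tau,l^{-1}\partial_\sigma l\rangle\!\rangle_\da$ and $T^\sigma_{\;\;\,\tau}=\langle\!\langle\Oc^\sigma,l^{-1}\partial_\tau l\rangle\!\rangle_\da$ follow immediately after transporting operators across the form with $\tp\Oc^{\tau\tau}=\Oc^{\tau\tau}$ and $\langle\!\langle\tp\Oc^{\tau\sigma}\cdot,\cdot\rangle\!\rangle_\da=\langle\!\langle\cdot,\Oc^{\tau\sigma}\cdot\rangle\!\rangle_\da$.

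The hard part will be arguing that the Wess--Zumino term $-\tfrac12 I^{\rm WZ}_\da[l]$ contributes nothing to $T^\mu_{\;\;\,\nu}$, together with fixing the orientation conventions. For the former I would represent the pulled-back WZ density locally as $\tfrac12 B_{ab}(l)\,\partial_\tau l^a\,\partial_\sigma l^b$ with $B$ antisymmetric; its canonical momenta are then proportional to $B_{ab}\partial_\sigma l^b$ and $B_{ab}\partial_\tau l^b$, and each of the four combinations $\langle\!\langle\Oc^\mu_{\rm WZ},l^{-1}\partial_\nu l\rangle\!\rangle-\delta^\mu_{\;\;\nu}L_{\rm WZ}$ vanishes because it pairs the antisymmetric $B$ with a symmetric product of two first derivatives. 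Equivalently one may invoke that $I^{\rm WZ}_\da[l]$ is metric-independent, so it cannot feed into the energy-momentum tensor. I would then fix the overall sign by demanding that $T^\tau_{\;\;\,\tau}$ be the Hamiltonian density, which pins down the reading of $\dd\sigma\wedge\dd\tau$ as the measure; the remainder is bookkeeping of the four traces using the symmetry of the form and of $\Oc^{\tau\tau},\Oc^{\sigma\sigma}$. Proposition~\ref{prop: EM tensor} then follows by specialising $\Oc^{\tau\tau}=\Ec^{-1}\Pc_l$, $\Oc^{\sigma\sigma}=\Ec\overline{\Pc}_l$ and $\Oc^{\tau\sigma}=\tfrac12(\overline{\Pc}_l-\tp\Pc_l)$, which are symmetric where required by parts $(i)$ and $(ii)$ of Proposition~\ref{PropE}, and simplifying.
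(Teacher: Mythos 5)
Your proof is correct and essentially identical to the paper's: both compute the canonical tensor $T^\mu_{\;\;\,\nu}=\frac{\partial L}{\partial(\partial_\mu y)}\,\partial_\nu y-\delta^\mu_{\;\,\nu}L$ using that $L$ depends on derivatives of $l$ only through the Maurer--Cartan currents (the paper merely makes this explicit via coordinates $y^M$ and vielbeins), and both dispose of the Wess--Zumino contribution by pairing its skew-symmetric local representative against symmetric products of derivatives in the off-diagonal components and by cancellation against $\delta^\mu_{\;\,\nu}L$ on the diagonal. Your closing identification $\Oc^{\tau\sigma}=\tfrac12\big(\overline{\Pc}_l-\null^t\Pc_l\big)$ is in fact the normalisation consistent with the cross-term coefficient in~\eqref{Eq:ActionApp}, so no objection there.
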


\begin{proof}
Let us fix a basis $\lbrace	I_A \rbrace_{A=1,\dots,\dim\da}$ of the Lie algebra $\da$ and a choice of local coordinates $\big\lbrace y^M \big\rbrace_{M=1,\dots,\dim\da}$ on the group manifold $D$, parametrising the element $l$ in $D$. We~define the vielbeins $\e AM$ through the decomposition
\begin{gather*}
l^{-1} \frac{\partial l}{\partial y^M} = \e AM I_A.
\end{gather*}
We may then express the components $l^{-1} \partial_\mu l$ for $\mu=\tau,\sigma$ of the Maurer--Cartan current as
\begin{gather}\label{Eq:MaurerVielbeins}
l^{-1} \partial_\mu l = \e AM \,\partial_\mu y^M \, I_A.
\end{gather}
For any linear operator $\Oc$ on $\da$, let us also define
\begin{gather*}
\Oc_{AB} \coloneqq \langle\!\langle I_A, \Oc I_B \rangle\!\rangle_\da.
\end{gather*}
The symmetry of $\Oc^{\tau\tau}$ and $\Oc^{\sigma\sigma}$ then translates to the fact that $\Oc^{\tau\tau}_{AB}=\Oc^{\tau\tau}_{BA}$ and $\Oc^{\sigma\sigma}_{AB}=\Oc^{\sigma\sigma}_{BA}$ for every $A,B = 1,\dots,\dim\da$. Using the above definitions, we then rewrite the action~\eqref{Eq:ActionApp} as the integral over $\Sigma$ of the Lagrangian density
\begin{gather*}
L = \e AM \e BN \bigg( \frac{1}{2} \Oc^{\tau\tau}_{AB}\, \partial_\tau y^M \partial_\tau y^N - \frac{1}{2} \Oc^{\sigma\sigma}_{AB}\, \partial_\sigma y^M \partial_\sigma y^N + (\Oc^{\tau\sigma}_{AB}+W_{AB})\, \partial_\tau y^M \partial_\sigma y^N \bigg),
\end{gather*}
where $W_{AB}=-W_{BA}$ describes the contribution of the Wess--Zumino term to the action (such a rewriting of the Wess--Zumino term as a two-dimensional integral is always possible, at least locally~--- as we shall see, an explicit expression for $W_{AB}$ will not be needed in what follows).

The energy-momentum tensor of the model can be computed explicitly in terms of the Lag\-rangian density $L$ as
\begin{gather}\label{Eq:TGen}
T^\mu_{\;\;\,\nu} = \frac{\partial L}{\partial(\partial_\mu y^M)} \partial_\nu y^M - \delta^\mu_{\;\,\nu} \,L.
\end{gather}
From the above expression of $L$, the symmetry of $\Oc^{\tau\tau}_{AB}$ and $\Oc^{\sigma\sigma}_{AB}$ and the skew-symmetry of~$W_{AB}$, we get
\begin{subequations}
\begin{gather}
\frac{\partial L}{\partial(\partial_\tau y^M)} = \e AM \e BN \big( \Oc^{\tau\tau}_{AB} \partial_\tau y^N + (\Oc^{\tau\sigma}_{AB}+W_{AB}) \partial_\sigma y^N \big), \label{Eq:DerL}
\\
\frac{\partial L}{\partial(\partial_\sigma y^M)} = \e AM \e BN \big({-}\Oc^{\sigma\sigma}_{AB} \partial_\sigma y^N + (\Oc^{\tau\sigma}_{BA}-W_{AB}) \partial_\tau y^N \big).
\end{gather}
\end{subequations}
We then deduce that
\begin{gather*}
T^\tau_{\;\;\,\tau} = \frac{1}{2} \e AM \e BN \big( \Oc^{\tau\tau}_{AB} \, \partial_\tau y^M \partial_\tau y^N + \Oc^{\sigma\sigma}_{AB} \, \partial_\sigma y^M \partial_\sigma y^N \big).
\end{gather*}
Note in particular that the parts proportional to $\Oc^{\tau\sigma}_{AB}+W_{AB}$ coming from the two terms in the right-hand side of~\eqref{Eq:TGen} cancel for $\mu=\nu=\tau$. A similar computation yields $T^\sigma_{\;\;\,\sigma} = -T^\tau_{\;\;\,\tau}$. In~terms of the Maurer--Cartan currents~\eqref{Eq:MaurerVielbeins}, we find
\begin{gather*}
T^\tau_{\;\;\,\tau} = -T^\sigma_{\;\;\,\sigma} = \frac{1}{2} \langle\!\langle l^{-1} \partial_\tau l, \Oc^{\tau\tau} l^{-1}\partial_\tau l \rangle\!\rangle_\da + \frac{1}{2} \langle\!\langle l^{-1} \partial_\sigma l, \Oc^{\sigma\sigma} l^{-1}\partial_\sigma l \rangle\!\rangle_\da.
\end{gather*}

Let us now compute the crossed-term $T^\tau_{\;\;\,\sigma}$. Substituting the expression~\eqref{Eq:DerL} in~\eqref{Eq:TGen}, we obtain
\begin{gather*}
T^\tau_{\;\;\,\sigma} = \e AM \e BN \big( \Oc^{\tau\tau}_{AB} \, \partial_\sigma y^M \partial_\tau y^N + (\Oc^{\tau\sigma}_{AB}+W_{AB})\, \partial_\sigma y^M \partial_\sigma y^N \big).
\end{gather*}
The term containing $W_{AB}$ vanishes, as it is given by the contraction of the symmetric tensor $\e AM \e BN \partial_\sigma y^M \partial_\sigma y^N $ with the skew-symmetric tensor $W_{AB}$. One then rewrites the resulting expression in terms of the Maurer--Cartan currents~\eqref{Eq:MaurerVielbeins} as
\begin{gather*}
T^\tau_{\;\;\,\sigma} = \langle\!\langle l^{-1} \partial_\tau l, \Oc^{\tau\tau} l^{-1}\partial_\sigma l \rangle\!\rangle_\da + \langle\!\langle l^{-1} \partial_\sigma l, \Oc^{\tau\sigma} l^{-1}\partial_\sigma l \rangle\!\rangle_\da.
\end{gather*}
A similar computation leads to the announced expression of $T^\tau_{\;\;\,\sigma}$.
\end{proof}

Let us now turn to the proof of Proposition~\ref{prop: EM tensor}. In~the above notations, the action~\eqref{Eq:ActionE} corresponds to the choice of operators
\begin{gather}\label{Eq:OE}
\Oc^{\tau\tau} = \Ec^{-1} \Pc_l, \qquad
\Oc^{\sigma\sigma} = \Ec \overline{\Pc}_l \qquad \text{and} \qquad
\Oc^{\tau\sigma} = \overline{\Pc}_l - \null^t \Pc_l.
\end{gather}
Applying Lemma~\ref{PropEMT}, we then get an explicit expression of the components $T^\mu_{\;\;\,\nu}$ of the energy-momentum tensor. For~instance, we have
\begin{gather*}
T^\tau_{\;\;\,\tau} = -T^\sigma_{\;\;\,\sigma} = \frac{1}{2} \langle\!\langle l^{-1} \partial_\tau l, \Ec^{-1} \Pc_l\, l^{-1}\partial_\tau l \rangle\!\rangle_\da + \frac{1}{2} \langle\!\langle l^{-1} \partial_\sigma l, \Ec \overline{\Pc}_l\, l^{-1}\partial_\sigma l \rangle\!\rangle_\da.
\end{gather*}
On the other hand, from the definition~\eqref{Eq:J} of $\J_\sigma$ and the symmetry of $\Ec$, we get
\begin{gather*}
\frac{1}{2} \langle\!\langle \J_\sigma, \Ec \J_\sigma \rangle\!\rangle_\da = \frac{1}{2} \langle\!\langle \Pc_l\, l^{-1} \partial_\tau l, \Ec^{-1} \Pc_l\, l^{-1}\partial_\tau l \rangle\!\rangle_\da + \frac{1}{2} \langle\!\langle \overline{\Pc}_l\, l^{-1} \partial_\sigma l,\Ec \overline{\Pc}_l \, l^{-1} \partial_\sigma l \rangle\!\rangle_\da
\\ \hphantom{\frac{1}{2} \langle\!\langle \J_\sigma, \Ec \J_\sigma \rangle\!\rangle_\da =}
{} + \langle\!\langle \Pc_l\, l^{-1} \partial_\tau l, \Pb_l \, l^{-1} \partial_\sigma l \rangle\!\rangle_\da.
\end{gather*}
The last line vanishes due to part $(iii)$ of Proposition~\ref{PropE}. Moreover, from part $(i)$, we have
\begin{gather*}
\tp\Pc_l \Ec^{-1} \Pc_l = \Ec^{-1} \Pc_l^2 = \Ec^{-1} \Pc_l,
\end{gather*}
where we have used the fact that $\Pc_l$ is a projector and thus that $\Pc_l^2=\Pc_l$. A similar computation yields $\tp \overline{\Pc}_l\Ec \overline{\Pc}_l=\Ec \overline{\Pc}_l$. Thus, we get
\begin{gather*}
\frac{1}{2} \langle\!\langle \J_\sigma, \Ec \J_\sigma \rangle\!\rangle_\da = T^\tau_{\;\;\,\tau} = -T^\sigma_{\;\;\,\sigma}.
\end{gather*}
Let us now turn our attention to $T^\tau_{\;\;\,\sigma}$. Applying Lemma~\ref{PropEMT} with the operators~\eqref{Eq:OE}, we get
\begin{gather*}
T^\tau_{\;\;\,\sigma} = \langle\!\langle l^{-1} \partial_\tau l, \Ec^{-1} \Pc_l\, l^{-1}\partial_\sigma l \rangle\!\rangle_\da + \langle\!\langle l^{-1} \partial_\sigma l, \big(\overline{\Pc}_l - \Pc_l\big) l^{-1}\partial_\sigma l \rangle\!\rangle_\da,
\end{gather*}
where in particular we transposed the operator $\tp\Pc_l$ in the last term. On the other hand, it follows from~\eqref{Eq:J} that
\begin{gather*}
\frac{1}{2} \langle\!\langle \J_\sigma, \J_\sigma \rangle\!\rangle_\da = \frac{1}{2} \langle\!\langle \Ec^{-1} \Pc_l\, l^{-1} \partial_\tau l, \Ec^{-1} \Pc_l\, l^{-1}\partial_\tau l \rangle\!\rangle_\da + \frac{1}{2} \langle\!\langle \overline{\Pc}_l\, l^{-1} \partial_\sigma l, \overline{\Pc}_l \, l^{-1} \partial_\sigma l \rangle\!\rangle_\da
\\ \hphantom{\frac{1}{2} \langle\!\langle \J_\sigma, \J_\sigma \rangle\!\rangle_\da =}
{} + \langle\!\langle \Ec^{-1} \Pc_l \, l^{-1} \partial_\tau l, \overline{\Pc}_l\, l^{-1} \partial_\sigma l \rangle\!\rangle_\da.
\end{gather*}
Note from part $(i)$ of Proposition~\ref{PropE} that $\Ec^{-1} \Pc_l$ is symmetric. Using also part $(v)$ we get
\begin{gather*}
\tp\bigl( \Ec^{-1} \Pc_l \bigr) \Ec^{-1} \Pc_l = \Ec^{-1} \Pc_l \,\tp\bigl( \Ec^{-1} \Pc_l \bigr) = \Ec^{-1} \Pc_l \tp\Pc_l \Ec^{-1} = 0,
\end{gather*}
so that the first term in $\frac{1}{2} \langle\!\langle \J_\sigma, \J_\sigma \rangle\!\rangle_\da$ vanishes. Using the same properties, we also get
\begin{gather*}
\tp\bigl( \Ec^{-1} \Pc_l \bigr) \Pb_l = \Ec^{-1} \Pc_l \Pb_l = \Ec^{-1} \big(\Pc_l - \Pc_l \tp\Pc_l\big) = \Ec^{-1} \Pc_l.
\end{gather*}
Finally, using $\tp\Pb_l\Pb_l = \Pb_l-\Pc_l$ (see part $(iv)$ of Proposition~\ref{PropE}), we find that
\begin{gather*}
\frac{1}{2} \langle\!\langle \J_\sigma, \J_\sigma \rangle\!\rangle_\da = T^\tau_{\;\;\,\sigma}.
\end{gather*}
A similar computation yields $- \frac{1}{2} \langle\!\langle \J_\sigma, \Ec^2 \J_\sigma \rangle\!\rangle_\da = T^\sigma_{\;\;\,\tau}$.

\subsubsection*{Acknowledgements}

S.L.~would like to thank B.~Hoare for useful discussions.
The work of S.L.~is funded by the Deutsche Forschungsgemeinschaft (DFG, German Research Foundation) under Germany's Excel\-lence Strategy~-- EXC 2121 ``Quantum Universe''~-- 390833306.

\pdfbookmark[1]{References}{ref}
\LastPageEnding

\end{document}